\documentclass[11pt]{article}

\usepackage{sn-preamble} 
\usepackage{tikz}
\usepackage{verbatim}
\usepackage{cancel}
\usepackage{caption}
\usepackage{thm-restate}

\usepackage{microtype}

\usetikzlibrary{decorations.pathreplacing,angles,quotes}
\usetikzlibrary{shapes.geometric}
\usetikzlibrary{patterns}

\usepackage{mathtools} 
\usepackage{bm} 
\usepackage{mdframed}
\makeatletter
\newcommand{\setword}[2]{%
  \phantomsection
  #1\def\@currentlabel{\unexpanded{#1}}\label{#2}%
}
\makeatother

\newcommand{\ignore}[1]{}

\newcommand{\reldist}{\mathsf{rel}\text{-}\mathsf{dist}}
\newcommand{\ham}{\mathsf{ham}\text{-}\mathsf{dist}}

\newcommand{\Leb}{\mathsf{Leb}}

\newcommand{\cmon}{\mathcal{C}_{\textsf{monotone}}}

\newcommand{\Dyes}{\calD_{\mathrm{yes}}}
\newcommand{\Dno}{\calD_{\mathrm{no}}}

\newcommand{\yes}{\mathrm{yes}}
\newcommand{\no}{\mathrm{no}}

\newcommand{\SAMP}{\mathrm{Samp}}
\newcommand{\MQ}{\mathrm{MQ}}

\begin{document}

\title{Relative-error monotonicity testing}

\author{
Xi Chen \thanks{Columbia University.} \and 
Anindya De \thanks{University of Pennsylvania.} \and 
Yizhi Huang \thanks{Columbia University.} \and 
Yuhao Li \thanks{Columbia University.} \and 
Shivam Nadimpalli \thanks{Columbia University.} \and 
Rocco A. Servedio \thanks{Columbia University.} \and
Tianqi Yang \thanks{Columbia University.} 
\vspace{0.5em}
}

\date{
}

\pagenumbering{gobble}

\maketitle  

\begin{abstract}
The standard model of Boolean function property testing is not well suited for testing \emph{sparse} functions which have few satisfying assignments, since every such function is close (in the usual Hamming distance metric) to the constant-0 function.  In this work we propose and investigate a new model for property testing of Boolean functions, called \emph{relative-error testing}, which provides a natural framework for testing sparse functions.  

This new model defines the distance between two functions $f, g: \zo^n \to \zo$  to be $$\reldist(f,g) := { \frac{|f^{-1}(1) \hspace{0.06cm} \triangle \hspace{0.06cm} g^{-1}(1)|} {|f^{-1}(1)|}}.$$ This is a more demanding distance measure than the usual Hamming distance ${ {|f^{-1}(1) \hspace{0.06cm}\triangle \hspace{0.06cm} g^{-1}(1)|}/{2^n}}$ when $|f^{-1}(1)| \ll 2^n$; to compensate for this, algorithms in the new model have access both to a black-box oracle for the function $f$ being tested and to a source of independent uniform satisfying assignments of $f$.

In this paper we first give a few general results about the relative-error testing model; then, as our main technical contribution, we give a detailed study of algorithms and lower bounds for relative-error testing of \emph{monotone} Boolean functions. We give upper and lower bounds which are parameterized by $N=|f^{-1}(1)|$, the sparsity of the function $f$ being tested. 
Our results show that there are interesting differences between relative-error monotonicity testing of sparse Boolean functions, and monotonicity testing in the standard model. These results motivate further study of the testability of Boolean function properties in the relative-error model.
\end{abstract}

\newpage
\tableofcontents

\newpage
\setcounter{page}{1}
\pagenumbering{arabic}

\section{Introduction}
\label{sec:intro}

Over the past several decades, property testing of Boolean functions has blossomed into a rich field with close connections to a number of important topics in theoretical computer science including sublinear algorithms, learning theory, and the analysis of Boolean functions~\cite{BLR93, Rubinfeld:06survey, GGR98, Fischer:01, odonnell-book, Ron:08testlearn}.  The touchstone problems in Boolean function property testing, such as monotonicity testing and junta testing, have been intensively studied in a range of different models beyond the ``standard'' Boolean function property testing model which is described below; these include distribution-free testing, tolerant testing, active testing, and sample-based testing~\cite{HalevyKushilevitz:07,  parnas2006tolerant, BBBY12, KR00}. 

The present work makes two main contributions: At a conceptual level, we introduce and advocate a natural new model for property testing of Boolean functions extending the standard model, which we call \emph{relative-error property testing}.  At a technical level, we give a detailed study of the problem of \emph{monotonicity testing} in this new relative-error model.

\medskip

\noindent {\bf Motivation.}
In the standard model of Boolean function property testing, a testing algorithm gets black-box access to an arbitrary unknown function $f: \zo^n \to \zo$ and its goal is to tell whether $f$ has the property or has distance at least $\eps$ from every function satisfying the property, where the distance between two Boolean functions $f,g$ is taken to be the Hamming distance 
\[
\ham(f,g) := {\frac {|f^{-1}(1) \ \triangle \ g^{-1}(1)|}{2^n}},
\]
i.e.~the fraction of points in $\zo^n$ on which they disagree.  This model is closely analogous to the standard ``dense graph'' property testing model \cite{GGR98}, in which the testing algorithm may make black-box queries for entries of the unknown graph's adjacency matrix and the distance between two undirected $n$-node graphs is the fraction of all ${n \choose 2}$ possible edges that are present in one graph but not the other.

As is well known, the standard dense graph property testing model is not well suited to testing \emph{sparse} graphs, since under the Hamming distance measure mentioned above every sparse graph with $o(n^2)$ edges has distance $o(1)$ from the $n$-node empty graph with no edges.  Consequently, alternative models were developed, with different distance measures, for testing sparse graphs; these include  the bounded-degree graph model (see \cite{GoldreichRon02} and Chapter~9 in \cite{PropertyTestingICS}) and the general graph model (see \cite{ParnasRon02} and Chapter~10 in \cite{PropertyTestingICS}). 

What if we are interested in testing \emph{sparse} Boolean functions, i.e.~functions $f: \zo^n \to \zo$ that have $f^{-1}(1) \ll 2^n$?  Analogous to the situation for graphs that was described above, every sparse Boolean function is Hamming-distance-close to the constant-0 function, so the standard Boolean function property testing model is not well suited for testing sparse functions.  

\medskip

\noindent {\bf Relative-error property testing.}  We propose a new framework, which we call \emph{relative-error property testing}, which is well suited for testing sparse Boolean functions. The relative-error property testing model differs from the standard  model in the following ways:

\begin{flushleft}\begin{itemize}
\item 
First, we define the \emph{relative distance}\footnote{See the beginning of \Cref{sec:prelims} for a discussion of some of the basic properties of this definition.} from a function $f: \{0,1\}^n \to \{0,1\}$ to a function $g: \{0,1\}^n \to \{0,1\}$ to be 
\[
\reldist(f,g) := {\frac {|f^{-1}(1) \ \triangle \ g^{-1}(1)|}{|f^{-1}(1)|}}, \quad \text{i.e.}
\quad
\reldist(f,g) = \ham(f,g) \cdot {\frac {2^n}{|f^{-1}(1)|}}.
\]
Relative-error testing uses relative distance, rather than Hamming distance, to measure distances between functions.

\item Second, in the relative-error testing model, the testing algorithm has access to two different oracles:  (i) a \emph{black-box} (also called \emph{membership query}) oracle $\MQ(f)$ as in the standard model, which is queried with a point $x$ and returns $f(x) \in \{0,1\}$; and also (ii) a \emph{sample oracle} $\SAMP(f)$, which takes no input and, when queried, returns a uniform random $\bx \sim f^{-1}(1)$.  \end{itemize}
\end{flushleft}

\noindent {\bf Motivation and rationale for the relative-error testing model.}
The motivation for a testing model which can handle sparse functions is clear:  just as there are many interesting graphs which are sparse, there are many interesting Boolean functions which have relatively few satisfying assignments.  (This is especially evident when we view a Boolean function as a classification rule which outputs 1 on positive examples of some phenomenon which may be elusive; for example, generic inputs might correspond to generic candidate drugs or molecules, while positive examples correspond to candidates which have some rare but sought-after characteristic.)
The distance measure $\reldist(f,g)$ is natural to use when studying sparse Boolean functions, since it captures the difference between $f$ and $g$ ``at the scale'' of the sparse function $f$, whatever that scale may be. 

We observe that while the relative distance between two Boolean functions is not symmetric (i.e. $\reldist(f,g)$ is not necessarily equal to $\reldist(g,f)$), this does not pose a problem for us.  The reason is that we are interested in the setting where $\reldist(f,g)$ is small, and it is easily verified that if $\reldist(f,g) = \eps$ is at most (say) $0.99$, then $\reldist(g,f)$ is also $O(\eps)$; so in the regime of interest to us, relative distance is symmetric ``up to constant factors.''  We further observe that in the regime of our interest,  the definition of relative distance is quite robust:  as long as $\reldist(f,g)$ is at most a small constant, then replacing $|f^{-1}(1)|$ in the denominator by any of $|g^{-1}(1)|$, $|f^{1}(1)| + |g^{-1}(1)|$,  $\max\{|f^{-1}(1)|,|g^{-1}(1)|\}$ or  $\min\{|f^{-1}(1)|,|g^{-1}(1)|\}$ only changes the value of $\reldist(f,g)$ by a small constant factor.

The presence of a $\SAMP(f)$ oracle in the relative-distance model makes it possible to have non-trivial testing algorithms for sparse functions, since if only a black-box oracle were available then a huge number of queries could be required in order to find any input on which the unknown sparse function outputs 1.  (Indeed, it is difficult to imagine a reasonable testing scenario in which the testing algorithm does not have some kind of access to positive examples of the function being tested.)

\medskip

\noindent {\bf Relation to the standard model.}
Simple arguments which we give in \Cref{sec:general-results} show that if a property is efficiently relative-error testable then it is also efficiently testable in the standard model, and moreover that there are properties which are efficiently testable in the standard model but not in the relative-error model.  
Thus, relative-error testing is at least as hard as standard-model testing.
 A wide range of natural questions about the testability of well-studied Boolean function properties present themselves for this model; we discuss some of these questions at the end of this introduction.  The main technical results of this paper, though, deal with relative-error testing of \emph{monotonicity}, which is one of the most thoroughly studied Boolean function properties~\cite{GGLRS,FLNRRS,CS13a, CS13b,CST14, CDST15, KMS18}.  (Recall that a Boolean function $f: \zo^n \to \zo$ is \emph{monotone} if whenever $x,y \in \zo^n$ have $x_i \leq y_i$ for all $i$, i.e.~$x \preceq y$, it holds that $f(x) \leq f(y).$)

\subsection{Our results}

{\bf Comparison with standard-error model:} We start in \Cref{sec:general-results} by noting some basic points of comparison between the relative-error property testing model vis-a-vis the standard-error property testing~model for general properties of Boolean functions. First of all, it is not too hard to establish (\Cref{obs:relative-yields-standard}) that any property ${\cal C}$ which is testable in the relative-error model is also testable in the standard-error model. 
 In the other direction,  any algorithm to test a property ${\cal C}$ in the standard error model can be used to test ${\cal C}$ in the relative error model (see \Cref{obs:relative-from-standard}). However, the overhead of this simulation grows as $1/p$, where $p:=|f^{-1}(1)|/2^n$ is the density of the target function $f$. Thus the most interesting and challenging regime for relative error property testing is when the target function is sparse --- i.e., $p$ is vanishing as a function of $n$ ($p$ may even be as small as $2^{-\Theta(n)}$). 

\medskip
\noindent
 {\bf Upper and lower bounds for monotonicity testing:} We focus on 
  monotonicity testing in the relative-error model. 
Using the analysis of the ``edge tester'' \cite{GGLRS} for monotonicity testing
  in the standard model, it is easy to obtain a one-sided\footnote{Recall that a one-sided tester for a class of functions is one which must accept (with probability $1$) any function in the class. This is in contrast to making two-sided error, where an algorithm may reject a function in the class with small probability.}, non-adaptive\footnote{A non-adaptive algorithm is one in which the choice of its $i$-th query point does not depend on the responses received to queries $1,\ldots, i-1$.}
 algorithm 
  with $O(n/\eps)$ queries in the relative-error model (see \Cref{sec:warmup} as a warmup).
Our main algorithm as stated below,~on the other hand, makes $O(\log (N) /\eps)$ queries, 
 where $N:=|f^{-1}(1)|$ denotes the \emph{sparsity}~of~$f$ and~is \emph{not} given to the algorithm.
Note that one always has $\log (N)/\eps \le n/\eps$ as $N\le 2^n$ trivially but the
  former can be 
  asymptotically lower when $f$ is sparse  (e.g., $\smash{n^{1/3}/\eps}$ when $\smash{N=2^{n^{1/3}}}$). 
  
\begin{restatable}[Testing algorithm]{theorem}{maintheoremone}
\label{main:upperbound}
There is a one-sided adaptive algorithm which is an $\eps$-relative-error tester for monotonicity (i.e., it always returns ``monotone'' when the input function $f$ is monotone
and returns ``not monotone'' with probability at least $2/3$ when $f$
  has relative distance at least $\eps$ from monotone); with probability at least $1-\delta$, it makes no more than $O(\log(1/\delta)/\eps + \log(N)/\varepsilon)$ calls to $\SAMP(f)$ and $\MQ(f)$.
\end{restatable}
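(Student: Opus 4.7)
The plan is to repeatedly sample $\mathbf{x} \sim \SAMP(f)$ and, for each sample, use membership queries to search the upper set $\{\mathbf{y} : \mathbf{y} \succeq \mathbf{x}\}$ for any $\mathbf{y}$ with $f(\mathbf{y}) = 0$. Any such $\mathbf{y}$ together with $\mathbf{x}$ witnesses a monotonicity violation, from which a short binary search on the chain $[\mathbf{x}, \mathbf{y}]$ extracts an edge violation; the algorithm then rejects. Completeness (one-sidedness) is immediate: if $f$ is monotone, then $f(\mathbf{y}) = 1$ for every $\mathbf{y} \succeq \mathbf{x}$, so no violation is ever detected on a monotone input.

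The observation that yields the $\log N$ (rather than $n$) dependence is structural: for any monotone $f$ with $|f^{-1}(1)| = N$, every $\mathbf{x} \in f^{-1}(1)$ must have $|\mathbf{x}| \geq n - \log_2 N$, since $[\mathbf{x}, \mathbf{1}] \subseteq f^{-1}(1)$ and $|[\mathbf{x},\mathbf{1}]| = 2^{n - |\mathbf{x}|}$. Consequently, for a monotone $f$, the upper interval to be searched from any sampled $\mathbf{x}$ has Hamming diameter at most $\log_2 N$, and each trial's search (e.g., probing all immediate upper neighbors of $\mathbf{x}$, plus a binary search on any witnessed interval) can be carried out in $O(\log N)$ membership queries. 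Running $O(\log(1/\delta)/\eps)$ trials then yields the claimed query bound in the monotone case.

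For soundness the plan is to transfer the Hamming edge-violation bound of \cite{GGLRS} into the relative-error setting: if $f$ is $\eps$-far from monotone in $\reldist$, then since $f$ is $(\eps N / 2^n)$-far in Hamming distance, the number of edges $(\mathbf{u}, \mathbf{v})$ with $\mathbf{u} \prec \mathbf{v}$, $f(\mathbf{u}) = 1$, and $f(\mathbf{v}) = 0$ is at least $\Omega(\eps N)$. Writing $V(\mathbf{x})$ for the number of violating $0$-neighbors of $\mathbf{x}$, this gives $\E_{\mathbf{x} \sim \SAMP(f)}[V(\mathbf{x})] = \Omega(\eps)$, so the search above a typical sample detects a violation with reasonable probability. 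The main obstacle is the query accounting in the non-monotone case: since the Hamming-weight bound $|\mathbf{x}| \geq n - \log_2 N$ need not hold for samples from an $\eps$-far $f$, the raw per-trial cost is not automatically $O(\log N)$. The resolution is a careful amortized argument that charges queries against the violation-detection probability and halts as soon as a violation is witnessed; combined with a Chernoff/geometric-tail concentration bound contributing the additive $O(\log(1/\delta)/\eps)$ term, this yields total query count $O((\log N + \log(1/\delta))/\eps)$ with probability at least $1-\delta$, while still catching a violation with probability at least $2/3$ when $f$ is $\eps$-far.
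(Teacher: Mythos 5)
Your proposal captures the two core ideas that also drive the paper's proof: the structural observation that a monotone $f$ with $|f^{-1}(1)|=N$ has every satisfying assignment at Hamming weight at least $n-\log_2 N$, and the transfer of the \cite{GGLRS} edge-violation bound to the relative-error setting (via the identity $\ham(f,\cmon)=\reldist(f,\cmon)\cdot N/2^n$). However, the per-trial design you sketch does not deliver the claimed query bound, and the soundness step has a real gap. Probing \emph{all} immediate upper neighbors of each sample costs $\Theta(\log N)$ queries per trial in the monotone regime, and you then run $O(\log(1/\delta)/\eps)$ trials, so the cost is multiplicative, $O(\log N\cdot\log(1/\delta)/\eps)$, not the additive $O((\log N+\log(1/\delta))/\eps)$ stated in the theorem. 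Even setting $\delta$ aside, your soundness count is off: from $\E_{\bx\sim \SAMP(f)}[V(\bx)]=\Omega(\eps)$ you cannot conclude that a typical sample detects a violation, since $V(\bx)$ can concentrate on few vertices with $V(\bx)$ as large as $\Theta(\log N)$ (e.g.\ a minimal 1 of $f$ whose entire upward shadow is 0). Then $\Pr[V(\bx)\ge 1]$ may be only $\Omega(\eps/\log N)$, so $\Omega(\log N/\eps)$ trials are needed, and with $\Theta(\log N)$ queries per trial this is $\Omega(\log^2 N/\eps)$ — a $\log N$ factor worse than claimed.

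The paper's Algorithm~3 avoids both problems by a different, two-phase design that you would need to adopt (or reinvent). Phase~1 uses $O(\log(1/\delta)/\eps)$ samples to find the lowest-weight sample $z$ and sets $k=\|z\|_1$; it then checks a few uniform random points above $z$ — if $\|z\|_1<n-2\log_2 N$, there are at least $N^2$ points above $z$ but only $N$ satisfying assignments, so random probes above $z$ reveal a 0 and the algorithm rejects, while otherwise it certifies $n-k\le 2\log_2 N$. Phase~2 is then a per-trial-cheap edge tester: for each of $O((n-k)/\eps)=O(\log N/\eps)$ samples $\bx$ it makes exactly \emph{one} MQ call, to $\bx\oplus e_{\bi}$ for a uniformly random upward direction $\bi$. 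Because the $(\bx,\bi)$ pair is uniform over at most $N\cdot(n-k)$ possibilities and the pruned function has $\ge\eps N/4$ violating edges among high-weight points (this uses \Cref{lem:vio-edges-lb} applied to the truncation of $f$ below layer $k$, a step missing from your sketch), the per-trial detection probability is $\Omega(\eps/(n-k))=\Omega(\eps/\log N)$, so $O(\log N/\eps)$ single-query trials suffice. That one-query-per-trial design is exactly what makes the bound additive; the ``probe all neighbors plus amortize'' route you propose would need a genuinely new argument, and as stated it does not close the $\log N$ overhead or the $\E[V]$-vs-$\Pr[V\ge 1]$ gap.
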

  
We remark that while the algorithm described in \Cref{main:upperbound} is adaptive, 
  it can be made non-adaptive and still makes $O(\log (N)/\eps)$ queries (in the worst case)
  when an estimate of 
  $N$ is given as part of the input to the algorithm.
Our first main lower~bound shows that this is indeed tight:
  $\tilde{\Omega}(\log N)$ queries are needed for non-adaptive algorithms even when
  an estimate of $N$ is given. 
  
\begin{theorem}[Non-adaptive lower bound] \label{main:firstlowerbound}

For any constant $\alpha_0<1$, there exists a constant $\eps_0>0$ such that any two-sided, non-adaptive algorithm for testing whether a function $f$ with $|f^{-1}(1)|=\Theta(N)$ for some given parameter $\smash{N\le 2^{\alpha_0 n}}$ is monotone or has
  relative~distance~at least $\eps_0$ from monotone must make $\tilde{\Omega}(\log N)$ queries.
  \end{theorem}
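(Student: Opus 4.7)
The proof plan is to apply Yao's minimax principle: I would construct two distributions, $\Dyes$ supported on monotone functions and $\Dno$ supported on functions at relative distance at least $\eps_0$ from monotone, both producing functions with $|f^{-1}(1)| = \Theta(N)$. The goal is to show that for any deterministic non-adaptive algorithm making $q = \tilde{o}(\log N)$ total queries (combining $\MQ(f)$ and $\SAMP(f)$ calls), the total-variation distance between the algorithm's view under $\Dyes$ and $\Dno$ is $o(1)$; by Yao's principle this rules out any randomized non-adaptive tester with success probability at least $2/3$.

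The first reduction essentially eliminates the MQ oracle. In the sparse regime $N \leq 2^{\alpha_0 n}$ with $\alpha_0 < 1$, every $f$ in the support of either distribution has density $\Theta(N/2^n) = 2^{-\Omega(n)}$. Provided both distributions are designed to be permutation-invariant---i.e.\ the randomness includes a uniform random relabelling of coordinates---we have $\Pr_{f \sim \Dyes}[\MQ(f)(x)=1] = \Pr_{f \sim \Dno}[\MQ(f)(x)=1] = \Theta(N/2^n)$ for every fixed $x \in \zo^n$. A union bound over any $q = o(2^n/N)$ non-adaptive MQ query points then shows that with overwhelming probability all MQ answers equal $0$ under both distributions, so the MQ oracle contributes negligibly to distinguishing. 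The lower bound thus reduces to bounding the TV distance of the joint distribution of $s \leq q$ i.i.d.\ samples from $\SAMP(f)$.

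For the construction of the hard functions, I would embed a $k$-dimensional Talagrand-style instance into a randomly located ``top'' subcube of $\zo^n$, where $k := \lceil \log N \rceil$. Concretely, pick a uniformly random $T \subseteq [n]$ of size $n - k$ and consider $C_T := \{x \in \zo^n : x_i = 1 \text{ for all } i \in T\}$; define $f(x) := g(x_{[n] \setminus T}) \cdot \mathbf{1}[x \in C_T]$, where $g : \zo^k \to \zo$ has constant density $\Theta(1)$. Since $C_T$ is the top subcube with respect to the coordinates in $T$, one easily verifies that $f$ is monotone on $\zo^n$ iff $g$ is monotone on $\zo^k$, that $\reldist(f, \text{monotone}) = \reldist(g, \text{monotone})$, that $|f^{-1}(1)| = \Theta(N)$, and (because of the random $T$) that permutation-invariance holds, so the MQ reduction applies. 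Samples from $f^{-1}(1)$ correspond exactly to samples from $g^{-1}(1)$ together with a set of coordinates $T$ that is asymptotically revealed after only $O(\log\log N)$ samples.

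The main and most delicate step is establishing the sample-based TV bound for the reduced problem on $\zo^k$. For $g$, I would use a random Talagrand- or $\UTal$-style construction (as hinted by the paper's defined macros), pairing $\Dyes$ and $\Dno$ so that the joint distribution of any $s = \tilde{o}(k)$ samples from $g^{-1}(1)$ has TV distance $o(1)$ under the two distributions, while preserving that $\Dyes$ produces monotone $g$ and $\Dno$ produces $g$ at relative distance at least $\eps_0$ from monotone. Standard tools here are coupling and $\chi^2$-divergence bounds on the induced marginal sample distributions. The hardest combinatorial step, typical of Talagrand-based lower bounds, is to verify that $\Dno$ genuinely produces functions that are \emph{far} from monotone in the relative-distance sense rather than merely non-monotone; I would expect to handle this via an explicit matching / extremal argument on the planted violations in the $\UTal$-type construction.
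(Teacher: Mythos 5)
Your reduction that ``eliminates the MQ oracle'' is the crux of the plan, and it is fatally flawed.  You argue that because the functions are sparse, every \emph{fixed} query point $x$ has $\Pr[\MQ(f)(x)=1]=\Theta(N/2^n)$, so $q=o(2^n/N)$ MQ queries always return $0$ and contribute nothing.  But the MQ queries of a non-adaptive tester in this model are \emph{not} fixed: by \Cref{remark:samples-then-queries}, the black-box queries are chosen \emph{after} seeing the $\SAMP(f)$ responses, and any sensible tester (including the paper's $O(\log N)$ upper-bound algorithm, \ref{alg2}/\ref{alg3}) queries points that are Hamming-adjacent to sampled $1$-points of $f$.  Those query points lie in a neighborhood where $f$ has constant (not $N/2^n$) probability of being $1$, and comparing them against the samples is precisely where all the distinguishing power lives.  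Your argument would show that MQ is useless for \emph{any} tester and for any pair of sparse permutation-invariant distributions, which would contradict the existence of the $O(\log N)$-query algorithm; it would also reduce the lower bound to the SAMP-only setting, but \Cref{ap:mono-needs-both} shows that SAMP alone cannot relative-error test monotonicity even with $2^{\Theta(n)}$ samples, so there is no nontrivial SAMP-only problem left to analyze.  In short, your MQ reduction goes in the wrong direction.

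The paper's proof performs the opposite reduction: it works with \emph{two-layer} functions supported on Hamming levels $3n/4$ and $3n/4+1$ (so $|f^{-1}(1)|=\Theta\binom{n}{3n/4}$ automatically), and uses \Cref{claim:two-layer1} to simulate the \emph{sample} oracle with $O(q)$ black-box queries (a uniform point of weight $\ge 3n/4$ has constant probability of being a $1$-point).  This reduces to an MQ-only argument, where $\Dyes/\Dno$ are a Talagrand-DNF variant with $L=(4/3)^n$ terms of width $n$ planted on the two nontrivial layers, and the indistinguishability bound is established via a multiplicative ``outcome'' ratio argument (\Cref{non-adaptive-mono-bound}), not a $\chi^2$/TV bound on sample marginals.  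If you want to salvage your subcube-embedding idea, you must keep the MQ oracle throughout and carefully control what MQ queries near the revealed samples can learn; once you do, you are essentially forced into the two-layer style of analysis the paper uses.
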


Finally, we show that $\tilde{\Omega}((\log N)^{2/3})$ queries are needed for 
  adaptive algorithms:

\begin{theorem}[Adaptive lower bound] \label{main:adaptivelowerbound}
For any constant $\alpha_0<1$, there exists a constant $\eps_0>0$ such that any two-sided, adaptive algorithm for testing whether a function $f$ with $|f^{-1}(1)|=\Theta(N)$ for some given parameter $\smash{N\le 2^{\alpha_0 n}}$ is monotone or has
  relative~distance~at least $\eps_0$ from monotone must make $\tilde{\Omega}((\log N)^{2/3})$ queries.
\end{theorem}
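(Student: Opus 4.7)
The plan is to establish this adaptive lower bound via Yao's minimax principle: exhibit a pair of distributions $\Dyes$ supported on monotone functions $f:\zo^n \to \zo$ and $\Dno$ supported on functions with $\min_{g \in \cmon}\reldist(f,g) \geq \eps_0$, both producing functions with $|f^{-1}(1)| = \Theta(N)$ with high probability, such that no deterministic adaptive tester making $q = \tilde{o}((\log N)^{2/3})$ calls to $\MQ$ and $\SAMP$ can distinguish them with constant advantage. A natural construction embeds the Talagrand-type random DNF hard instances of Chen-Waingarten-Xie into $\zo^n$ by restricting the ``interesting'' behavior to a sub-cube of dimension $k \approx \log N$ (or an appropriate middle layer of such a sub-cube), padded with fixed coordinates so that the 1-set of $f$ is naturally contained in a ``scaffold'' region of size $\Theta(N)$. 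The yes and no distributions should agree on the bulk of the 1-set and differ only via planted anti-monotone bumps of total size $o(N/q)$, so that in a $q$-query transcript the $\SAMP$ draws are essentially uninformative.

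The proof would then proceed in three stages. First, I would verify the measure-theoretic properties of the construction: that $|f^{-1}(1)| = \Theta(N)$ holds whp under both distributions, and that a typical $f \sim \Dno$ has relative distance $\Omega(1)$ from every monotone $g$. The second property is strictly stronger than its standard-model Hamming-distance analogue because the denominator in $\reldist$ is $\Theta(N)$ rather than $2^n$; it should follow from a counting argument showing that any monotone correction of $f$ must flip $\Omega(N)$ many 1-inputs because of the planted anti-monotone edges. Second, I would reduce away the $\SAMP$ oracle by coupling: arrange the construction so that a single sample from $f^{-1}(1)$ has nearly identical marginals under $\Dyes$ and $\Dno$. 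Concretely, by placing all violating bumps inside an $o(N/q)$-sized ``interior'' region, a union bound over $q$ calls to $\SAMP$ shows that with probability $1-o(1)$ no sample intersects the hidden structure, so the algorithm is effectively reduced to one making pure $\MQ$ queries augmented by $q$ uniform samples from a fixed ``core'' subset of $\zo^n$.

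Third, I would adapt the Chen-Waingarten-Xie rank-based adaptive-to-non-adaptive reduction: show that any $q$-query adaptive decision tree over $\MQ$ in this setting can be simulated by a non-adaptive algorithm of cost $O(q^{3/2} \cdot \mathrm{polylog}(N))$, so that combining with the non-adaptive lower bound of \Cref{main:firstlowerbound} yields $q^{3/2} \gtrsim \log N$, i.e., $q = \tilde{\Omega}((\log N)^{2/3})$. The core of this step is a layerwise argument: partition the cube into $O(\sqrt{q})$ slabs, use Markov's inequality to identify a slab in which the adaptive tree makes at most $\sqrt{q}$ queries on average, and restrict the $\Dyes/\Dno$ construction to that slab, where all queries become essentially non-adaptive.

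The main obstacle I anticipate is the second stage: the $\SAMP$ oracle gives the tester essentially free access to a random subset of $f^{-1}(1)$, a luxury that has no analogue in standard-model monotonicity lower bounds. The construction must thread a delicate needle — the no-instances need enough anti-monotone pairs to have relative distance $\Omega(1)$ from monotone, yet these violations must simultaneously occupy only an $o(1/q)$ fraction of the 1-set so that $\SAMP$ does not reveal them. Balancing these competing demands likely forces a hierarchical or multi-scale construction, in which the violations live on a lower-density ``floor'' of the scaffold while the majority of the 1-mass lives on a uniform, monotone-looking ``ceiling'' common to both $\Dyes$ and $\Dno$; verifying that such a construction is simultaneously $\eps_0$-far in relative distance and indistinguishable under $(\SAMP, \MQ)$ access is where most of the technical effort will lie.
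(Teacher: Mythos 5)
Your high-level skeleton (Yao's principle, yes/no distributions built from Talagrand-style random DNFs embedded in a $\log N$-dimensional subcube) matches the paper, but two of your three stages go off in directions that either fail outright or rely on tools that do not exist, and the paper's actual proof is quite different on both points.

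\textbf{The $\SAMP$ oracle.} You propose to neutralize $\SAMP$ by placing all anti-monotone violations inside a region of size $o(N/q)$ so that $q$ samples miss it with high probability, and you yourself flag the resulting tension: the violations must be an $\Omega(1)$ fraction of $f^{-1}(1)$ to give constant relative distance, but an $o(1/q)$ fraction to hide from $\SAMP$. This tension is not a delicate needle to thread — it is a genuine contradiction. By the standard matching argument (\Cref{lem:vio-edges-lb}), constant relative distance from monotone forces $\Omega(N)$ disjoint violating pairs, each with an endpoint in $f^{-1}(1)$, so the ``interesting'' region is necessarily an $\Omega(1)$ fraction of the 1-set; no hierarchical construction can make it $o(1/q)$. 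The paper sidesteps the issue entirely via a different observation: all functions used are \emph{two-layer functions}, nontrivial only on Hamming-weight layers $3n/4$ and $3n/4+1$, with everything above set to $1$ and everything below set to $0$. For such functions a \emph{constant} fraction of all points of weight $\geq 3n/4$ are satisfying assignments, so a call to $\SAMP(f)$ can be simulated at $O(1)$ overhead by $\MQ$ queries on uniform random high-weight points (\Cref{claim:two-layer1}). This removes $\SAMP$ from the picture without needing the violations to be rare; indeed in the paper's construction the violations occupy a constant fraction of the 1-set, and $\SAMP$ hits the nontrivial layers with constant probability — it just does not help.

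\textbf{The adaptive analysis.} Your third stage invokes ``the Chen-Waingarten-Xie rank-based adaptive-to-non-adaptive reduction'' with overhead $O(q^{3/2}\,\mathrm{polylog})$, and then combines with the non-adaptive $\tilde\Omega(\log N)$ bound. No such black-box reduction exists in \cite{CWX17stoc} or elsewhere that I am aware of, and a generic adaptive-to-non-adaptive simulation with polynomial overhead for monotonicity testing would be a major result in its own right. Your ``slab'' sketch does not produce non-adaptivity: knowing that some slab receives few queries on average does not let you commit all queries in advance. The paper instead proves the adaptive bound directly, by a separate decision-tree analysis (\Cref{lem:kaka}) over a \emph{depth-3} Talagrand variant with both terms $T_i$ and clauses $C_{i,j}$; the adaptive tree's transcript is summarized by an ``outcome'' $(Q;P;R;\rho)$ tracking which queries hit which terms and clauses, and the $q^{3/2}$ versus $\log N$ tradeoff emerges from the calculation $\sum_{i}\sum_{j} \min\{|P_{i,j}|^2,|P_i|\}\log n \lesssim q^{3/2}\log n$ with $\sum_i |P_i|\le 2q$ (proof of \Cref{lem:lolo}), together with a regularity argument (\Cref{finalclaim1},~\Cref{finalclaim2}) ensuring typical outcomes are good. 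This is a direct argument, not a reduction, and the $3/2$ exponent comes from the two-level (term/clause) structure of the hard functions, not from a generic simulation overhead.

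\textbf{Summary.} Stage 1 of your plan (construction, verification of sparsity and relative distance) is essentially right in spirit — the paper too proves $\reldist(\ff,\cmon)=\Omega(1)$ for $\ff\sim\Dno$ by counting disjoint violating pairs (\Cref{lem:dist2}). But Stages 2 and 3 need to be replaced: Stage 2 by the two-layer / $\SAMP$-simulation trick, and Stage 3 by a direct adaptive decision-tree analysis in the style of CWX against a depth-3 Talagrand construction.
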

  
\Cref{main:firstlowerbound} and \Cref{main:adaptivelowerbound}
  also imply that, as functions of $n$,
  $\tilde{\Omega}(n)$ and $\tilde{\Omega}(n^{2/3})$ queries 
  are needed for non-adaptive and adaptive relative-error monotonicity testing algorithms,
  respectively.  
  
Summarizing our results, we see that there are both a correspondence and a significant point of difference between the state-of-the-art for standard-model monotonicity testing upper and lower bounds \cite{CWX17stoc,KMS18} and the upper and lower bounds for relative-error monotonicity testing that we establish. (We fix the error parameter $\eps$ to be a constant, for simplicity, in the discussion below.)
In both cases there is a 3/2-factor-gap between the exponent of the best known adaptive lower bound and the best known algorithm, but the source of this gap is intriguingly different between the two cases. In the standard model, the best known algorithm is the sophisticated ``path tester'' of Khot et al.~\cite{KMS18} which makes $\smash{\tilde{O}(\sqrt{n})}$ queries, while the strongest lower bound (for general testers, i.e.~adaptive testers with two-sided error) is $\tilde{\Omega}(n^{1/3})$, due to Chen et al.~\cite{CWX17stoc}.  In contrast, in the relative-error setting the best algorithm we know of is a variant of the simple ``edge tester'' (given in \Cref{sec:upper-bound}), which achieves an $O(\log(N))$ complexity.  But in the relative-error setting, for a wide range of values of $N=|f^{-1}(1)|$, it is possible to prove a \emph{stronger} $\tilde{\Omega}((\log N)^{2/3})$ lower bound than in the standard model.  So the gap in the exponent is again a factor of 3/2; but in the relative-error setting the state-of-the-art algorithm is a simple one (in contrast with the sophisticated algorithm and analysis, based on isoperimetry, from \cite{KMS18}), and a stronger lower bound is achievable in the relative-error model than in the standard model.

\subsection{Technical Overview}  

We now discuss techniques underlying our upper and lower bounds for relative-error monotonicity testing, with the goal of giving some intuition for how the bounds are proved.

\paragraph{Upper bounds.} We begin with the relative-error monotonicity testing algorithms. Our algorithm is based on the ``edge tester'' for the standard monotonicity testing problem; the high-level idea is to look for an explicit \emph{edge} of the Boolean hypercube that witnesses a violation of monotonicity.

We begin by recalling the standard edge tester of \cite{GGLRS}. It simply repeats the following test $O(n/\eps)$ times: sample an $\bx$ uniformly at random from $\{0,1\}^n$, sample a $\by$ ``immediately above'' $\bx$ in the Boolean hypercube\footnote{Formally, $\by$ is a point with $\bx \prec \by$ and $\|\by\|_1 = \|\bx\|_1 + 1$, where $\|x\|_1$ denotes the Hamming weight of $x\in \{0,1\}^n$.}, and reject if $f(\bx) > f(\by)$. Note that such a rejection only happens if $(\bx,\by)$ violates monotonicity, so the algorithm will never reject a monotone function $f$. To show that the algorithm will reject any $f$ that is far from monotone with high probability, \cite{GGLRS} showed that if at least $\eps 2^n$ points need to be changed to make $f$ monotone, then $f$ must have at least $\eps 2^{n-1}$ such violating edges. Since there are $n 2^{n-1}$ edges in total in the Boolean hypercube, the probability that the edge tester hits such a violating edge is at least $\eps/n$.

We first describe a relative-error testing algorithm that is assumed to know the sparsity $N$ (i.e.~the sparsity $N$ is given to it as an input). This algorithm first samples an $\bx \in f^{-1}(1)$ using the sampling oracle $\SAMP(f)$, then samples a point $\by$ that is immediately above $\bx$, and rejects if $f(\by) = 0$. One can see that the only difference between this algorithm and the classical edge tester is that instead of sampling $\bx$ uniformly from all of $\{0,1\}^n$, we sample $\bx$ uniformly from $f^{-1}(1)$. Our analysis shows that the algorithm will find a violation with probability at least $\Omega(\eps/(\log N))$ when $f$ is  $\eps$-relative-error-far from monotone.

The key observation for the analysis is the following: if a monotone function $f$ has sparsity $N$, then any point $x \in f^{-1}(1)$ must have $\|x\|_1 \ge n - \log N$, since every point above $x$ must also be in $f^{-1}(1)$. For each $x$ having $\|x\|_1 \ge n - \log N$, there will be at most $\log N$ many possible~$y$'s ``immediately above'' $x$, and hence our relative-error testing algorithm is sampling from at most $N \log N$ possible $(x,y)$ pairs. Applying the results of \cite{GGLRS}, we know that when $f$ is $\eps$-relative-error-far from monotone, $f$ has at least $\eps N/2$ many violating $(x,y)$ edges. Therefore, the probability that our algorithm will find a violation is at least $\eps/(2\log N)$.

Now we consider the more challenging situation that the algorithm does not know the sparsity $N$. 
Our approach is to first obtain (an estimate of) the minimum Hamming weight of all points in $f^{-1}(1)$ by drawing a few samples from the sampling oracle, and then run the algorithm sketched above. In more detail, suppose that $x$ is the sampled point with minimum Hamming weight, which is $n-k$; our algorithm samples a few uniform random points from all points that are above $x$. If any of them have $f=0$ then the algorithm rejects, and otherwise it uses $k$ as a proxy for $\log N$ in the algorithm sketched above. To see why this works, it suffices to consider the case that $f$~is  $\eps$-relative-error-far from monotone.  Observe first that if any sampled $x \in f^{-1}(1)$ has weight $\|x\|_1 < n - 2 \log N$, then most points above $x$ cannot be in $f^{-1}(1)$ (since there are $N^2$ points above $x$ and only $N$ satisfying assignments in total), so sampling a few uniform random points above $x$ will reveal a violation of monotonicity.  So we may assume that the sampled point $x$ with minimum Hamming weight has Hamming weight at least $n - 2 \log N$.  Now our earlier analysis shows that the earlier algorithm will indeed reject any $f$ that has relative error at least $\eps$ for monotonicity.

\paragraph{Lower bounds.}
We consider the case when the relative distance parameter $\eps$ is a constant.
The upper bounds show that even when $N$ is much smaller than $2^n$, 
   there are algorithms for relative error testing that make 
   only $O(\log N)$ queries.
We describe some of the key new ideas behind our 
  constructions for the $\tilde{\Omega}(\log N)$ lower bound for 
  non-adaptive algorithms (\Cref{main:firstlowerbound}, \Cref{sec:two-sided-non-adaptive-lb})
  and the $\tilde{\Omega}((\log N)^{2/3})$ lower bound for adaptive 
  algorithms (\Cref{main:adaptivelowerbound}, \Cref{sec:two-sided-adaptive-lb}).

Intuitively, to make monotonicity testing as hard as possible, we would like to use functions that have as little poset structure of the cube as possible. 
(For example, all monotonicity lower bounds in the standard model use
  functions that are only nontrivial in the \emph{middle layers}, i.e., points
  with Hamming weight $(n/2)\pm O(\sqrt{n})$, with all points above
  the middle layers set to $1$ and all points below set to $0$ by default.
This way only the poset structure in the middle $O(\sqrt{n})$ layers is relevant.)
Towards this end, our lower bound constructions use functions which are only nontrivial on two adjacent layers of the Boolean hypercube, we call them \emph{two-layer} functions
  (see \Cref{sec:two-layer}).
More precisely, a function $f$ is a two-layer function if it is only nontrivial
  on points in layers $3n/4$ and $3n/4+1$: every point with weight $<3n/4$ is
  set to $0$ and every point with weight $>3n/4+1$ is set to $1$.
(We use the constant $3/4$ just to make the presentation more concrete;
  it can be replaced by any constant strictly between $1/2$ and $1$).

In addition to the much simplified poset structure (which we discuss in
  the paragraph about our construction below), two advantages of using two-layer functions follow from simple calculations:
(1) For $f$ to have $\eps$-relative error from monotonicity for some constant $\eps$,
  it suffices to show that a constant fraction of points in the two layers
  $3n/4$ and $3n/4+1$ must be changed to make the function monotone;
(2) The use of two-layer functions helps us reduce the analysis of general relative-error testing algorithms (which can use both $\MQ$\ and $\SAMP$\ oracles) to algorithms that use $\MQ$\ only (see \Cref{claim:two-layer1}).

 So the main technical challenge is to establish strong lower bounds for two-layer functions.  Recall that for monotonicity testing in the standard model, state of the art lower bounds are obtained using the Talagrand DNF \cite{BB15,CWX17stoc} (a random $2^{\sqrt{n}}$-term DNF in which each term has $\sqrt{n}$ variables sampled uniformly at random) and a depth-3 extension of the Talagrand DNF \cite{CWX17stoc}.
 The $\sqrt{n}$ (in both the size of each term and the exponent of the number 
  of terms) comes from the fact we mentioned early about the middle $O(\sqrt{n})$ 
  layers: if $\sqrt{n}$ were replaced by a larger polynomial $n^{0.5+c}$, then
  most layers in the middle would become trivial (points in layers above $(n/2)+O(n^{0.5-c})$ would be most likely set to $1$, and points in layers below
  $(n/2)-O(n^{0.5-c})$ would be most likely set to $0$).

In contrast, since we only focus on two-layer functions, the obstacle 
  described above is no longer there, and it turns out that 
  we can use a novel variant of the Talagrand DNF with $2^{\Theta(n)}$
  many terms each of size $\Theta(n)$.
This is the construction that leads to our non-adaptive $\tilde{\Omega}(\log N)$
  lower bound for relative-error monotonicity testing.
For the $\tilde{\Omega}((\log N)^{2/3})$ adaptive lower bound, 
  our construction uses 
  a corresponding variant of the depth-3 extension of Talagrand DNF.
Using two-layer functions also necessitates various other technical modifications of the \cite{BB15,CWX17stoc} constructions, which are detailed in \Cref{sec:two-sided-non-adaptive-lb} and \Cref{sec:two-sided-adaptive-lb}.

\subsection{Related work}

Several earlier works have considered property testing models which are similar to our model of relative-error testing of Boolean functions $f: \zo^n \to \zo$. One such work is
an early paper of Rademacher and Vempala \cite{RV04}, who considered essentially a continuous-domain analogue of our framework. \cite{RV04} studied the problem of testing whether an unknown set $S \subset \R^n$ is convex versus $\eps$-far from convex, where a finite-volume set $S \subseteq \R^n$ is said to be $\eps$-far from convex if $\Leb(S \ \triangle \ C) \geq \eps \cdot \Leb(S)$ for every convex set $C$, where $\Leb(\cdot)$ denotes the Lebesgue volume.  The query model considered in that work, like our query model, is that the algorithm is given access both to a black-box oracle $\MQ(S)$ for (the indicator function of) the unknown set $S$, as well as access to a $\SAMP(S)$ oracle which when queried outputs a uniform random element of $S$. 

 \cite{RV04} gave an $(n/\eps)^{O(n)}$-complexity algorithm for testing convexity in their model, and also gave an exponential lower bound for a specific convexity tester known as the ``line segment tester''; the \cite{RV04} lower bound was strengthened and extended to an exponential lower bound for a ``convex hull tester'' in recent work of Blais and Bommireddi \cite{BB20}.  Neither of the works \cite{RV04,BB20} studied general algorithms for relative-error property testing or considered relative-error property testing of functions over the Boolean domain $\zo^n$, which is the focus of the present work.
 
A similar model, in which the relative-distance measure is employed and the testing algorithm has both $MQ(f)$ and $\SAMP(f)$ oracles, was considered by Ron and Tsur \cite{RonTsur14}, who considered the problem of testing \emph{sparse images} for properties such as sparsity, convexity, monotonicity, and being a line. In their setting, the domain of interest was the two-dimensional discrete grid $[n] \times [n]$, rather than the high-dimensional Boolean hypercube $\zo^n$ that we consider, and hence the results and techniques are very different between their setting and ours.

We remark that several early works \cite{M00,KMS03} studied self-testing of \emph{real-valued} functions with a relative-error criterion, in the sense that error terms are allowed to be proportional to the particular function value being computed. This line of work, with its focus on real-valued functions, has a rather different flavor from our relative-error model of testing (sparse) Boolean functions.

\subsection{Future work}

Apart from the general observations about relative-error testing given in \Cref{sec:general-results}, this work focuses on monotonicity testing. We hope  that the relative-error model may open up a new perspective on Boolean function property testing more broadly, though, by giving a clean theoretical framework for studying testing of \emph{sparse} Boolean functions for all sorts of properties. Beyond monotonicity, there are many interesting and natural questions about relative-error testing of other specific well-studied Boolean function properties; a few of the questions which seem most interesting to us are listed below.

\begin{flushleft}\begin{itemize}
\item
{\bf Juntas:} As we will discuss after \Cref{obs:relative-from-standard}, there is a relative-error testing algorithm for the class of $k$-juntas that makes $O(k2^k/\eps)$ black-box queries and does not use the sample oracle.  This dependence on $k$ is exponentially worse than the state-of-the-art $O(k/\eps + k \log k)$-query $k$-junta testing algorithm \cite{Blaisstoc09} for the standard model, so it is natural to ask:  can the class of $k$-juntas be relative-error tested using $\poly(k,1/\eps)$ queries?  We note that in the distribution-free model juntas are testable with $\poly(k,1/\eps)$ queries \cite{CLSSX18,Bshouty19}, but it is not clear how a distribution-free testing algorithm can help with relative-error testing.

\item
{\bf LTFs:} As we will mention in \Cref{sec:general-results}, the class of LTFs can be $\eps$-relative-error tested using $\poly(n,1/\eps)$ samples and queries (by a reduction to the $\poly(n,1/\eps)$-sample relative-error learning algorithm of \cite{DDS15}). However, in the standard property testing model it is known that the class of all linear threshold functions (LTFs) over $\{0,1\}^n$ can be $\eps$-tested using only $\poly(1/\eps)$ queries, independent of the ambient dimension $n$ \cite{MORS:10}.    Are LTFs relative-error testable using $\poly(1/\eps)$ queries and samples, or can an $\omega_n(1)$ lower bound be shown?

\item  {\bf DNFs:}  A similar question can be asked for DNFs.  As we will mention in \Cref{sec:general-results}, the class of $s$-term DNFs can be $\eps$-relative-error tested using $\poly(n^{\log(s/\eps)})$ samples and queries (by a reduction to the $\poly(n^{\log(s/\eps)})$-sample relative-error learning algorithm of \cite{DDS15}). However, in the standard property testing model it is known that the class of all $s$-term DNFs over $\{0,1\}^n$ can be $\eps$-tested using only $\poly(s/\eps)$ queries, independent of $n$ \cite{DLM+:07,CGM11,Bshouty20}.
    Are $s$-term DNFs relative-error testable using $\poly(s/\eps)$ queries and samples, or can an $\omega_n(1)$ lower bound (or even an $n^{\omega_s(1)}$ lower bound) be shown?
\end{itemize}\end{flushleft}


\section{Preliminaries}
\label{sec:prelims}

We start by defining the distance metric that we will use:

\begin{definition}
Given two functions $f,g: \zo^n \to \zo$, the \emph{relative distance from $f$ to $g$} is defined as
\[
\reldist(f,g) := {\frac {|f^{-1}(1) \ \triangle \ g^{-1}(1)|}{|f^{-1}(1)|}}.
\]
For a class  ${\cal C}$ of functions from $\zo^n$ to $\zo$, the \emph{relative distance from $f$ to ${\cal C}$} is given by
\[
\reldist(f,{\cal C}) := \min_{g \in {\cal C}} \reldist(f,g).
\]
\end{definition}

Note that the relative distance between two Boolean functions is not symmetric (i.e. $\reldist(f,g)$ is not necessarily equal to $\reldist(g,f)$), this does not pose a problem for us: we will chiefly be interested in cases where $\reldist(f,g)$ is small, and it is easily verified that if $\reldist(f,g) = \eps$ is at most (say) $0.99$, then $\reldist(g,f) = O(\eps)$. We further mention that in the regime of our interest,  the definition of relative distance is fairly robust:  as long as $\reldist(f,g)$ is at most a small constant, then replacing $|f^{-1}(1)|$ in the denominator by any of $|g^{-1}(1)|$, $|f^{1}(1)| + |g^{-1}(1)|$, or $\max\{|f^{-1}(1)|,|g^{-1}(1)|\}$, only changes the value of $\reldist(f,g)$ by a constant factor.

Let $f: \zo^n \to \zo$ be the unknown Boolean function that is being tested. Recall that a \emph{sampling oracle for $f$} is an oracle which takes no inputs and, each time it is invoked, independently returns a uniform random element of $f^{-1}(1)$.\footnote{If $f$ is the  constant-0 function then the sampling oracle is assumed to return a special $\bot$ symbol; note that if this happens  it completely identifies the function $f$ as the constant-0 function. Hence in our subsequent discussion we implicitly assume that the function $f$ being tested is not the constant-0 function.}  A \emph{black-box oracle for $f$} takes as input an $n$-bit string $x$ and returns $f(x)$; we sometimes refer to a black-box oracle as simple an ``oracle for $f$.''

\begin{definition} [Relative-error property testing]
\label{def:rel-error-testing}
Let ${\cal C}$ be a class of functions from $\zo^n$ to $\zo$.  A \emph{$q$-query $\eps$-relative-error property testing algorithm for ${\cal C}$} is a randomized algorithm $A$ which is given access to a sampling oracle for $f$ and a black-box oracle for $f$, where $f$ may be any (unknown) function from $\zo^n$ to $\zo$. $A$ makes at most $q$ calls to the sampling oracle and at most $q$ calls to the black-box oracle, and has the following performance guarantee:
\begin{itemize}

\item If $f \in {\cal C}$ then with probability at least $2/3$ algorithm $A$ outputs ``accept'';

\item If $\reldist(f,{\cal C}) > \eps$ then with probability at least $2/3$ algorithm $A$ outputs ``reject.''

\end{itemize}
A \emph{one-sided} relative error property testing algorithm is one which outputs ``reject'' with probability 0 unless $f \notin {\cal C}$ (i.e.~if $f \in {\cal C}$ then it outputs ``accept'' with probability 1).
\end{definition}

\begin{remark} \label{remark:samples-then-queries}
We note that without loss of generality a relative-error testing algorithm can be assumed to make all $q$ of its calls to the sampling oracle before making any calls to the black-box oracle.  When we say that a relative-error testing algorithm is ``non-adaptive'', this means that after receiving the results of all of its calls to the sampling oracle, it makes one parallel round of queries to the black-box oracle (so these queries can depend on the result of the calls to the sampling oracle, but the choice of the $i$-th query point for the black-box oracle does not depend on the responses received to queries $1,\ldots, i-1$).
\end{remark}

\begin{remark} \label{remark:mono-needs-both}
Our main focus of interest in this work will be testing the class ${\cal C}$ of all monotone Boolean functions, which we denote $\cmon$.  It is easy to verify that any $2^{o(n)}$-query relative-error testing algorithm for this class must use both the sampling oracle and the black-box oracle (we give a simple argument establishing this in \Cref{ap:mono-needs-both}).  Thus our framework, which allows both the sampling oracle and the black-box oracle, is a ``minimal adequate model'' for relative-error monotonicity testing.
\end{remark}


\section{Some general results on relative-error testing} \label{sec:general-results} \label{sec:general-results}

\noindent {\bf Relative-error testing implies standard-model testing.}
We begin with the following straightforward fact, which shows that any property that is efficiently relative-error testable is also efficiently testable in the standard model:

\begin{fact} \label{obs:relative-yields-standard}
Let ${\cal C}$ be any class of functions from $\zo^n \to \zo$.  Suppose that there is a relative-error $\eps$-testing algorithm $T$ for ${\cal C}$ that makes $q(\eps,n)$ many black-box queries and calls to $\SAMP$.  Then there is a standard-model $\eps$-testing algorithm $T'$ for ${\cal C}$ which makes at most $O(1/\eps^2 + {\frac 1 \eps} q(\eps,n))$ calls to the black-box oracle.
\end{fact}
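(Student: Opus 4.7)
The plan is to build $T'$ by simulating $T$ using only black-box queries, with the key challenge being to simulate each $\SAMP(f)$ call via rejection sampling---which is efficient when $p := |f^{-1}(1)|/2^n$ is not too small. First, $T'$ uses $O(1/\eps^2)$ uniformly random queries to $\zo^n$ to estimate $p$ to additive error at most $\eps/8$ (with high probability, by Chernoff), obtaining $\hat{p}$; this accounts for the $1/\eps^2$ term. The algorithm then splits on the value of $\hat{p}$.

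In the dense case $\hat{p} \geq 3\eps/8$ (so $p \geq \eps/4$), $T'$ runs $T$: it answers each $\MQ(f)$ query directly and simulates each $\SAMP(f)$ query by drawing uniform random points from $\zo^n$ until one evaluates to $1$, returning that point (which, conditioned on evaluating to $1$, is uniform over $f^{-1}(1)$, as required). Standard geometric-tail/Chernoff bounds show that all (at most) $q$ $\SAMP$-simulations succeed within $O(q/\eps)$ total queries with high probability. The output of $T'$ is whatever $T$ outputs; correctness holds because $\reldist(f,g) \geq \ham(f,g)$ for every $f, g$ (since $|f^{-1}(1)| \leq 2^n$), so the standard-model reject promise $\ham(f,{\cal C}) \geq \eps$ implies the relative-error reject promise $\reldist(f,{\cal C}) \geq \eps$, which $T$ handles by assumption.

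In the sparse case $\hat{p} < 3\eps/8$ (so $p < \eps/2$ and $\ham(f, \mathbf{0}) < \eps/2$), rejection sampling is infeasible, so instead $T'$ outputs a decision that depends only on ${\cal C}$ and is precomputed at design time: accept if $\ham(\mathbf{0}, {\cal C}) < \eps/2$, reject otherwise. Correctness follows from the triangle inequality in both subcases. If $\ham(\mathbf{0}, {\cal C}) < \eps/2$, then $\ham(f, {\cal C}) \leq \ham(f, \mathbf{0}) + \ham(\mathbf{0}, {\cal C}) < \eps$, so the reject promise fails and accepting is valid regardless of whether $f \in {\cal C}$; if $\ham(\mathbf{0}, {\cal C}) \geq \eps/2$, then for every $g \in {\cal C}$ we have $\ham(f, g) \geq \ham(\mathbf{0}, g) - \ham(f, \mathbf{0}) > 0$, so $f \notin {\cal C}$, the accept promise does not apply, and rejecting is valid.

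The hard part is the sparse case: when $p$ is tiny we must abandon the simulation of $T$ and instead argue that $f$'s answer is essentially determined by where ${\cal C}$ sits relative to $\mathbf{0}$; the triangle inequality makes this clean because a Hamming-$\eps/2$-ball around $\mathbf{0}$ either intersects ${\cal C}$ (forcing $f$ to be close to ${\cal C}$) or is entirely disjoint from ${\cal C}$ (forcing $f \notin {\cal C}$). Summing the query costs of the two cases gives the claimed $O(1/\eps^2 + q(\eps,n)/\eps)$ bound, and the overall success probability can be made at least $2/3$ by standard amplification of the estimation and simulation steps.
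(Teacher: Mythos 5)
Your proposal is correct and takes essentially the same approach as the paper: estimate $p = |f^{-1}(1)|/2^n$ with $O(1/\eps^2)$ uniform queries; if $p$ is too small, output a fixed decision determined by $\ham(\mathbf{0}, {\cal C})$ (justified by the triangle inequality); otherwise simulate $T$, producing $\SAMP(f)$ responses by rejection sampling uniform points and invoking $\reldist \geq \ham$ to conclude that $T$'s reject guarantee applies. The paper batches all $\SAMP$ draws up front (via its Remark~\ref{remark:samples-then-queries}) rather than per-call, and uses slightly different threshold constants, but these are cosmetic differences.
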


\begin{proof}
Let $\alpha \in [0,1]$ be the value of $\ham(0,{\cal C})$, i.e.~the Hamming distance between the constant-0 function and the nearest function in ${\cal C}$; the algorithm and its analysis will make use of this quantity.  (Note that no queries to the unknown function $f$ are required for the algorithm to determine the value of $\alpha$.)

The algorithm $T'$ works as follows:

\begin{flushleft}\begin{enumerate}
\item It first makes $m_1 = O(1/\eps^2)$ calls to the black-box oracle for $f$ on uniform random inputs; let $\boldm'_1 \leq m_1$ be the number of those queries that have $f(x)=1.$ 
If $\boldm'_1/m_1 \leq \eps/3$, then $T'$ outputs ``accept'' if $\alpha \leq \eps/2$ and outputs ``reject'' otherwise.  If $\boldm'_1/m_1>\eps/3$, then

\item $T'$ draws $m_2 = {\frac C \eps} q(\eps,n)$ independent uniform random points from $\zo^n$ (for a suitable absolute constant $C$) and queries $f$ on each of them. If fewer than $q(\eps,n)$ of the $m_2$ points are satisfying assignments of $f$ then $T'$ halts and fails. Otherwise, 

\item $T'$ uses the first $q(\eps,n)$ of the satisfying assignments as the $\SAMP(f)$ responses that algorithm $T(\eps,n)$ requires. (Recall that by \Cref{remark:samples-then-queries}, we may suppose that $T$ makes all its calls to $\SAMP(f)$ before making any calls to $\MQ(f)$.) $T'$ continues to simulate $T(\eps,n)$ (making at most $q(\eps,n)$ calls to $\MQ(f)$, the same way $T$ does) and returns what $T'$ outputs.
\end{enumerate}\end{flushleft}

The query complexity of $T'$ is clearly as claimed. To establish correctness, we begin by observing that by a simple Chernoff bound, with probability $99/100$ the value of $\boldm_1'/m_1$ is within $\pm \eps/100$ of the true value of $|f^{-1}(1)|/2^n$ (we will use this repeatedly in the following arguments).

\paragraph{ Case 1:  $f \in {\cal C}$.}  Suppose first that $|f^{-1}(1)|/2^n \leq \eps/4.$ In this case, with probability at least 99/100 we have $\boldm'_1/m_1 \leq \eps/4+\eps/100=26\eps/100$. Since $\alpha \leq \ham(f,0) =  |f^{-1}(1)|/2^n \le \eps/4$, in this case $T'$ correctly outputs ``accept'' in Step~1.

So suppose next that $f \in {\cal C}$ and $\eps/4 < |f^{-1}(1)|/2^n  \leq \eps/2.$  In this case we have $\alpha \leq \ham(f,0) \leq \eps/2$ so $T'$ does not output ``reject'' in Step~1.  Even if $T'$ does not output ``accept'' in Step~1, since $|f^{-1}(1)|/2^n \geq \eps/4$, for a suitable choice of the constant $C$ we have that with probability at least $99/100$ the algorithm $T'$ does not fail in Step~2. So the relative-error $\eps$-testing algorithm $T(\eps,n)$ is executed in Step~3 and we invoke its guarantee, which is that it outputs ``accept'' with probability at least $2/3$ (since $f \in {\cal C}$). So in this case as well $T'$ correctly outputs ``accept'' with high probability.

The remaining subcase of Case~1 is that $f \in {\cal C}$ and $|f^{-1}(1)| / 2^n > \eps/2.$
In this case the probability that $T'$ outputs ``reject'' in Step~1 is at most $1/100$ (since this only happens if $\boldm'_1/m_1 \leq \eps/3$), so we may suppose that $T'$ reaches Step~2. In this case since $|f^{-1}(1)|/2^n \geq \eps/2$, for a suitable choice of the constant $C$ we have that with probability at least $99/100$ the algorithm $T'$ does not fail in Step~2. So as in the previous paragraph, the relative-error $\eps$-testing algorithm $T(\eps,n)$ is executed in Step~3, and in this case as well $T'$ correctly outputs ``accept'' with high probability (which can be amplified to any constant probability using standard techniques).

\paragraph{Case 2: $\ham(f,{\cal C}) > \eps$.}  
Suppose first that $|f^{-1}(1)|/2^n \leq E=\eps/4.$ Then in Step~1, with probability $99/100$ we have $\boldm'_1/m_1 \leq \eps/4 + \eps/100=26\eps/100$.  It cannot be the case that $\alpha \leq \eps/2$ (because if $\alpha \leq \eps/2$ then we would have $\ham(f,{\cal C})\leq  \ham(f,0) + \ham(0,{\cal C}) \leq 3\eps/4$, but in Case~2 we have $\ham(f,{\cal C}) > \eps$), so algorithm $T'_1$ correctly outputs ``reject'' with high probability in Step~1.

Suppose next that $\ham(f,{\cal C}) > \eps$ and $\eps/4< |f^{-1}(1)|/2^n \leq \eps/2$. If $\boldm'_1/m_1 \leq \eps/3$ then similar to the previous paragraph we cannot have $\alpha \leq \eps/2$ (because if $\alpha \leq \eps/2$ then we would have $\ham(f,{\cal C}) \leq \ham(f,0) + \ham(0,{\cal C}) \leq \eps$, but in Case~2 we have $\ham(f,{\cal C}) > \eps$), so $T_1$ correctly outputs ``reject'' in Step~1.  If $\boldm'_1/m_1 > \eps/3$ then since $|f^{-1}(1)|/2^n \geq \eps/4$, for a suitable choice of the constant $C$ we have that with probability at least $99/100$ the algorithm $T'$ does not fail in Step~2. So the relative-error $\eps$-testing algorithm $T(\eps,n)$ is executed in Step~3 and we invoke its guarantee, which is that it correctly outputs ``reject'' with probability at least $2/3$ (since $\reldist(f,{\cal C}) = \ham(f,{\cal C}) \cdot {\frac {2^n}{|f^{-1}(1)|}} \geq \ham(f,{\cal C}) > \eps$).  So in this case as well $T'$ correctly outputs ``reject'' with high probability.

The final case when $\ham(f,{\cal C})>\eps$ is that $|f^{-1}(1)| / 2^n > \eps/2.$
In this case the probability that $T'$ outputs ``reject'' in Step~1 is at most $1/100$ (since this only happens if $\boldm'_1/m_1 \leq \eps/3$), so we may suppose that $T'$ reaches Step~2. In this case the arguments of the previous paragraph again give us that $T'$ correctly outputs ``reject'' with high probability.
\end{proof}

We remark that a simple example, given in \Cref{ap:standard-yes-relative-no}, shows that there are properties which are trivially testable in the standard model but very hard to test in the relative-error model. Together with \Cref{obs:relative-yields-standard}, this shows that relative-error testing is a more demanding model than the standard property testing model.

\medskip
\noindent
{\bf Standard-model testing implies relative-error testing, for not-too-sparse properties.} The next result, together with \Cref{obs:relative-yields-standard}, implies that standard-model testing and relative-error testing are essentially equivalent for classes of functions that are ``not too sparse:''

\begin{fact} \label{obs:relative-from-standard}
Let $p=p(n)>0$ and let ${\cal C}$ be any class of functions from $\zo^n \to \zo$ such that every $f \in {\cal C}$ has $|f^{-1}(1)|/2^n \geq p$.
Suppose that there is a standard-model $\eps$-testing algorithm $T$ for ${\cal C}$ that makes $q(\eps,n)$ many black-box queries.  Then there is a relative-error $\eps$-testing algorithm $T'$ for ${\cal C}$ which, when run on any function $f: \zo^n \to \zo$, makes no calls to the sampling oracle and at most $O(1/p + q(p\eps/2,n))$ calls to the black-box oracle.
\end{fact}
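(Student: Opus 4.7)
My plan is a direct reduction that uses a crude density-estimation step to handle the sparse case, and otherwise hands off to $T$ with an appropriately scaled error parameter. The crucial arithmetic observation is that for any fixed $f$, $\reldist(f,g) = \ham(f,g) \cdot 2^n/|f^{-1}(1)|$ for every $g$; since the multiplicative factor $2^n/|f^{-1}(1)|$ does not depend on $g$, the two minima coincide up to this factor, and hence $\reldist(f,{\cal C}) = \ham(f,{\cal C}) \cdot 2^n/|f^{-1}(1)|$. In particular, whenever $|f^{-1}(1)|/2^n \geq p/2$ and $\reldist(f,{\cal C}) > \eps$, we get $\ham(f,{\cal C}) > p\eps/2$, so a standard-model tester $T$ for ${\cal C}$ invoked with error parameter $p\eps/2$ will correctly reject.

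The algorithm $T'$ is then the following two-step procedure. First, draw $m_1 = C/p$ independent uniform random points from $\zo^n$ and query the black-box oracle for $f$ on each (for a sufficiently large absolute constant $C$); let $\boldm'_1$ be the number of them on which $f$ evaluates to $1$. If $\boldm'_1/m_1 < 3p/4$, reject. Otherwise, simulate $T$ with error parameter $p\eps/2$ using the same black-box oracle and return whatever $T$ outputs.

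Correctness follows by a case analysis on the true density $|f^{-1}(1)|/2^n$. For completeness, if $f \in {\cal C}$ then $|f^{-1}(1)|/2^n \geq p$, and a Chernoff bound (with $C$ sufficiently large) yields $\boldm'_1/m_1 \geq 3p/4$ with probability at least $0.99$, so $T'$ reaches the second step where $T$ accepts with probability at least $2/3$ since $\ham(f,{\cal C})=0$. For soundness, suppose $\reldist(f,{\cal C}) > \eps$: if $|f^{-1}(1)|/2^n < p/2$, then a Chernoff bound gives $\boldm'_1/m_1 < 3p/4$ with probability at least $0.99$ and $T'$ correctly rejects in the first step; otherwise $|f^{-1}(1)|/2^n \geq p/2$ and the key identity yields $\ham(f,{\cal C}) > p\eps/2$, so $T$ correctly rejects with probability at least $2/3$ in the second step.

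The resulting query complexity is $O(1/p) + q(p\eps/2, n)$ as required, with no calls to $\SAMP(f)$. The only technical point to check is that the threshold $3p/4$ reliably separates the density-$\geq p$ and density-$<p/2$ regimes using only $O(1/p)$ black-box queries; this is handled by a standard multiplicative Chernoff bound applied to a Bernoulli sum of mean $\Theta(C)$, which drives both error probabilities to at most $e^{-\Omega(C)}$. Everything else is a direct appeal to the guarantees of $T$, so I do not anticipate any substantive obstacle beyond selecting the constant $C$ appropriately.
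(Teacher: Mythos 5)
Your reduction is the same one the paper uses: estimate the density with $O(1/p)$ black-box queries, reject outright if the empirical density is too small, and otherwise invoke $T$ with error parameter $p\eps/2$, justified by the identity $\reldist(f,{\cal C}) = \ham(f,{\cal C}) \cdot 2^n/|f^{-1}(1)|$ which converts $\reldist(f,{\cal C}) > \eps$ together with $|f^{-1}(1)|/2^n \geq p/2$ into $\ham(f,{\cal C}) > p\eps/2$. The soundness argument is fine as written: if the density is below $p/2$ you reject in Step~1 with high probability, and otherwise one invocation of $T$ rejects with probability $\geq 2/3$, and rejecting in Step~1 only helps, so the overall rejection probability is still $\geq 2/3$.

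The gap is in completeness, and it is real. When $f \in {\cal C}$, Step~1 passes with probability $\geq 1 - \delta$ for some $\delta > 0$ (you take $\delta = 0.01$), and a single run of $T$ accepts with probability $\geq 2/3$. These two events are independent, so the overall acceptance probability is only guaranteed to be $(1-\delta)\cdot(2/3)$, which is strictly below $2/3$ for every $\delta > 0$. Tightening the Chernoff bound in Step~1 cannot close this gap, because you cannot drive $\delta$ to zero with finitely many queries, and $T$'s guarantee is exactly $\geq 2/3$, not more. The paper handles this by running $T$ a constant number of times with fresh randomness and outputting the majority, which amplifies $T$'s success probability to, say, $\geq 5/6$, so that the two error sources contribute at most $1/6 + 1/6 = 1/3$ by a union bound. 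This costs only an $O(1)$ multiplicative overhead and stays within the claimed $O(1/p + q(p\eps/2,n))$ query bound. With that amplification added, your argument is complete and matches the paper's.
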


\begin{proof}
Algorithm $T'$ first makes $m=O(1/p)$ calls to the black-box oracle for $f$ on uniform random inputs, and rejects if fewer than $3pm/4$ of the queried points are satisfying assignments of $f$.  If $T'$ does not reject in this first phase, it then performs $O(1)$ many runs of the standard-model testing algorithm $T$, each time with closeness parameter $p\eps/2$, and outputs the majority of those runs.

It is clear that the query complexity is as claimed. 
To establish correctness, first consider the case that $f \in {\cal C}.$ Since $|f^{-1}(1)|/2^n \geq p$, the probability that the algorithm rejects in the first phase is at most (say) 1/6 by a standard multiplicative Chernoff bound.  Since $f \in {\cal C}$, the probability that any individual run of $T$ outputs ``reject'' is at most $1/3$, so the probability that the majority of the $O(1)$ runs output ``reject'' is at most (say) $1/6$.  Hence the probability that $f$ is rejected is at most $1/3$ as desired.  

Next, suppose that $\reldist(f,{\cal C}) > \eps$.  One possibility is that $|f^{-1}(1)|/2^n \leq p/2$; if this is the case, then the probability that $T$ rejects in the first phase is at least 2/3 as required. So consider the other possibility, which is that $|f^{-1}(1)|/2^n > p/2.$ We have
\[
\eps < \reldist(f,{\cal C})  = \min_{g \in {\cal C}} {\frac {|f^{-1}(1) \mathop{\triangle} g^{-1}(1)|}{|f^{-1}(1)|}} <
{\frac 2 p} \cdot \min_{g \in {\cal C}} {\frac {|f^{-1}(1) \mathop{\triangle} g^{-1}(1)|}{2^n}}, 
\]
which rearranges to give $\ham(f,{\cal C})\geq p\eps/2$. So even if $T'$ makes it to the second stage, the probability that the majority of $O(1)$ calls to $T$ with closeness parameter $p\eps/2$ output ``accept'' is at most $1/3$.
\end{proof}

\Cref{obs:relative-from-standard} sheds light on relative-error testing of some classes that have been intensively studied in the standard model.  For instance, since every non-constant parity function $f$ has $|f^{-1}(1)|/2^n=1/2$, the standard-model $O(1/\eps)$-query testing algorithm for linear (parity) functions \cite{BLR93} yields a relative-error tester with the same $O(1/\eps)$-query complexity.  As another application, since every non-constant $k$-junta has $|f^{-1}(1)|/2^n \geq 1/2^k$, the $O(k/\eps + k \log k)$-query algorithm of Blais \cite{Blaisstoc09} for testing juntas yields a relative-error tester for $k$-juntas that makes $O(k2^k/\eps)$ black-box queries.  On the other hand,  \Cref{obs:relative-from-standard} does not give anything for some other natural classes such as LTFs, $s$-term DNF formulas, and monotone functions.

\medskip
\noindent
{\bf Relative-error learning implies relative-error testing.} Several recent works \cite{DDS15,CDS20soda} have studied the problem of learning the distribution of satisfying assignments of an unknown Boolean function. In this framework, a learning algorithm for a concept class ${\cal C}$ of Boolean functions over $\zo^n$ has access to uniform random satisfying assignments (i.e.~to a sample oracle for the unknown target function $f \in {\cal C}$), and the goal of the learner is to output an \emph{$\eps$-sampler for $f^{-1}(1)$}, which is a circuit that, when given independent uniform random bits as its input, outputs a draw from a distribution ${\cal D}$ that has total variation distance at most $\eps$ from the uniform distribution over $f^{-1}(1)$.  Known algorithms in this framework work in two stages:

\begin{flushleft}\begin{enumerate}
\item First, they perform \emph{$\eps/2$-relative-error proper learning} of the unknown target function $f \in {\cal C}$. This means that they use independent uniform samples from $f^{-1}(1)$ to construct a hypothesis function $h \in {\cal C}$ which, with probability at least $9/10$, satisfies $\reldist(f,h) \leq \eps/2.$ 

\item Next, they output an $\eps/2$-sampler for $h$.
\end{enumerate}\end{flushleft}

We now show that relative-error \emph{learning} algorithms for a class of functions yield relative-error \emph{testing} algorithms for the same class, with query complexity comparable to the sample complexity of the learning algorithm:

\begin{fact} \label{obs:test-from-learn}
Let ${\cal C}$ be a class of functions from $\zo^n$ to $\zo$. 
Let $A$ be an algorithm which performs $\eps$-relative-error proper learning of ${\cal C}$ using $s(\eps,n)$ uniform samples from $f^{-1}(1).$
Then there is an $\eps$-relative-error testing algorithm $T$ for ${\cal C}$ which makes at most $s(\eps/4,n) + O(1/\eps^2)$ calls to the sample oracle and at most $O(1/\eps^2)$ calls to the black-box oracle $\MQ(f).$
\end{fact}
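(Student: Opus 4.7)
The plan is to first run the learner $A$ on $\SAMP(f)$ with parameter $\eps/4$ to obtain an explicit hypothesis $h \in \mathcal{C}$, and then test whether $\reldist(f,h)$ is small. If $f \in \mathcal{C}$ then the learning guarantee gives $\reldist(f,h) \leq \eps/4$ with probability at least $9/10$; if $\reldist(f,\mathcal{C}) > \eps$ then every $h' \in \mathcal{C}$ satisfies $\reldist(f,h') > \eps$, so in particular the learner's output $h$ does. Hence the whole task reduces to distinguishing $\reldist(f,h) \leq \eps/4$ from $\reldist(f,h) > \eps$, given access to $\SAMP(f)$, $\MQ(f)$, and the explicit description of $h$.

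Writing $A := |f^{-1}(1) \cap h^{-1}(1)|$, $B := |f^{-1}(1) \setminus h^{-1}(1)|$, $C := |h^{-1}(1) \setminus f^{-1}(1)|$, and setting
\[
\delta_1 := \Pr_{\bx \sim f^{-1}(1)}\bigl[h(\bx) = 0\bigr] = \frac{B}{A+B}, \qquad \gamma := \Pr_{\by \sim h^{-1}(1)}\bigl[f(\by) = 0\bigr] = \frac{C}{A+C},
\]
direct manipulation of $\reldist(f,h) = (B+C)/(A+B)$ (using $A = (1-\delta_1)(A+B)$ and $C = \gamma A/(1-\gamma)$) yields the identity
\[
\reldist(f,h) \;=\; \delta_1 + \frac{\gamma\,(1-\delta_1)}{1-\gamma}.
\]
Estimate $\hat{\delta}_1$ using $O(1/\eps^2)$ fresh samples from $\SAMP(f)$ and evaluating $h$ on each (requiring no $\MQ(f)$ calls, since the samples already lie in $f^{-1}(1)$). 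Estimate $\hat{\gamma}$ by generating $O(1/\eps^2)$ uniform samples from $h^{-1}(1)$ internally (since $h$ is a fully-known function output by the learner) and querying $\MQ(f)$ on each. Chernoff bounds give $|\hat{\delta}_1 - \delta_1|, |\hat{\gamma} - \gamma| \leq \eps/C_0$ for any desired absolute constant $C_0$, with probability at least $9/10$. Form the plug-in estimator $\hat{R} := \hat{\delta}_1 + \hat{\gamma}(1-\hat{\delta}_1)/(1-\hat{\gamma})$, reject immediately if $\hat{\gamma} \geq 1/2$ (which already forces $\reldist(f,h) > 1 > \eps$), and otherwise accept iff $\hat{R} \leq 5\eps/8$.

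The main technical step will be propagating the estimation errors on $\hat{\delta}_1$ and $\hat{\gamma}$ through the formula for $\hat{R}$: a standard Jacobian calculation shows that, as long as $\gamma$ is bounded away from $1$ (which is exactly the regime that passes the early-reject check), the partial derivatives of the reconstruction formula are bounded by $O(1/(1-\gamma)^2) = O(1)$, so taking $C_0$ to be a sufficiently large constant gives $|\hat{R} - \reldist(f,h)| \leq \eps/16$. This comfortably separates $\reldist(f,h) \leq \eps/4$ (yielding $\hat{R} < 5\eps/8$) from $\reldist(f,h) > \eps$ (yielding $\hat{R} > 5\eps/8$). A final union bound over the failure probabilities of the learner and the two Chernoff estimates gives overall success at least $2/3$. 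The claimed query complexity is then immediate: the learner uses $s(\eps/4, n)$ calls to $\SAMP(f)$, $\hat{\delta}_1$ uses $O(1/\eps^2)$ additional calls to $\SAMP(f)$, and $\hat{\gamma}$ uses $O(1/\eps^2)$ calls to $\MQ(f)$; the $O(1/\eps^2)$ samples drawn from $h^{-1}(1)$ and the evaluations of $h$ are internal and not counted against the budget.
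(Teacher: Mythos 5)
Your proof is correct and takes essentially the same approach as the paper: run the learner with parameter $\eps/4$, then use $O(1/\eps^2)$ samples from $\SAMP(f)$ (evaluating the known hypothesis $h$ on each) and $O(1/\eps^2)$ internally-generated samples from $h^{-1}(1)$ (evaluating $f$ via $\MQ$) to estimate the two one-sided error rates $\Pr_{\bx\sim f^{-1}(1)}[h(\bx)=0]$ and $\Pr_{\by\sim h^{-1}(1)}[f(\by)=0]$. The only difference is cosmetic — the paper rejects if either estimate individually exceeds a fixed threshold $(3/8)\eps$, whereas you combine them into a single plug-in estimator of $\reldist(f,h)$ via the identity $\reldist(f,h)=\delta_1+\gamma(1-\delta_1)/(1-\gamma)$ and threshold that; both rest on the same fraction manipulations and your error-propagation (after the early reject guaranteeing $\gamma$ bounded away from $1$) is sound.
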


\begin{proof}
The testing algorithm $T$ works in the following stages:
\begin{flushleft}\begin{enumerate}
\item Run the relative-error learning algorithm $A$ with error parameter $\eps/4$, using the sample oracle as the source of uniform random examples (this requires $s(\eps/4,n)$ calls to the sample oracle, and no calls to the MQ oracle). If $A$ does not output a hypothesis $h \in {\cal C}$ then output ``reject.''  Otherwise, let $h \in {\cal C}$ denote the hypothesis that $A$ outputs, and

\item Draw $m := O(1/\eps^2)$ samples from $\SAMP(f)$ and evaluate $h$ on each of them.  If $h$ evaluates to 0 on more than $(3/8) \eps m$ of the samples, then output ``reject.'' Otherwise,

\item Draw $m$ independent uniform samples from $h^{-1}(1)$\footnote{Note that while it may be a computationally hard task to generate a uniform random satisfying assignment of $h$, we are only concerned with the query complexity of our testing algorithm and not its running time.} and use $\MQ(f)$ to evaluate $f$ on each of them. If $f$ evaluates to 0 on more than $ (3/8) \eps m$ of them then output ``reject,'' otherwise output ``accept.''
\end{enumerate}\end{flushleft}

The query complexity is clearly as claimed, so we turn to establishing correctness.  Suppose first that the target function $f$ belongs to ${\cal C}$. 
The probability that the $(\eps/4)$-relative-error learning algorithm does not output a function $h \in {\cal C}$ is at most $1/10$, so suppose that $h \in {\cal C}$ is the output of the relative-error learning algorithm, and that $\reldist(f,h) \leq \eps/4.$
Write 
\begin{equation} \label{eq:abc}
a \cdot 2^n := |f^{-1}(1) \cap h^{-1}(1)|, \quad \quad
b \cdot 2^n := |f^{-1}(1) \setminus h^{-1}(1)|, \quad \quad
c \cdot 2^n := |h^{-1}(1) \setminus f^{-1}(1)|.
\end{equation}
Since $\reldist(f,h)={\frac {|f^{-1}(1) \mathop{\triangle} h^{-1}(1)|}{|f^{-1}(1)|}} \leq \eps/4$, we have
${\frac {b+c}{a+b}} \leq \eps/4$. Since the probability that $h$ evaluates to 0 on a random example drawn from $\SAMP(f)$ is ${\frac b {a+b}} \leq {\frac {b+c}{a+b}} \leq \eps/4$, by a standard Chernoff bound the probability that $T$ outputs ``reject'' in Step~2 is at most $1/10$; so suppose that algorithm $T$ proceeds to Step~3.  Since the probability that $f$ evaluates to 0 on a random example drawn from $h^{-1}(1)$ is 
\[\frac c{a + c} = \frac{a + b}{a + c} \cdot \frac c{a + b} \le \frac{a + b}a \cdot \frac\varepsilon4 = \frac1{1 - b/(a + b)} \cdot \frac\varepsilon4 \le \frac1{1 - \varepsilon/4}\cdot \frac\varepsilon4 \le \frac\varepsilon3,\]
the probability that $T$ outputs ``reject'' in Step~3 is at most $1/10$. So the overall probability that $T$ outputs ``reject'' is at most $1/10 + 1/10 + 1/10 < 1/3$, as required since $f \in {\cal C}$.

Next, suppose that $\reldist(f,{\cal C}) > \eps.$ If the relative-error learning algorithm does not output a hypothesis $h \in {\cal C}$, then the tester $T$ outputs ``reject'' (as desired), so suppose that in stage~1 the tester $T$ outputs a hypothesis $h \in {\cal C}$. Let $a,b,c \in [0,1]$ be as in \Cref{eq:abc}. Since ${\frac {b+c}{a+b}} = \reldist(f, h) \ge \reldist(f,{\cal C}) > \eps$, it must be the case that either ${\frac b {a+b}}$ (which is the fraction of examples in $f^{-1}(1)$ that are labeled 0 by $h$) is at least $5\varepsilon/13$, or ${\frac c {a+b}}$ is at least $8\eps/13$. In the first case, a Chernoff bound gives that $T$ outputs ``reject'' in Step~2 with probability at least $2/3$, and in the second case,
\[\frac c{a + c} = \frac1{a/c + 1} \ge \frac1{(a + b)/c + 1} \ge \frac1{13/(8\varepsilon) + 1} \ge \frac{8\varepsilon}{21}.\]
Note that $\frac c{a + c}$ is the fraction of examples in $h^{-1}(1)$ that are labeled 0 by $f$, so a Chernoff bound gives that $T$ outputs ``reject'' in Step~3 with probability at least $2/3.$
\end{proof}

\Cref{obs:test-from-learn} lets us obtain some positive results for relative-error testing of well-studied concept classes, namely LTFs and DNFs, from known positive results for learning those classes:
\begin{itemize}

\item
En route to giving an efficient algorithm for learning the uniform distribution over satisfying assignments of an unknown LTF over $\zo^n$, \cite{DDS15} gives a $\poly(n,1/\eps)$-sample algorithm which is an $\eps$-relative-error proper learner for LTFs over $\zo^n$. Hence by \Cref{obs:test-from-learn}, we get a $\poly(n,1/\eps)$-query algorithm for relative-error testing of LTFs over $\zo^n$. 

\item 
\cite{DDS15} also gives a $\poly(n^{\log(s/\eps)})$-sample $\eps$-relative-error proper learning algorithm for $s$-term DNFs over $\zo^n$ (as part of an algorithm for learning the uniform distribution over satisfying assignments of an unknown $s$-term DNF).  Consequently, \Cref{obs:test-from-learn} gives a $\poly(n^{\log(s/\eps)})$-query algorithm for $\eps$-relative-error testing of the class of $s$-term DNFs.

\end{itemize}

On the other hand, \Cref{obs:test-from-learn} is not useful for relative-error monotonicity testing, since even under the uniform distribution on $\zo^n$(where both positive and negative random examples are available to the learner) no algorithm can learn monotone Boolean functions to accuracy $\eps$ using fewer than $2^{\Omega(\sqrt{n}/\eps)}$ samples \cite{BCOST15}.


\section{Algorithms for relative-error monotonicity testing} \label{sec:upper-bound}

In this section we present algorithms for relative-error monotonicity testing. At a high level, our algorithms uses the same idea as the $O(n/\epsilon)$-query edge tester for standard monotonicity testing. However, our algorithms will be much more efficient when the function being tested is \emph{sparse}.
In particular, let $N$ be the sparsity of the given function $f$, i.e.~$N = \lvert f^{-1}(1) \rvert$. Our algorithms make $O(\log(N)/\epsilon)$ calls to $\MQ(f)$ and $\SAMP(f)$, where $\epsilon$ is the (relative-error) distance parameter. 
Note that such an upper bound can still be highly non-trivial vis-a-vis standard monotonicity testing even when $N = 2^{\Theta(n)}$; for example, if $N=2^{0.99n}$, then having relative-error $\eps$ is a much stronger guarantee than having absolute error $\eps$, since $\ham(f,\cmon)=\reldist(f,\cmon)\cdot 2^{-0.01n}$.  

We will present two versions of our algorithm that work in different settings. As a warmup, we first explain how a variant of the simple edge tester gives an $O(n/\eps)$-query algorithm; this algorithm does not need to be provided with the value of $N$. Then, for our first main algorithm, we show how a modification of this simple algorithm works, using $O(\log(N)/\eps)$ queries, if the sparsity $N$ is given to the algorithm as an input parameter. 
Finally, our second main algorithm does not require the sparsity $N$ as an input parameter.  
It works for any Boolean function $f: \{0,1\}^n \to \{0,1\}$, and as long as the function $f$ has sparsity $N$, the algorithm will (usually) terminate within $O(\log (N)/\epsilon)$ steps. The idea of this algorithm is to first \emph{learn} a rough estimate of the sparsity $N$ and then run the first algorithm using the learned estimate of $N$. Because of this, the second algorithm is adaptive, and it terminates after $O(\log(N)/\epsilon)$ samples and queries only with high (say $0.99$) probability. (We will show how to boost the probability to an arbitrarily large $1-\delta $ with only an $O(\log (1/\delta)/\eps)$ additive increase in the sample and query complexity.)

\subsection{Warmup:  A nonadaptive $O(n/\eps)$-query algorithm (that is not given $N$)}\label{sec:warmup}

We first show that the $O(n/\eps)$ edge-tester algorithm in \cite{GGLRS} still works in the relative-error setting with slight modification. The intuition of the edge-tester is simple: the algorithm just tries to find a \emph{violating edge}, i.e.~an edge $\{x, y\}$ such that 
$((x \prec y) \wedge (f(x) > f(y)))$,
and outputs ``not monotone'' if it  finds such an edge. Since any far-from-monotone function must have  ``many'' violating edges, an algorithm that samples some edges and outputs ``not monotone'' if and only if there is a violating edge among the sampled edges should work.

In more detail, we recall the following:

\begin{lemma}[{\cite[Theorem 2]{GGLRS}}]\label{lem:vio-edges-lb}
    If $f: \{0, 1\}^n \to \{0, 1\}$ differs from any monotone function on at least $\Delta$ points  in $\zo^n$, then the number of violating edges $\{x, y\}$ for $f$ is at least $\Delta/2$.
\end{lemma}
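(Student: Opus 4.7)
The plan is to use the classical coordinate-wise ``switch'' operator to transform $f$ into a monotone function while controlling the number of points modified. For each coordinate $i \in [n]$, I would define an operator $S_i$ acting on functions $g: \zo^n \to \zo$ as follows: for every edge $(x, x \oplus e_i)$ with $x_i = 0$, if $g(x) = 1$ and $g(x \oplus e_i) = 0$ (a direction-$i$ violation), swap the values of $g$ on the two endpoints; otherwise leave them alone. Each such swap eliminates one direction-$i$ violating edge while modifying $g$ at exactly two points.

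I would establish two key properties of $S_i$. First, (i) $S_i(g)$ has no violating edges in direction $i$, which is immediate from the definition. Second, (ii) for every $j \ne i$, the number of violating edges of $S_i(g)$ in direction $j$ is no larger than that of $g$ in direction $j$. Property (ii) is the main technical step and follows from a case analysis on each $2$-dimensional face: restricting attention to the four points $\{x, x \oplus e_i, x \oplus e_j, x \oplus e_i \oplus e_j\}$, one checks that for every configuration of $g$-values, applying $S_i$ cannot increase the number of direction-$j$ violating edges inside this face.

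Given (i) and (ii), define $f_0 = f$ and $f_k = S_k(f_{k-1})$ for $k = 1, \ldots, n$. By (i), $f_k$ has zero violating edges in direction $k$; by (ii) applied repeatedly, the number of violating edges of $f_n$ in any fixed direction $i$ is at most the corresponding count in $f_i$, which is zero. Hence $f_n$ is monotone, so $|f^{-1}(1) \triangle f_n^{-1}(1)| \ge \Delta$ by hypothesis.

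Writing $V_j(g)$ for the number of violating edges of $g$ in direction $j$, the step from $f_{k-1}$ to $f_k$ modifies the function at exactly $2 V_k(f_{k-1})$ points, and by (ii) we have $V_k(f_{k-1}) \le V_k(f)$. Chaining these,
\[
\Delta \;\le\; |f^{-1}(1) \triangle f_n^{-1}(1)| \;\le\; \sum_{k=1}^n 2 V_k(f_{k-1}) \;\le\; 2 \sum_{k=1}^n V_k(f),
\]
and $\sum_{k=1}^n V_k(f)$ is precisely the total number of violating edges of $f$. Rearranging yields the desired bound of $\Delta/2$. The main obstacle is verifying (ii) rigorously: this is an elementary but slightly tedious case analysis of the value configurations on each $2$-dimensional face spanned by $e_i$ and $e_j$, checking that the direction-$i$ swap never creates a new direction-$j$ violation without destroying a pre-existing one.
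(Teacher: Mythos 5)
Your argument is correct and is exactly the classical ``switch operator'' (sometimes called ``shift operator'') proof from~\cite{GGLRS}; the paper simply cites their Theorem~2 without reproducing the argument, so there is no in-paper proof to compare against. Your chain $\Delta \le |f^{-1}(1)\triangle f_n^{-1}(1)| \le \sum_k 2V_k(f_{k-1}) \le 2\sum_k V_k(f)$, with the monotonicity of $V_j$ under $S_i$ (your property~(ii)) verified on each $2$-dimensional face, matches the original argument step for step.
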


\begin{remark} \label{remark:vio-edges}
    The above lemma is tight up to constant factors \cite[Proposition 4]{GGLRS}.
\end{remark}

If $f$ satisfies $\reldist(f,\cmon) \geq \varepsilon$ and $|f^{-1}(1)| = N$, then $f$ differs from any monotone function on at least $\varepsilon N$ points. Thus, it directly follows from \Cref{lem:vio-edges-lb} that there are at least $\varepsilon N/2$ violating edges for $f$. Moreover, all violating edges have one vertex $x$ such that $f(x) = 1$, so in order to find violating edges, we can sample from the set of all edges that have at least one vertex $x$ such that $f(x)$, and this can be done with the sampling oracle. We thus have the following algorithm:

\begin{mdframed}
    \begin{center}
        \textbf{\setword{Algorithm 1}{alg1}}: An edge-testing algorithm
    \end{center}
    \hspace{0.1cm}
    
    \textbf{Input}: $n$, $\eps$, and access to the oracles
   $\MQ(f)$ and $\SAMP(f)$.
    
    \textbf{Output}: Either ``monotone'' or ``not monotone''.
    \begin{flushleft}\begin{enumerate}
        \item Query the sample oracle $\SAMP(f)$ for $a := 4n/\varepsilon$ times.
        \item For every sample $x$ received, uniformly sample $i \in [n]$, and query the oracle $\MQ(f)$ on $x \oplus e_i$ (which is $x$ with its $i$-th bit flipped). If $x_i = 0$ (so that $x \prec x \oplus e_i$) and $f(x \oplus e_i) = 0$ (which means $\{x, x \oplus e_i\}$ is a violating edge), then output ``not monotone'' and halt. 
        \item If the algorithm did not output ``not monotone'' in Step~2, then output ``monotone''.
    \end{enumerate}\end{flushleft}
\end{mdframed}

The number of oracle calls made by the algorithm is $2 \times 4n/\varepsilon = O(n/\varepsilon)$. It is easy to see that the algorithm is non-adaptive, as the queries in Step 2 do not depend on one another and can thus be made in one round. Also, it is a one-sided algorithm, since when $f$ is monotone, there is no violating edge, so the algorithm always outputs ``monotone''. 

Now consider the case where $f$ is $\varepsilon$-relatively-far from monotone. Note that all violating edges are in the form of $\{x, x \oplus e_i\}$ where $f(x) = 1$, so for each $(x, i)$ uniformly sampled from $f^{-1}(1) \times [n]$, the probability that $\{x, x \oplus e_i\}$ is a violating edge is at least $(\varepsilon N/2)/(N \cdot n) = \varepsilon/(2n)$, as there are at least $\varepsilon N/2$ violating edges. Since the algorithm gets $a$ uniform samples $(x, i)$ from $f^{-1}(1) \times [n]$, the probability that none corresponds to a violating edge is at most
\[\left(1 - \frac\varepsilon{2n}\right)^a = \left(1 - \frac\varepsilon{2n}\right)^{4n/\varepsilon} \le \frac13\]
for sufficiently large $n$. So with probability at least $2/3$, the algorithm outputs ``not monotone''.

\subsection{A non-adaptive $O(\log(N)/\eps)$-query algorithm that is given $N$} \label{sec:A}

In this section, we will improve the algorithm in the previous subsection and prove the following:

\begin{theorem} \label{thm:alg-nonadaptive}
There is a one-sided non-adaptive algorithm which, if it is given $N := |f^{-1}(1)|$, is an $\eps$-relative-error tester for monotonicity making $O(\log(N)/\eps)$ calls to $\SAMP(f)$ and $\MQ(f)$.
\end{theorem}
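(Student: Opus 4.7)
The plan is to turn the edge tester into a relative-error tester that exploits the structural fact that, if $f$ is monotone with $|f^{-1}(1)| = N$, then every $x \in f^{-1}(1)$ must satisfy $\|x\|_1 \geq n - L$ where $L := \lceil \log_2 N\rceil$ (since the entire up-set $\{z : z \succeq x\}$, of size $2^{n - \|x\|_1}$, must also lie in $f^{-1}(1)$). Concretely, I would run $T := \Theta(L/\eps)$ independent iterations of the following test: draw $\bx \sim \SAMP(f)$; if $\|\bx\|_1 < n - L$, flag ``reject''; otherwise pick $\bi$ uniformly from $\{j : \bx_j = 0\}$, query $\MQ(f)$ at $\by := \bx \oplus e_{\bi}$, and flag ``reject'' if $f(\by) = 0$. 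The algorithm outputs ``monotone'' iff no iteration flagged reject. Non-adaptivity in the sense of \Cref{remark:samples-then-queries} is immediate: the single $\MQ$-query in each iteration depends only on the corresponding sample and fresh internal randomness, so all $T$ queries can be dispatched in a single parallel batch after collecting all $T$ samples.

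Completeness is then immediate: if $f \in \cmon$ with $|f^{-1}(1)| = N$, the structural observation forces every sample $\bx$ to satisfy $\|\bx\|_1 \geq n - L$, so the weight check never fires, and $f$ has no violating edges, so the neighbor check never fires either.

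For soundness, assume $\reldist(f, \cmon) > \eps$; the goal is to show each iteration rejects with probability $\Omega(\eps/L)$, so that $T = \Theta(L/\eps)$ drives the overall failure probability below $1/3$. Set $S_{\mathrm{low}} := \{x \in f^{-1}(1) : \|x\|_1 < n - L\}$. In Case A, $|S_{\mathrm{low}}| \geq \eps N/4$, and the weight check fires in a single iteration with probability at least $\eps/4$. In Case B, $|S_{\mathrm{low}}| < \eps N/4$, and I introduce the truncation $f'(x) := f(x) \cdot \mathbf{1}[\|x\|_1 \geq n - L]$; then $|f^{-1}(1) \triangle f'^{-1}(1)| = |S_{\mathrm{low}}| < \eps N/4$, and combining this with $|f^{-1}(1) \triangle g^{-1}(1)| > \eps N$ for every $g \in \cmon$ via the triangle inequality yields $|f'^{-1}(1) \triangle g^{-1}(1)| > 3\eps N/4$ for every such $g$. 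Applying \Cref{lem:vio-edges-lb} to $f'$ produces at least $3\eps N/8$ violating edges of $f'$; because $f'$ vanishes outside the top $L+1$ layers, every such edge $(x, y)$ with $x \prec y$ has $\|x\|_1 \geq n - L$ and $\|y\|_1 = \|x\|_1 + 1 \geq n - L$, and is simultaneously a violating edge of $f$ (since $f$ and $f'$ agree on the top $L + 1$ layers). Let $V_H$ denote this set, so $|V_H| \geq 3\eps N/8$. A given edge $(x, y) \in V_H$ is hit in one iteration with probability exactly $\frac{1}{N} \cdot \frac{1}{n - \|x\|_1} \geq \frac{1}{NL}$, so one iteration rejects with probability at least $|V_H|/(NL) \geq 3\eps/(8L)$.

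The delicate step is Case B: \Cref{lem:vio-edges-lb} only guarantees violating edges somewhere in the cube, whereas the upper-neighbor sampler achieves probability $\geq 1/(NL)$ per edge only when the lower endpoint lies in the top $L + 1$ layers (otherwise $n - \|x\|_1$ could be as large as $n$, recovering only the trivial $O(n/\eps)$ bound of the warmup). The truncation trick -- zeroing out the low-weight part of $f$, paying at most $\eps N / 4$ in relative-error distance, and reapplying \Cref{lem:vio-edges-lb} to the truncated function -- is what converts the abstract far-from-monotone hypothesis into a concrete supply of $\Omega(\eps N)$ violating edges whose low endpoints satisfy $\|x\|_1 \geq n - L$, exactly the regime where the sampler is efficient.
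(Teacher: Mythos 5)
Your proposal is correct and follows essentially the same strategy as the paper's proof of \Cref{thm:alg-nonadaptive}: sample from $f^{-1}(1)$, reject on low Hamming weight, otherwise test a random upward edge, and use \Cref{lem:vio-edges-lb} on a truncated $f'$ to supply $\Omega(\eps N)$ violating edges whose lower endpoints live in the top $O(\log N)$ layers. One small but real improvement: the paper's case split is phrased in terms of $S := f^{-1}(1) \triangle g^{-1}(1)$ for $g$ a closest monotone function, and its Case~1 implicitly relies on the (true but unstated) fact that any such $g$ must vanish below weight $n - 2\log_2 N$ --- otherwise the low-weight elements of $S$ might not be samplable from $\SAMP(f)$. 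By defining $S_{\mathrm{low}} := \{x \in f^{-1}(1) : \|x\|_1 < n - L\}$ directly as a subset of $f^{-1}(1)$, you sidestep this subtlety entirely: the probability of the weight check firing is literally $|S_{\mathrm{low}}|/N$ with nothing to justify. Your choice of threshold $L = \lceil \log_2 N\rceil$ rather than the paper's $2\log_2 N$ is also fine and slightly tighter; both are $\Theta(\log N)$ and the argument only needs $n - \|x\|_1 = O(\log N)$ for surviving samples.
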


The key to this theorem is the following observation: if $f$ is monotone and $|f^{-1}(1)| = N$, then for any $x \in f^{-1}(1)$, the value of $\|x\|_1$ (i.e.~the Hamming weight of $x$) is at least $n - \log_2N$, because $f(y) = 1$ for any $y \succ x$. Therefore, if we know $N$ and we found a point $x\in \{0,1\}^n$ with $f(x)=1$ and  $\|x\|_1 < n - \log_2N$, then $f$ cannot be monotone.

Moreover, if we assume that every $x$ we get from the sample oracle has Hamming weight close to $n$, then we can also improve the edge-testing algorithm (see~\ref{alg1}). Specifically, in the edge-testing algorithm, we sample violating edges by sample from all edges that have one vertex $x$ in $f^{-1}(1)$. However, for $x \in f^{-1}(1)$, only edges going ``upwards'' from $x$ (connecting $x$ and $x \oplus e_i$ where $x \prec x \oplus e_i$) can be violating edges. Therefore, it suffices to sample only from edges going upwards from $x$, and moreover, the number of these edges is much smaller than $n$ since $x$ has Hamming weight close to $n$.

These observations naturally lead to the following algorithm.

\medskip

\begin{mdframed}
    \begin{center}
        \textbf{\setword{Algorithm 2}{alg2}}: A non-adaptive algorithm that is given $N$
    \end{center}
    \hspace{0.1cm}
    
    \textbf{Input}: $n$, $\eps$, $N$, and access to the oracles $\MQ(f)$ and $\SAMP(f)$.
    
    \textbf{Output}: Either ``monotone'' or ``not monotone''.
    \begin{enumerate}
        \item Query the sample oracle $\SAMP(f)$ for $a := 16\log_2 (N) /\varepsilon$ times.
        \item For every sample $x$ received:
        \begin{flushleft}\begin{itemize}
            \item If $\|x\|_1 < n - 2\log_2 N$, output ``not monotone'' and halt.
            \item Otherwise, uniformly sample $i$ from $\{i \in [n] : x_i = 0\}$, and query the oracle $\MQ(x)$ on $x \oplus e_i$. If $f(x \oplus e_i) = 0$ (which means $\{x, x \oplus e_i\}$ is a violating edge), then output ``not monotone'' and halt. 
        \end{itemize}\end{flushleft}
        \item If the algorithm did not output ``not monotone'' in Step~2, then output ``monotone''.
    \end{enumerate}
\end{mdframed}

\medskip

The number of oracle calls made by the algorithm is $2 \times 16\log_2(N)/\varepsilon = O(\log(N)/\varepsilon)$. Similar to the previous subsection, the algorithm is non-adaptive, as the queries in Step 2 do not depend on one another and can thus be made in one round, and the algorithm is also one-sided, since when $f$ is monotone, there is no violating edge, so the algorithm always outputs ``monotone''.

Now consider the case where $f$ is $\varepsilon$-relatively-far from monotone. Let $g$ be the closest (under Hamming distance) monotone function from $f$, and let $S = f^{-1}(1) \ \triangle \ g^{-1}(1)$, i.e. the set of points where $f$ and $g$ differ. Note that $|S| \ge \varepsilon N$. We have the following two cases:
\begin{flushleft}\begin{itemize}
    \item {\bf Case 1:}  There are at least $\varepsilon N/2$ elements of $S$ that have Hamming weight less than $n - 2\log_2N$. In this case, the probability that \ref{alg2} draws a sample that has Hamming weight less than $n - 2\log_2N$ and outputs ``not monotone'' is at least
    \[1 - \left(1 - \frac{\varepsilon N/2}N\right)^a = 1 - \left(1 - \frac\varepsilon2\right)^{16\log_2(N)/\varepsilon} \ge 2/3.\]

    \item {\bf Case 2:} There are more than $\varepsilon N/2$ elements of $S$ that has Hamming weight at least $n - 2\log_2N$. In this case, if one of the $x$ sampled in Stage~1 of the algorithm has Hamming weight less than $n - 2\log_2N$, then the algorithm outputs ``not monotone''. Below we condition on the event that all $x$ sampled in the algorithm have Hamming weight at least $n - 2\log_2N$.
    
    We first bound the number of violating edges both of whose vertices have Hamming weight at least $n - 2\log_2N$. Define function $f': \{0, 1\}^n \to \{0, 1\}$ such that $f'(t) = f(t)$ if $\|t\|_1 \ge n - 2\log_2N$ and $f'(t) = 0$ otherwise. Then $\ham(f', f) < (\varepsilon N/2)/2^n$, so by the triangle inequality, the number of points on which $f'$ must disagree with any monotone function is greater than $\varepsilon N/2$. Therefore, by \Cref{lem:vio-edges-lb}, there are at least $\varepsilon N/4$ violating edges for $f'$, and thus at least $\varepsilon N/4$ violating edges for $f$ with both vertices having Hamming weight at least $n - 2\log_2N$.

    For each violating edge both of whose vertices have Hamming weight at least $n - 2\log_2N$, the probability that the edge is $\{x, x \oplus e_i\}$ when $x$ is uniformly sampled from $f^{-1}(x)$ and $i$ is uniformly sampled such that $x_i = 0$, conditioning on $\|x\|_1 \ge n - 2\log_2N$, is at least $1/(N \cdot 2\log_2N)$. Since there are at least $\varepsilon N/4$ such violating edges, the probability that \ref{alg2} encounters one of them and thus outputs ``not monotone'' is at least
    \[1 - \left(1 - \frac{\varepsilon N/4}{2N\log_2N}\right)^a = 1 - \left(1 - \frac\varepsilon{8\log_2N}\right)^{16\log_2(N)/\varepsilon} \ge 2/3.\]
\end{itemize}\end{flushleft}

Hence, when $f$ is $\varepsilon$-relatively-far from monotone, \ref{alg2} outputs ``not monotone'' with probability at least $2/3$.

\subsection{An $O(\log(N)/\eps)$-complexity algorithm that does not know $N$}

In this subsection, we consider the case when the algorithm does not have prior knowledge of $N = |f^{-1}(1)|$, and prove the following, which is a restatement of \Cref{main:upperbound}:

\maintheoremone*

For this setting, the intuition is straight-forward: we first somehow get a rough estimate of $N$, and then proceed as in the previous section as if we knew $N$.

Let $\hat k$ be the $(\varepsilon N/2)$-th-least Hamming weight among elements in $f^{-1}(1)$, that is, the $(\varepsilon N/2)$-th smallest item in (the multi-set) $\{\|x\|_1 : x \in f^{-1}(1)\}$. We will estimate $\hat k$, or more precisely, estimate a lower bound for $\hat k$, in the algorithm. Recall that for monotone $f$ and $x \in f^{-1}(1)$, all $y \succ x$ should satisfy $f(y) = 1$, so it is easy to see that $N = |f^{-1}(1)| \ge 2^{n - \hat k}$, and thus intuitively, $\hat k$ gives useful information about the size of $N$ for monotone $f$.

Now we describe the algorithm as in \textbf{Algorithm 3}. Let $\delta \in (0, 1/2)$ be a parameter.

\begin{figure}[t!]
\begin{mdframed}
    \begin{center}
        \textbf{\setword{Algorithm 3}{alg3}}: An algorithm that is not given $N$
    \end{center}
    \hspace{0.1cm}
    
    \textbf{Input}: $n$, $\eps$, and access to the oracles $\MQ(f)$ and $\SAMP(f)$.
    
    \textbf{Output}: Either ``monotone'' or ``not monotone''.
    \begin{flushleft}\begin{enumerate}
       
        \item Query the sample oracle $\SAMP(f)$ for $c := 8\log_2(1/\delta)/\varepsilon$ times.
        Let $z$ be the sample that has the least Hamming weight (if there are multiple such $z$, choose an arbitrary one); set $k := \|z\|_1$. Uniformly sample $y_1, \dots, y_b \in \{y \in \{0, 1\}^n : y \succ z\}$ where $b := 4\log_2(1/\delta)$. If there exists $j \in [b]$ such that $f(y_j) = 0$, output ``not monotone'' and halt.
        
        \item Query the sample oracle $\SAMP(f)$ for $a := 16(n - k)/\varepsilon$ times.
       \item For every sample $x$ received:
        \begin{itemize}
            
            \item Uniformly sample $i$ from $\{i \in [n] : x_i = 0\}$, and query $f(x \oplus e_i)$. If $f(x \oplus e_i) = 0$ (which means $\{x, x \oplus e_i\}$ is a violating edge), output ``not monotone'' and halt. 
        \end{itemize}
        \item If the algorithm has not output ``not monotone'' so far,  output ``monotone''.
    \end{enumerate}\end{flushleft}
\end{mdframed}
\end{figure}

It is easy to see that 
the algorithm always outputs ``monotone'' when $f$ is monotone given that 
  it only returns ``not monotone'' when a violation to monotonicity is found. Below we consider the case when $f$ is $\varepsilon$-relatively-far from monotone
and show that the algorithm returns ``not monotone'' with probability at least $2/3$; we analyze its query complexity at the end of the proof.

We first show that at the end of Step 1, with probability at least $\max\{5/6, 1 - \delta\}$, either the algorithm outputs ``not monotone'' or $k$ satisfies 
\begin{equation}\label{eq:hehe}
n - 2\log_2N \le k \le \hat k.
\end{equation} 
Note that the number of elements in $f^{-1}(1)$ that have Hamming distance at most $\hat k$, by the choice of $\hat k$, is at least $\varepsilon N/2$, so the probability that one of them appears as a sample in the first part of  Step 1 (and thus $k \le \hat k$), is at least 
\[1 - \left(1 - \frac{\varepsilon N/2}N\right)^c \ge 1 - \left(1 - \frac\varepsilon2\right)^{8\log_2(1/\delta)/\varepsilon} \ge \max\left\{\frac{11}{12}, 1 - \frac\delta2\right\}.\] 
On the other hand:
\begin{itemize}
    \item If $\|z\|_1 < n - 2\log_2N$ in Step 1, the probability that  $f(y_j) = 0$ for some $j\in [b]$ is at least 
    \[1 - \left(\frac N{2^{2\log_2N}}\right)^b \ge 1 - \frac1{N^{4\log_2(1/\delta)}} \ge \max\left\{\frac{11}{12}, 1 - \frac\delta2\right\},\]
    in which case the algorithm will output ``not monotone''. 

\end{itemize}
Therefore, by a union bound, with probability at least $\max\{5/6, 1 - \delta\}$, either the algorithm outputs ``not monotone'' in Step 1, or $k$ satisfies \Cref{eq:hehe}.

Assuming \ref{alg3} does not output ``not monotone'' in Step 1 and  $k$ satisfies \Cref{eq:hehe}, 
  we show that with probability at least $5/6$, the algorithm outputs ``not monotone'' in Step 3. 
To this end, we bound the probability of an $x$ from $\SAMP(f)$
  leading Step 3 to return ``not monotone'':

\begin{flushleft}\begin{itemize}
\item 
Define $f': \{0, 1\}^n \to \{0, 1\}$ such that $f'(t) = f(t)$ if $t$ has Hamming weight at least $k$ and $f'(t) = 0$ otherwise. Since $k \le \hat k$, we have $\ham(f', f) \le \varepsilon N/2$, so by triangle inequality, the Hamming distance between $f'$ and any monotone function is at least $\varepsilon N/2$. Therefore, by \Cref{lem:vio-edges-lb}, there are at least $\varepsilon N/4$ violating edges for $f'$, and thus at least $\varepsilon N/4$ violating edges for $f$ with both vertices having Hamming weight at least $k$.
As a result, when $x$ is sampled from $\SAMP(f)$, the probability that Step 3 returns ``not monotone'' is at least
$$
\frac{\eps}{4}\cdot \frac{1}{n-k},
$$
where the $\eps/4$ is the probability that $x$ is in one of these $\eps N/4$ violating edges and 
  $1/(n-k)$ is the probability that the $i$ sampled in Step 3 is exactly the direction of the edge that $x$ lies in.
\end{itemize}\end{flushleft}
 As a result, Step 3 returns ``not monotone'' with probability at least 
    \[1 - \left(1 - \frac{\varepsilon }{4(n - k)}\right)^a = 1 - \left(1 - \frac{\varepsilon}{4(n-k)}\right)^{16(n-k)/\eps} \ge \frac{5}{6},\]
and therefore, when $f$ is $\varepsilon$-relatively-far from monotone, \ref{alg3} outputs ``not monotone'' with probability at least $2/3$.

The number of oracle calls \ref{alg3} makes is at most \[8\log_2(1/\delta)/\eps + 4\log_2(1/\delta) + 2 \times 16(n-k)/\varepsilon = O(\log(1/\delta)/\eps + \log(N)/\varepsilon)\]
when $n - 2\log_2N \le  k$ holds from \Cref{eq:hehe}, which happens with probability at least  $1-\delta$.


\newcommand{\cyan}[1]{{\color{cyan}{#1}}}
\def\TT{{\bT}}
\def\ff{{\boldf}}
\def\calE{\mathcal{E}}
\def\Dy{\mathcal{D}_{\text{yes}}}
\def\Dn{\mathcal{D}_{\text{no}}}
\def\bGamma{\boldsymbol{\Gamma}}
\def\HH{{\bH}}
\def\hh{{\bh}} 
\def\Ey{\mathcal{E}_{\text{yes}}}
\def\En{\mathcal{E}_{\text{no}}}
\def\frakP{\mathfrak{P}}
\def\ALG{\text{ALG}}
\def\outc{\mathsf{outcome}}

\section{A two-sided non-adaptive lower bound}
\label{sec:two-sided-non-adaptive-lb}

The goal of this section is to prove the following two-sided, non-adaptive lower bound for relative-error monotonicity testing:  

\begin{theorem}~\label{thm:two-sided-non-adaptive-lb-hehe}
Let $\smash{N={n\choose 3n/4}}$. There is a constant $\eps_0>0$ such that any two-sided, non-adaptive algorithm for testing whether a function $f$ with $|f^{-1}(1)|=\Theta(N)$ is monotone or has
  relative~distance~at least $\eps_0$ from monotone functions must make $\tilde{\Omega}(\log N)$ queries.
\end{theorem}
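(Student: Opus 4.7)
The plan is to apply Yao's minimax principle. I will construct distributions $\Dyes$ and $\Dno$ over Boolean functions on $\{0,1\}^n$ that are supported on \emph{two-layer functions} --- each drawn $f$ satisfies $f(x) = 0$ for $\|x\|_1 < 3n/4$ and $f(x) = 1$ for $\|x\|_1 > 3n/4+1$, while $f$ on layers $3n/4$ and $3n/4+1$ is determined by the randomness of the distribution. Standard binomial estimates then give $|f^{-1}(1)| = \Theta(N)$. Functions from $\Dyes$ will be monotone by construction, and functions from $\Dno$ will have $\reldist(f, \cmon) \geq \eps_0$ with high probability. Because both distributions agree identically outside the two middle layers, the reduction sketched around Claim~\ref{claim:two-layer1} lets me simulate the $\SAMP(f)$ oracle using deterministic layer-$(3n/4+1)$ samples together with $\MQ(f)$ queries to uniformly random layer-$3n/4$ points; this reduces the problem to proving a non-adaptive $\tilde{\Omega}(\log N)$ lower bound against $\Dyes, \Dno$ for testers that use $\MQ$ only.

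For the construction, I will use a two-layer variant of the Talagrand DNF of \cite{BB15,CWX17stoc}. Fix constants $\alpha \in (0, 1/4)$ and $\beta > 0$ to be tuned, and draw $M = 2^{\beta n}$ i.i.d.\ random subsets $T_1, \ldots, T_M \subset [n]$ each of size $\alpha n$. For $\Dyes$, set $f(x) = 1$ on layers $3n/4$ and $3n/4+1$ exactly when $\mathrm{supp}(x) \supseteq T_j$ for some $j$; the resulting function is a monotone DNF on these layers, hence globally monotone. For $\Dno$, I will attach to each term $T_j$ a random ``signing'' coordinate $i_j \in [n] \setminus T_j$ and modify the layer-$(3n/4+1)$ outputs (e.g., declaring $f(y) = 0$ on some $y$ whose support contains $T_j$ but also $i_j$) so as to produce $\Omega(\eps_0 N)$ monotonicity-violating edges between the two layers, while preserving per-point marginals to leading order. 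By an asymmetric variant of the matching argument underlying \Cref{lem:vio-edges-lb}, such a quantity of violating edges forces $\reldist(f, \cmon) = \Omega(\eps_0)$.

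The indistinguishability analysis bounds the total-variation distance between the view of any non-adaptive $q$-query $\MQ$-only tester under $\Dyes$ vs.\ $\Dno$. A single query distinguishes the two distributions only via the $O(M)$ exceptional points where the layer-$(3n/4+1)$ behaviors of $\Dyes$ and $\Dno$ differ; a uniformly placed query hits this set with probability $O(M/N)$. The joint distribution will be controlled by a collision argument: two queries become correlated only through a shared term $T_j$ or a coincidence among the signings, and the expected number of such correlating pairs across $q$ queries is $O(q^2 M^2 / N^2)$ up to polylogarithmic factors. Choosing $M = N / (\log N)^{C}$ for sufficiently large constant $C$ and $q = \tilde{o}(\log N)$ drives all these error terms to $o(1)$, giving the desired TV bound below $1/3$ by a standard $\chi^2$- or hybrid-argument estimate.

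The principal technical obstacle is threading three competing parameter constraints: (i) $M$ must be large enough --- roughly $\Omega(N / \mathrm{polylog}(N))$ --- for $\Dno$ to have $\Omega(\eps_0 N)$ violations and hence constant relative distance; (ii) $M$ must be small enough that the per-query distinguishing advantage $O(qM/N)$ is negligible when $q = \tilde{o}(\log N)$; and (iii) the joint-correlation terms, which scale like $q^2 M^2 / N^2$ together with polylogarithmic overheads from the signings, must likewise vanish. The subtlest ingredient is the isoperimetric counting of violating edges between layers $3n/4$ and $3n/4+1$: unlike the symmetric middle-layer case of \cite{BB15}, these layers have asymmetric up-degree $n/4$ and down-degree $3n/4$, which forces a reworking of the matching/flow argument so that the violating edges produced by random terms stay unmatchable without distorting the marginals seen by individual queries. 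I expect this asymmetric isoperimetric step, rather than the collision bookkeeping or the $\SAMP$-to-$\MQ$ reduction, to be the main technical bulk of the proof.
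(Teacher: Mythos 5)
Your high-level strategy---two-layer functions supported on layers $3n/4$ and $3n/4+1$, a $\SAMP$-to-$\MQ$ reduction via \Cref{claim:two-layer1}, and a Talagrand-style DNF with exponentially many terms of linear size---matches the paper's approach. But the specific construction of $\Dyes$ and $\Dno$ has a fatal flaw: the per-point marginals are \emph{not} balanced, and you do not supply the mechanism that the paper relies on to balance them. As you describe it, $\Dyes$ is a fixed DNF of the random terms, so a layer-$(3n/4+1)$ point $y$ has $\Pr_{\Dyes}[f(y)=1]=\Pr[\mathrm{supp}(y)\supseteq T_j\text{ for some }j]$, while $\Dno$ additionally kills those $y$ whose support contains the signing coordinate $i_j$; conditioned on $\mathrm{supp}(y)\supseteq T_j$, the event $i_j\in\mathrm{supp}(y)$ has constant probability, so $\Pr_{\Dno}[f(y)=1]$ drops by a constant factor. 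A single non-adaptive query on the upper layer then distinguishes with constant advantage. The paper's construction is specifically engineered to avoid this: the value at a multiplexed point is governed by a random ``leaf'' function $\bh_{i}$ which is, in $\Eyes$, a uniform dictator $x_k$ with probability $2/3$ and the all-$0$ function with probability $1/3$, and in $\Eno$, a uniform anti-dictator $\bar{x}_k$ with probability $2/3$ and the all-$1$ function with probability $1/3$; this makes $\Pr[\bh_i(x)=1]\approx 1/2$ in both cases at any fixed point $x$ of weight $\approx 3n/4$, so marginals agree to leading order and the distinguishing signal only shows up through correlations among several queries landing in the same $P_i$. You assert ``preserving per-point marginals to leading order,'' but your DNF/signed-DNF pair does not achieve it, and it is not a cosmetic detail to fix---it is the central obstacle.

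Two further issues. First, your parameters are internally inconsistent: you need $M\approx N/\mathrm{polylog}(N)\approx 2^{H(3/4)n}/\mathrm{poly}(n)$, but with terms of size $\alpha n$ for a constant $\alpha<1/4$, the expected number of terms contained in a random layer-$3n/4$ point is $M\cdot\binom{3n/4}{\alpha n}/\binom{n}{\alpha n}\approx M\cdot (3/4)^{\alpha n}$, which is exponentially large for that choice of $M$. This is far outside the unique-satisfaction regime that the multiplexer (and your $O(qM/N)$ and $O(q^2M^2/N^2)$ error accounting) depends on; the paper instead uses $L=(4/3)^n$ terms each drawn as a random map $[n]\to[n]$, matched so that a random weight-$3n/4$ point satisfies $\Theta(1)$ terms in expectation. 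Second, your guess that the ``asymmetric isoperimetric step'' is the main technical bulk is misplaced: the paper's far-from-monotone argument (\Cref{lem:nonmono}) is a direct, short counting of disjoint violating pairs. The real technical bulk is the indistinguishability proof, which is not a collision/$\chi^2$ bound but an outcome-based decomposition (\Cref{lem:prune-non-adaptive}, \Cref{lem:non-adaptive2}): one conditions on the term structure $\bT$, shows that with high probability over the first-stage samples the queries fall into a ``good'' configuration where each $\bP_i$ has small diameter, and then shows that on good configurations the likelihood ratio between $\Eyes$ and $\Eno$ is $1\pm O(|P_i|\log n/n)$ per term, which multiplies out to $1\pm\alpha$ when $q=n/(C_0\log n)$.
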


\subsection{A useful class of functions for lower bounds: Two-layer functions} \label{sec:two-layer}

We say $f:\{0,1\}^n\rightarrow \{0,1\}$ is a \emph{two-layer} function if
\begin{equation} \label{eq:two-layer}
    f(x) = \begin{cases} 1 &\textrm{if } \Vert x \Vert_1 > 3n/4 + 1 \\ 
    0 &\textrm{if } \Vert x \Vert_1 < 3n/4 
    \end{cases}
\end{equation}
All functions used in our lower bound proofs in the next two sections
  are two-layer functions.

One reason that two-layer functions are helpful for lower bound arguments is because the $\SAMP(f)$ oracle is not needed for two-layer testing --- it can be simulated, at low overhead, with the $\MQ$ oracle. This means that to prove a lower bound in our relative-error model for two-layer functions, it is sufficient to consider the slightly simpler setting in which the testing algorithm's only access to the unknown two-layer function $f$ is via the $\MQ$ oracle.

\begin{claim} 
\label{claim:two-layer1}
If $f: \zo^n \to \zo$ is a two-layer functions, then for any constant $\tau > 0$, making $q$ calls to the $\SAMP(f)$ oracle can be simulated, with success probability at least $1-\tau$, by making $O(q)$ calls to the $\MQ$ oracle. 
\end{claim}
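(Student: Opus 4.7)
The plan is to exploit the fact that for a two-layer function, almost all of $f^{-1}(1)$ is determined by the two-layer definition itself. Concretely, set $A := \{x \in \zo^n : \|x\|_1 \geq 3n/4\}$ and $U := \{x \in \zo^n : \|x\|_1 > 3n/4 + 1\}$; by \eqref{eq:two-layer}, $U \subseteq f^{-1}(1) \subseteq A$, and both $A$ and $U$ are completely determined without any access to $f$. This suggests the following rejection-sampling simulation of a single call to $\SAMP(f)$: draw $\bx$ uniformly from $A$ (requiring no oracle access), make one query to $\MQ(f)$ on $\bx$, output $\bx$ if $f(\bx)=1$, and otherwise retry. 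Conditioned on acceptance, $\bx$ is uniform over $f^{-1}(1) \cap A = f^{-1}(1)$ (the equality uses that $f$ vanishes strictly below weight $3n/4$), so the output distribution exactly matches $\SAMP(f)$.

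The heart of the argument is lower bounding the per-trial acceptance probability $p := |f^{-1}(1)|/|A| \geq |U|/|A|$ by an absolute positive constant independent of $n$. Using the ratio $\binom{n}{k+1}/\binom{n}{k} = (n-k)/(k+1)$, which is at most $1/3$ for every $k \geq 3n/4$, I would bound $|A| \leq \tfrac{3}{2}\binom{n}{3n/4}$ by geometric summation, and separately note that the same ratio is at least $1/4$ at $k \in \{3n/4, 3n/4+1\}$ once $n$ exceeds some absolute constant, yielding $|U| \geq \binom{n}{3n/4+2} \geq \tfrac{1}{16}\binom{n}{3n/4}$. Combining these gives $p \geq 1/24$, a constant independent of $n$ (for small $n$ the claim is trivial, since $A$ and hence $f^{-1}(1)$ is a set of constant size).

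Finally, to simulate $q$ independent $\SAMP(f)$ calls with overall failure probability at most $\tau$, the simulator performs $C \cdot q$ independent rejection trials (for a sufficiently large constant $C = C(\tau)$ depending only on $p$ and $\tau$) and halts upon its $q$-th acceptance. A multiplicative Chernoff bound applied to the number of acceptances (a sum of i.i.d.\ Bernoullis with mean $\geq p$) shows this procedure succeeds with probability at least $1 - \tau$, and since each trial uses exactly one $\MQ(f)$ call, the simulation uses $O(q)$ calls to $\MQ(f)$ in total. The only subtlety is maintaining exact distributional correctness across retries, which is immediate from the fact that rejection sampling is an \emph{exact} (not approximate) simulation of the target distribution. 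I expect no step to pose a serious obstacle: the binomial arithmetic is routine and the Chernoff bound is standard.
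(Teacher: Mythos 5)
Your proposal is correct and takes essentially the same approach as the paper: rejection-sample uniformly from the slice $\{x : \|x\|_1 \geq 3n/4\}$, query $\MQ(f)$, keep the first $q$ accepted points, and use the fact that an $\Omega(1)$ fraction of this slice satisfies $f$ (via the constant ratio of consecutive binomial coefficients near weight $3n/4$) together with a Chernoff bound. The paper states this more tersely but the simulation, the lower bound on acceptance probability, and the concentration argument are all the same.
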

\begin{proof}
The simulation works by simply making $Cq$ (for a suitable large constant $C$) many $MQ(f)$ queries on independent uniform random points $\bx \in \zo^n$ that satisfy $\|\bx\|_1 \geq 3n/4$, and using the first $q$ points for which $f(\bx)=1.$
It is clear that each such point $\bx$ in the sample (for which $f(\bx)=1$) corresponds to an independent draw from $\SAMP(f)$; correctness follows from a simple probabilistic argument using the fact that for any two-layer function $f$, at least an $\Omega(1)$ fraction of all $n$-bit strings with Hamming weight at least $3n/4$ are satisfying assignments of $f$.
\end{proof}

\begin{remark} \label{remark:two-layer-flexible}
In \Cref{thm:two-sided-non-adaptive-lb-hehe} as well as the rest of the paper,
  we work with two-layer functions as described above, where the two layers in question are $3n/4$ and $3n/4+1$. This choice is made just for concreteness to aid with readability; it is easy to verify that the definition of two-layer functions could be altered to use the two layers $\alpha n$ and $\alpha n + 1$, for any constant $\alpha \in (1/2, 1)$, and that 
  \Cref{thm:two-sided-non-adaptive-lb-hehe} would still go through with $N=\Theta({n\choose \alpha n})$.
\end{remark}

To see that \Cref{thm:two-sided-non-adaptive-lb-hehe} implies \Cref{main:firstlowerbound},
  we first note that for any choice of the parameter $\smash{N\le {n\choose 3n/4}}$, there exists a positive integer $k\le n$
  such that $\smash{N=\Theta({k\choose 3k/4})}$.
The desired $\tilde{\Omega}(\log N)$ lower bound for relative-error testing of functions with sparsity $\Theta(N)$
  can then be obtained 
  from a routine reduction to \Cref{thm:two-sided-non-adaptive-lb-hehe} (with $n$ set to $k$) by
  embedding in a suitable subcube of $\zo^n$ using
  functions 
  $f: \zo^n \to \zo$ of the form $$f(x_1,\dots,x_n) = (x_{k+1} \wedge \cdots \wedge x_n) \wedge f'(x_1,\dots,x_k).$$ 
Moreover, as discussed in \Cref{remark:two-layer-flexible}, $3/4$ can be replaced
  by any constant $\alpha\in (1/2,1)$. Choosing $\alpha$ to be sufficiently close to $1/2$ extends
  the lower bound to any $N\le 2^{\alpha_0 n}$ for any constant $\alpha_0<1$.
  
As it will become clear in \Cref{subsec:dist1}, all functions used in our lower bound proof
  are two-layer functions.
We will prove that any 
deterministic, ``two-stage'' (see \Cref{def:two-stage})
$\MQ$-only algorithm for testing
  two-layer functions with success probability $(99/100)\cdot 2/3$ must make $\tilde{\Omega}(n)$ queries.
As discussed at the end of \Cref{subsec:dist1}, this implies \Cref{thm:two-sided-non-adaptive-lb-hehe}.

\subsection{Distributions $\Dyes$ and $\Dno$}\label{subsec:dist1}
Our proof is an adaptation of the $\tilde{\Omega}(\sqrt{n})$ lower bound from \cite{CWX17stoc}  for two-sided non-adaptive monotonicity testing in the standard model.  

To describe the construction of the yes- and no- distributions $\Dyes$ and $\Dno$, we begin by describing the distribution $\calE$. $\calE$ is uniform over all tuples  $T = (T_i: i\in [L])$, where $L:=(4/3)^n$ and $T_i:[n]\rightarrow [n]$.
Equivalently, to draw a tuple $\TT\sim \calE$,
  for each $i\in [L]$, we sample a random $\TT_i$ by sampling 
  each $\bT_i(k)$ independently and
uniformly (with replacement) from $[n]$ for each $k\in [n]$.
We will refer to $T_i$ as the $i$-th term in $T$ and 
  $T_i(k)$ as the $k$-th variable of $T_i$.
Given a term $T_i:[n]\rightarrow [n]$, we abuse the notation to 
  use $T_i$ to denote the Boolean function over $\{0,1\}^n$ with
  $T_i(x)=1$ if $x_{T_i(k)}=1$ for all $k\in [n]$ and $T_i(x)=0$ otherwise.
  (So $T_i$ is a conjunction, which is why we refer to it as a term as mentioned above.)

A function $\ff \sim \Dy$ is drawn using the following procedure:
\begin{flushleft}\begin{enumerate}
\item Sample $\TT\sim\calE$ and use it to define the multiplexer map $\bGamma = \bGamma_{\TT} \colon \{0, 1\}^n \to [L] \cup \{ 0^*, 1^* \}$:
$$
\bGamma_\TT(x) =\begin{cases}
0^* & T_i(x)=0\ \text{for all $i\in [L]$} \\
1^* & T_i(x)=1\ \text{for at least two different $i\in [L]$}\\
i & T_i(x)=1\ \text{for a unique $i\in [L]$}.
\end{cases}
$$ 
\item Sample  $\HH = (\hh_{i} \colon i \in [L])$ from $\Ey$, where each $\hh_{i} \colon \{0, 1\}^n \to \{0,1\}$ is independently (1) with probability $2/3$, a random dictatorship Boolean function, i.e., $\hh_{i}(x) = x_k$ with $k$ sampled  uniformly at random from $[n]$; and (2) with probability $1/3$, the all-$0$ 
    function.
\item Finally, with $\TT$ and $\HH$  in hand, $\ff =f_{\TT,\HH}\colon \{0, 1\}^n \to \{0, 1\}$ is defined as follows: $\ff(x) = 1$ if $\Vert x \Vert_1 > 3n/4 + 1$; $\ff(x) = 0$ if $\Vert x \Vert_1 < 3n/4 $; if $ \Vert x \Vert_1\in \{3n/4,3n/4+1\}$, we have
\[ 
\ff(x) = \begin{cases} 0 & \bGamma_{\TT}(x) = 0^* \\
						1 & \bGamma_{\TT}(x) = 1^* \\
						\hh_{\bGamma(x)}(x) & \text{otherwise (i.e., $\bGamma_{\TT}(x) \in [L]$).} \end{cases} 
\]
\end{enumerate}\end{flushleft}
A function $\ff \sim\Dn$ is drawn using the same procedure, with the exception that $\HH = (\hh_{i} \colon i \in [L])$ is drawn from $\En$ (instead of $\Ey$): Each $\hh_i:\{0,1\}^n\rightarrow \{0,1\}$ is (1) with probability $2/3$ a random anti-dictatorship Boolean function $\hh_i(x) = \overline{x_k}$ with $k$ drawn uniformly from $[n]$; and (2) with 
    probability $1/3$, the all-$1$ function.  Note that every function $f$ in the support of either $\Dy$ or $\Dn$ is a two-layer function as defined in \Cref{eq:two-layer}.

Our construction of $\Dyes$ and $\Dno$ differs in various ways from the construction for the two-sided, non-adaptive lower bound in \cite{CWX17stoc}, such as the number of terms and the biasing of the $\bh_i$ functions in $\Ey$ and $\En$.

In the next two lemmas we show that functions in the support of $\Dyes$ are monotone 
  and $\ff\sim\Dno$ is likely to have large relative distance from monotone functions.

\begin{lemma}\label{lem:mono}
Every function in the support of $\Dy$ is monotone. 
\end{lemma}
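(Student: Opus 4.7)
The plan is to verify monotonicity edge-by-edge: it suffices to show that for every $f$ in the support of $\Dy$, every $x\in \{0,1\}^n$, and every coordinate $k$ with $x_k=0$, we have $f(x)\le f(x\oplus e_k)$. Writing $y := x\oplus e_k$, we split the argument by $\|x\|_1$.

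The two-layer structure \eqref{eq:two-layer} handles most weights automatically: if $\|x\|_1 < 3n/4-1$ then $\|y\|_1 < 3n/4$ and both values are $0$; if $\|x\|_1 = 3n/4-1$ then $f(x)=0$ already; if $\|x\|_1 = 3n/4+1$ then $\|y\|_1>3n/4+1$ so $f(y)=1$; and if $\|x\|_1>3n/4+1$ then both values are $1$. Therefore the only nontrivial case is $\|x\|_1 = 3n/4$ and $\|y\|_1 = 3n/4+1$, for which $f(x)$ and $f(y)$ are both controlled by $\bGamma_\TT$ and $\HH$.

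For this case I would use the key observation that each term $T_i$ is a conjunction of literals $x_{T_i(k)}$, hence a monotone Boolean function. Consequently, if $T_i(x)=1$ and $x\preceq y$, then $T_i(y)=1$. This gives the following implication on the multiplexer map: $\bGamma_\TT(x)=1^*$ forces $\bGamma_\TT(y)=1^*$, and $\bGamma_\TT(x)=i\in[L]$ forces $\bGamma_\TT(y)\in\{i,1^*\}$. I would then case split:
\begin{itemize}
    \item If $\bGamma_\TT(x)=0^*$, then $f(x)=0$, so there is nothing to check.
    \item If $\bGamma_\TT(x)=1^*$, then $\bGamma_\TT(y)=1^*$ by the observation above, so $f(y)=1$.
    \item If $\bGamma_\TT(x)=i\in[L]$ and $\bGamma_\TT(y)=1^*$, then $f(y)=1$.
    \item If $\bGamma_\TT(x)=\bGamma_\TT(y)=i\in[L]$, then $f(x)=\hh_i(x)$ and $f(y)=\hh_i(y)$. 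Because $\HH$ is drawn from $\Ey$, each $\hh_i$ is either the all-$0$ function or a dictator $x_j$, both of which are monotone, so $\hh_i(x)\le\hh_i(y)$.
\end{itemize}

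There is no serious obstacle here; the argument is entirely structural. The one subtlety worth flagging explicitly is that the ``middle-layer'' inequality hinges on the yes-case choice of $\Ey$ (dictators and the all-$0$ function) being monotone, which is precisely what breaks in $\En$ (anti-dictators and the all-$1$ function) and is what makes $\Dno$ typically far from monotone in the complementary lemma.
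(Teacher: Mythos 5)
Your proof is correct and takes essentially the same approach as the paper's: restrict to the only nontrivial edges (between layers $3n/4$ and $3n/4+1$) and use the monotonicity of the terms $T_i$ (as conjunctions) to track how $\bGamma_\TT$ can change along an edge. The only cosmetic difference is that you case-split on the value of $\bGamma_\TT(x)$ and invoke monotonicity of $\hh_i$ (covering both the all-$0$ and dictator cases at once), whereas the paper assumes $f(x)=1$ and deduces $f(y)=1$, which forces it to note separately that $\hh_{i'}$ must be a dictator; the underlying argument is the same.
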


\begin{proof}
Fix a $T$ from the support of $\calE$ and $H$ from the support of $\Ey$. This fixes a function $f$ in the support of $\Dy$.

Note that for two-layer functions, the only nontrivial points lie on layers $\{3n/4,3n/4+1\}$. Thus, consider any $x\prec y$ such that $\Vert x \Vert_1=3n/4$ and $\Vert y\Vert_1=3n/4+1$. Note that they in fact only differ on one bit. Assume that $f(x)=1$, and we will show that $f(y)=1$. 

Since $f(x)=1$, we know that $x$ satisfies at least one term. Thus $y$ satisfies at least one term. If $y$ satisfies multiple terms, then $f(y)=1$. So consider the case where $y$ satisfies a unique term $T_{i'}$. Since $f(x)=1$, it must be the case that $x$ also uniquely satisfies the term $T_{i'}$, which implies $h_{i'}\neq 0$. Thus, $h_{i'}=x_k$ for some $k\in[n]$. Since $f(x)=1$, we have $x_k=1$, which implies $y_k=1$ and $f(y)=1$. 
This finishes the proof of the lemma.
\end{proof}

\begin{lemma}\label{lem:nonmono}
A function $\ff \sim \Dn$ satisfies $\reldist(\ff,\cmon)=\Omega(1)$ with probability $\Omega(1)$.  
\end{lemma}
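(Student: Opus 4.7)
The strategy has three parts: (i) bound $|\ff^{-1}(1)|$ from above by $O\bigl(\binom{n}{3n/4}\bigr)$ deterministically for every $\ff$ in the support of $\Dno$; (ii) exhibit, with probability $\Omega(1)$ over $\ff\sim\Dno$, a \emph{matching} of $\Omega\bigl(\binom{n}{3n/4}\bigr)$ violating edges between the two nontrivial layers; and (iii) combine (i) and (ii) using that any monotone $g$ must disagree with $\ff$ on at least one endpoint of every edge in a violating matching.

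For (i), the ratio $\binom{n}{k+1}/\binom{n}{k} = (n-k)/(k+1) \le 1/3$ for $k \ge 3n/4$, so $\sum_{k \ge 3n/4}\binom{n}{k} = O\bigl(\binom{n}{3n/4}\bigr)$; as $\ff$ is zero below layer $3n/4$, this gives $|\ff^{-1}(1)| \le C_0\binom{n}{3n/4}$ for an absolute constant $C_0$, regardless of $\TT$ and $\HH$.

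For (ii), a case analysis on $\bGamma_\TT$ shows that $(x,y)$ with $\|x\|_1 = 3n/4$, $y = x + e_\ell$, $\ff(x) = 1$ and $\ff(y) = 0$ is a violating edge if and only if (a) $\bGamma_\TT(x) = i$ for some $i \in [L]$, (b) $\hh_i$ is an anti-dictatorship $\overline{x_\ell}$, (c) $x_\ell = 0$, and (d) $\bGamma_\TT(y) = i$. Crucially, given $y$, both $i = \bGamma_\TT(y)$ and $\ell$ (the dictator of $\hh_i$) are determined, so $x = y - e_\ell$ is uniquely recoverable; hence the violating pairs automatically form a matching. To count, fix a uniform $x$ at layer $3n/4$: the number of terms satisfied by $x$ is $\mathrm{Bin}(L,(3/4)^n)$ with mean $L(3/4)^n = 1$, so a Poisson-approximation (Talagrand-style) calculation gives $\Pr[\bGamma_\TT(x) \in [L]] = \Omega(1)$. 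Conditional on (a), the event (b) holds with probability $2/3$; the dictator $\ell$ is uniform in $[n]$ so $\Pr[x_\ell = 0] = 1/4$; and the conditional expected number of other terms satisfied by $y = x + e_\ell$ is $(L-1)(3/4 + 1/n)^n = O(1)$, so $\Pr[\text{(d)}\mid \text{(a),(b),(c)}] = \Omega(1)$. Hence the per-$x$ violation probability is $\Omega(1)$, so the expected number of violating $x$'s satisfies $\E[V] = \Omega\bigl(\binom{n}{3n/4}\bigr)$, and a second-moment (Paley--Zygmund) argument then gives $V = \Omega\bigl(\binom{n}{3n/4}\bigr)$ with probability $\Omega(1)$.

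For (iii), on the $\Omega(1)$-probability event that the matching has size $\ge c\binom{n}{3n/4}$, every monotone $g$ disagrees with $\ff$ on $\ge c\binom{n}{3n/4}$ points, so $\reldist(\ff,\cmon) \ge c/C_0 = \Omega(1)$. The main technical obstacle is the second-moment step: while the dictators in $\HH$ are fully independent across terms, the shared randomness in $\TT$ induces correlations between the unique-satisfaction events of different $x$'s, and one must verify that these correlations are mild enough (as in the Talagrand analyses of \cite{BB15, CWX17stoc}) in our new regime of $n$-variable terms with $L = (4/3)^n$ terms. The Talagrand-style unique-satisfaction bound and the stability of (d) under conditioning on (a) also need to be checked carefully in this parameter regime.
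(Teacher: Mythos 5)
Your outline matches the paper's proof through step (ii): the paper likewise upper-bounds $|\ff^{-1}(1)|$ by a constant times $\binom{n}{3n/4}$, identifies exactly the same four necessary conditions (a)--(d) for a violating edge, uses the observation that $y$ determines $i$ and $\ell$ (hence $x$) to get a matching, and shows the per-$x$ probability is $\Omega(1)$. The divergence, and the one genuine gap, is in passing from $\E[V] = \Omega\bigl(\binom{n}{3n/4}\bigr)$ to $\Pr\bigl[V = \Omega\bigl(\binom{n}{3n/4}\bigr)\bigr] = \Omega(1)$. You propose a Paley--Zygmund argument and then flag the correlation analysis for $\E[V^2]$ (due to the shared randomness of $\TT$ across different $x$'s) as the ``main technical obstacle.'' This obstacle does not exist: $V$ is \emph{deterministically} bounded by $\binom{n}{3n/4}$, the number of points in layer $3n/4$, and once a nonnegative random variable is bounded above by a quantity $B = O(\E[V])$, a one-line reverse-Markov argument (Markov applied to $B - V \ge 0$) already gives $\Pr[V \ge \E[V]/2] = \Omega(1)$. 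No second moment, and hence no correlation estimate, is needed; the paper's proof does exactly this. (Equivalently, if you insist on Paley--Zygmund, the trivial bound $\E[V^2] \le B \cdot \E[V] = O(\E[V]^2)$ suffices with no analysis of $\TT$-induced correlations at all.) So your plan is salvageable, but as written it mislocates the difficulty and would lead you into a substantial and entirely avoidable computation; notice the boundedness of $V$ and the whole step collapses.

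Two minor remarks. First, your sequential conditional decomposition of the per-$x$ probability (condition on $\bGamma_\TT(x)\in[L]$, then on $\hh_i$ being an anti-dictator, then on $x_\ell = 0$, then on condition (d)) is a valid alternative to the paper's route, which fixes $x$ and $k$ with $x_k = 0$ up front, computes the joint probability that $x$ and $y$ simultaneously uniquely satisfy a term, and then sums over the $n/4$ choices of $k$ using disjointness. Both give $\Omega(1)$. Second, for your step (iii) you do not need to invoke \Cref{lem:vio-edges-lb} (which goes the opposite direction); the elementary observation that any monotone $g$ must disagree with $\ff$ on at least one endpoint of each edge in the matching is all that is required, and this is what both you and the paper use.
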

\begin{proof}
We will show that with probability at least $0.0001$, the number of disjoint violating pairs for monotonicity is $\Omega(|\ff^{-1}(1)|)$. The lemma then follows directly from \Cref{lem:vio-edges-lb} and \Cref{remark:vio-edges}.

Fix a function $f$ from the support of $\Dn$ (equivalently, fix $T$ from the support of $\calE$ and $H$ from the support of $\En$.) Note that violating pairs can only appear at the two non-trivial layers $\{3n/4,3n/4+1\}$. Thus, at a high level, we want to show there are many disjoint pairs $x\prec y$ such that 1) $\Vert x \Vert_1=3n/4$ and $\Vert y \Vert_1=3n/4+1$, and 2) $f(x)=1$ and $f(y)=0$. 

Fix an $x\in\{0,1\}^n$ such that $\Vert x \Vert_1=3n/4$. Observe that if $x$ is involved in some violating pair, then it must be the case that $x$ satisfies a unique term $T_{i'}$ for $i'\in[L]$ and $h_{i'}=\overline{x_k}$ for some $k\in[n]$ such that $x_{k}=0$. So the conditions above are necessary. It is also easy to see if they hold, then $f(x)=1$, making $x$ a candidate to be part of some violating pair.

Let $y$ be the string that is obtained by flipping the $k$th bit of $x$. Ideally, we want to conclude that $f(y)=0$. Indeed, $h_{i'}(y)=0$, but we need to be careful to make sure that $y$ still satisfies the unique term $T_{i'}$. 

Formally, let $X$ be the set of all points $x$ such that 1) $\Vert x \Vert_1=3n/4$, 2) $x$ satisfies a unique term $T_{i'}$ for $i'\in[L]$, and $h_{i'}=\overline{x_k}$ for some $k\in[n]$ such that $x_{k}=0$, 3) $y$ (defined by flipping the $k$th bit of $x$) still satisfies the unique term $T_{i'}$.

Note that every $x \in X$ and the corresponding $y$ form a disjoint violating pair (since they must satisfy the unique term so that the anti-dictatorship function is well-defined), thus the number of disjoint violating pairs is lower bounded by the size of $X$. Note that $\bX$ is a random variable depending on the choices of $(\bT,\bC)$ and $\bH$. Next, we show that every fixed $x$ with $\Vert x \Vert_1=3n/4$ is in $\bX$ with constant probability.

\begin{claim} \label{claim:a}
  For each $x\in\{0,1\}^n$ such that $\Vert x \Vert_1=3n/4$, we have 
  
  \[
    \Prx_{\TT\sim\calE,\HH\in \En}[x\in \bX]\geq 0.01.
  \]
\end{claim}
\begin{proof}
We show the claim by calculating the probabilities in two steps. Fix $x\in\{0,1\}^n$ such that $\Vert x \Vert_1=3n/4$ and $k\in[n]$ such that $x_k=0$. Let $y$ be $x$ but with the $k$-th coordinate changed to 1. Note that $\Vert y \Vert_1=3n/4+1$.

First, the probability that $x$ and $y$ uniquely satisfy a term $\TT_i$ for some $i\in[L]$ is 
\[
L\cdot \left(\frac{3}{4}\right)^n\cdot \left(1-\left(\frac{3}{4}+\frac{1}{n}\right)^n\right)^{L-1}\geq 1/e-0.01\geq 0.35,
\]
for some sufficiently large $n$.

Fix such an $i'\in[L]$. The probability that $\hh_{i'}=\overline{x_k}$ is $\frac{2}{3n}$. Since $x$ has $n/4$ coordinates that are 0, we have
\[
\Prx_{\TT\sim\calE,\HH\in \En}[x\in \bX]\geq 0.35\cdot 1/6\geq 0.05.
\]
This finishes the proof of \Cref{claim:a}.
\end{proof}

Note that the number of points on the $3n/4$ layer is $${n\choose 3n/4}\geq \frac{|f^{-1}(1)|}{100}.$$ Thus, the expected size of $\bX$ is at least $|f^{-1}(1)|/10000$. Thus, we have $|\bX|=\Omega(|f^{-1}(1)|)$ with probability at least 0.0001.
This finishes the proof of \Cref{lem:nonmono}.
\end{proof}

Let $\eps_0$ and $c$ be the two constants hidden in the  $\Omega(1)$'s in 
  \Cref{lem:nonmono}, i.e., a function $\ff\sim \Dno$ has relative distance to monotonicity at 
  least $\eps_0$ with probability at least $c$.
Let $\alpha=\min(\eps_0,c)/100$.

We will consider
deterministic, $q$-query, 
  (two-stage) non-adaptive, $\MQ$-only algorithms $\ALG$,
  which are defined as follows.
  
  \begin{definition} \label{def:two-stage}
  A \emph{deterministic, $q$-query, 
  (two-stage) non-adaptive, $\MQ$-only algorithms $\ALG$} runs on $\Dyes$ and $\Dno$ as follows.  
Upon an input function $f$ from either $\Dyes$ or $\Dno$, 
\begin{flushleft}\begin{enumerate}
    \item In the first stage, ALG draws a sequence of $q$ samples $\bS$ independently and uniformly at random from the set of all points at or above layer $3n/4$, and queries all of them.
\item In the second stage, based on $\bS$ and the query results (denoted by $f(\bS)$,
  a $q$-bit string), ALG picks deterministically
  a set $\bQ=Q(\bS,f(\bS))$ of $q$ points to query,
  and finally accepts or rejects based on all the information in $(\bQ,\bS,f(\bQ\cup \bS))$.
\end{enumerate}\end{flushleft}
  \end{definition}

In order to prove \Cref{thm:two-sided-non-adaptive-lb-hehe}, it suffices to prove the following lemma:

\begin{lemma} \label{non-adaptive-mono-bound}
Let $q=n / (C_0\log n)$ for some sufficiently large constant $C_0$.
For any deterministic, $q$-query, (two-stage) non-adaptive, $\MQ$-only algorithm $\ALG$, we have 
\[ \mathop{\Pr}_{\ff\sim\Dy}\big[\ALG \text{ accepts }\ff\big] \leq (1+\alpha)\cdot \mathop{\Pr}_{\ff\sim\Dn}\big[\ALG\text{ accepts }\ff\big] + 2\alpha. \] 
\end{lemma}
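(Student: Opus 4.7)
The plan is to bound the total variation distance
\[
\Delta \;:=\; \dtv\!\big(\,(\bS, \ff(\bS), \bQ, \ff(\bQ))_{\Dy},\ (\bS, \ff(\bS), \bQ, \ff(\bQ))_{\Dn}\big),
\]
since ALG's accept/reject is a deterministic function of this view and so $\Delta \le \alpha$ implies $|\Pr_{\Dy}[\text{accept}] - \Pr_{\Dn}[\text{accept}]| \le \alpha$, which is stronger than the stated inequality. I would couple $\bS$ across the two distributions (it is sampled identically and independently of $\ff$) and exploit that $\bQ = Q(\bS, \ff(\bS))$ is a deterministic function of $(\bS, \ff(\bS))$, so it suffices to bound the conditional TV distance of $(\ff(\bS), \ff(\bQ))$ given $\bS$.

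First I would reduce to the two nontrivial layers $\{3n/4,\, 3n/4+1\}$: every $\ff$ in the support of either distribution is a two-layer function, so queried points off these layers contribute nothing to $\Delta$. Let $\bP^* \subseteq \bS \cup \bQ$ denote the queried points on the nontrivial layers, so $|\bP^*| \le 2q$. For $p \in \bP^*$ write $I(p) := \{i \in [L] : \bT_i(p) = 1\}$. The joint distribution of $(I(p))_{p \in \bP^*}$ depends only on $\bT$ and is therefore identical under $\Dy$ and $\Dn$; conditioning on $\bT$, the values $\ff(p)$ are deterministic (equal to $0$ or $1$) when $|I(p)| \neq 1$, and satisfy $\ff(p) = \bh_{i(p)}(p)$ (with $I(p) = \{i(p)\}$) otherwise. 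Grouping by uniquely satisfied term, set $P_i := \{p \in \bP^* : I(p) = \{i\}\}$; since the $\bh_i$ are independent across $i$, the hybrid lemma yields
\[
\Delta \;\le\; \sum_{i\,:\,P_i \neq \emptyset}\, \dtv\!\big((\bh_i(P_i))_{\Ey},\ (\bh_i(P_i))_{\En}\big).
\]

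The heart of the argument is a per-bundle TV computation that exploits the specific $(2/3, 1/3)$ mixing weights in $\Ey$ and $\En$. For a singleton bundle $\{p\}$ on layer $3n/4$, one verifies
$\Pr_{\Ey}[\bh_i(p) = 1] = (2/3)(3/4) = 1/2 = 1/3 + (2/3)(1/4) = \Pr_{\En}[\bh_i(p) = 1]$
\emph{exactly}, so the TV contribution is $0$; for singletons on layer $3n/4+1$ and for small bundles of constant size, a direct inclusion-exclusion computation (computing each of the at most $2^{|P_i|}$ atoms under both distributions) shows the per-bundle TV is $O(|P_i|/n)$. Under a good event $\mathcal{G}$ asserting that no $|P_i|$ exceeds some absolute constant $k_0$, summing yields $\Delta \le |\bP^*|\cdot O(1/n) = O(q/n) = O(1/\log n)$, which is at most $\alpha/2$ for $C_0$ chosen sufficiently large.

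The step I expect to be the main obstacle is bounding $\Pr[\mathcal{G}^c] \le \alpha/2$ under the \emph{adaptive} choice of $\bQ$: an adversary could in principle pack $\bQ$ with points of very high mutual overlap to force large bundles. I would handle this via deferred decisions on $(\bT, \bH)$: reveal only the portion of $\bT$ needed to compute $\ff(\bS)$ (which pins down the terms satisfied by the $q$ sample points and so depends on $O(q)$ ``relevant'' terms), fix the algorithm's choice of $\bQ$ from this partial view, and then argue that over the residual randomness in $\bT$ the probability of any bundle of size exceeding $k_0$ is $o(1)$. The parameter tuning $L = (4/3)^n$ with $n$-variable terms is calibrated so that for a $k$-tuple of points to share a unique term with non-negligible probability, their common intersection must be within $O(1)$ of the maximal $3n/4$; this forces $k$-tuples that contribute to $\mathcal{G}^c$ to lie in a thin combinatorial ``shell'' whose cardinality is small enough that a union bound over the $\binom{|\bP^*|}{k} = O(q^{k_0})$ such tuples against the $L$ terms gives total failure probability $o(1)$. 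These technical details mirror and adapt the two-sided non-adaptive argument of \cite{CWX17stoc} to the two-layer setting.
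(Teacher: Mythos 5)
Your high-level framework (couple $\bS$, reduce to the two nontrivial layers, condition on $\bT$ so the bundles $P_i$ are determined, factor over independent $\bh_i$ via a hybrid argument, and bound a total variation distance rather than a likelihood ratio) is a legitimate alternative to the paper's ratio-based argument, and your observation that a singleton on layer $3n/4$ contributes \emph{exactly} zero to the TV under the $(2/3,1/3)$ mixture is correct and pleasant. However, the argument as written has a genuine gap in the choice of good event, and a per-bundle TV claim that is false as stated.

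First, $\mathcal{G}=\{|P_i|\le k_0 \text{ for all }i\}$ is not an event you can make hold with high probability, and it is also not the right condition to protect the per-bundle TV bound. A two-stage non-adaptive algorithm can, after the sample stage, take a single sampled point $x^*$ on layer $3n/4$ and query in its second stage all (or many of) the $q$ points $x^*\oplus e_k$ with $x^*_k=0$. Any term satisfied uniquely by $x^*$ is automatically satisfied by every such neighbor, so these queries can all land in one bundle $P_i$ of size up to $q$. Thus $\Pr[\mathcal{G}^c]$ need not be small, no matter how you tune $k_0$. The paper sidesteps this by \emph{not} bounding $|P_i|$ at all; its good event (Definition \ref{def:bad-events}) constrains only the pairwise 1-overlap within each $P_i$ (to be at least $3n/4-100\log n$) and the constancy of $\rho_i$ on $P_i$. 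In the scenario just described, all pairwise overlaps are $\ge 3n/4$, so the paper's good event does hold even though yours fails.

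Second, the claim that the per-bundle TV is $O(|P_i|/n)$ for constant-size bundles is false without an overlap hypothesis. Take three queried points $x^1,x^2,x^3$ on layer $3n/4$ that all satisfy the same unique term $T_i$, whose (distinct) variable positions number $\approx(1-1/e)n$, and arrange the extra 1-positions of the three points to be pairwise disjoint. Then $|\{k:x^1_k=x^2_k=1,\,x^3_k=0\}|=0$ while $|\{k:x^1_k=x^2_k=0,\,x^3_k=1\}|=\Omega(n)$, so $\Pr_{\Ey}[\hh_i=(1,1,0)]=0$ while $\Pr_{\En}[\hh_i=(1,1,0)]=\Omega(1)$. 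That single atom already forces the per-bundle TV to be $\Omega(1)$, not $O(1/n)$. So bounding $|P_i|$ by a constant buys you nothing; you need the overlap condition itself as the good event, as the paper does. (Your ``thin combinatorial shell'' remark does point at the right intuition — low-overlap bundles are unlikely to form because a random term is unlikely to be satisfied simultaneously by points with small common 1-set — but you deploy it to control bundle \emph{size}, which is the wrong target. The correct target is low overlap, and this is exactly what Claim \ref{claim:lastone} in the paper establishes.) Even granting the overlap condition, the per-bundle bound you should expect is $O(|P_i|\log n/n)$ rather than $O(|P_i|/n)$, because the common-1 and common-0 sets $A_{i,1},A_{i,0}$ are only controlled to within $O(|P_i|\log n)$ of $3n/4$ and $n/4$; the total is then $O(q\log n/n)=O(1/C_0)$, which is still small enough, but the bookkeeping needs to carry the $\log n$. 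If you replace your $\mathcal{G}$ by the paper's overlap-plus-constancy condition, carry the $\log n$ factor, and convert the paper's ratio computation in \Cref{lem:non-adaptive2} into a TV bound (using that a pointwise ratio bound $1\pm\delta$ implies a TV bound of $O(\delta)$), your TV framing should go through.
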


To see that this proves the lower bound in \Cref{thm:two-sided-non-adaptive-lb-hehe}, 
  assume for a contradiction that there is a 
  randomized, $o(q)$-query, non-adaptive
  algorithm that accepts every monotone function with probability at least $1-\alpha$
  and rejects every function that has relative distance at least $\eps_0$ to monotonicity
  with probability at least $1-\alpha$.
Then by Yao's minimax principle, there
  exists a deterministic, $q$-query, (standard)
  non-adaptive algorithm $\ALG'$ that satisfies
$$
\mathop{\Pr}_{\ff\sim\Dy}\big[\ALG' \text{ accepts }\ff\big] - \mathop{\Pr}_{\ff\sim\Dn}\big[\ALG'\text{ accepts }\ff\big]\ge (1-\alpha)-\big(1-c(1-\alpha)\big) \ge c\big/2.
$$

Note that given any input function $f$ from either 
  $\Dyes$ or $\Dno$, such an algorithm $\ALG'$ works by first drawing a sequence of $o(q)$ samples from $f^{-1}(1)$ and then, based on the samples drawn,~picking deterministically a set of $o(q)$ points to query.
  Using \Cref{claim:two-layer1},  with a suitably small choice of the constant $\tau>0$, we can obtain from $\ALG'$ a deterministic, $o(q)$-query, (two-stage) non-adaptive, $\MQ$-only algorithm $\ALG$ that satisfies
$$
\mathop{\Pr}_{\ff\sim\Dy}\big[\ALG \text{ accepts }\ff\big] - \mathop{\Pr}_{\ff\sim\Dn}\big[\ALG \text{ accepts }\ff\big]\ge  c/4,
$$
which, however, contradicts with \Cref{non-adaptive-mono-bound}  using the choice of $\alpha$.

\subsection{Proof of \Cref{non-adaptive-mono-bound}}

Let $\ALG$ be a deterministic,
$q$-query, (two-stage) non-adaptive, MQ-only algorithm in the rest of the section, where the~parameter $q=n/(C_0 \log n)$ for some sufficiently large constant $C_0$. 
We start with a definition that gives the \emph{outcome} of an $f$ from either $\Dyes$ or $\Dno$ on a set $V$ of query points.

\begin{definition}
Let $f=f_{T,H}$ be a function from the support of either $\Dyes$
  or $\Dno$, and $V$  be a set of points that lie at 
  or above layer $3n/4$.
We
  define the following tuple $(P;R;\rho)$ as the \emph{outcome} of $f$ on $V$,
  where $P=(P_i:i\in [L])$ with $P_i\subseteq V$, $R=(R_i:i\in [L])$ with
  $R_i\subseteq V$, and $\rho=(\rho_i: i\in [L])$ with $\rho_i:P_i\rightarrow \{0,1\}$ for each $i\in [L]$:
\begin{flushleft}\begin{enumerate}
 	\item Start with $P_i=R_i=\emptyset$ for all $i\in [L]$ (and $\rho_i$
	  being the empty map given that $P_i=\emptyset$).
	\item We build $P$ and $R$ as follows. For each $x\in V$ that lies in layer $3n/4$ or $3n/4+1$ (the order does not matter), if no term in $T$ is satisfied, add $x\rightarrow R_i$ for 
		all $i\in [L]$; if $x$ satisfies a unique term in $T$, say $T_i$, 
		add $x\rightarrow P_i$ and $x\rightarrow R_j$
		for all $j\ne i$;
		if $x$ satisfies more than one term in $T$, letting $i<i'$ be the
		first two terms satisfied by $x$, add $x\rightarrow P_i$, $x\rightarrow P_{i'}$, and $x\rightarrow R_j$ for all $j: j<i'$ and $j\ne i$.
	\item Finally, for each $i\in [L]$ with $P_i\ne\emptyset$, set $\rho_i(x)=h_i(x)$ for each $x\in P_i$. 	
\end{enumerate}\end{flushleft}
Observe that having $x \in P_i$ indicates that $x$ satisfies term $T_i$ (though not every $x$ that satisfies a term $T_i$ will be in $P_i$), and that having $x \in R_i$ indicates that $x$ does not satisfy term $T_i$ (though not every $x$ that does not satisfy term $T_i$ will be in $R_i$).

We will denote the outcome $(P;R;\rho)$ of $f$ on $V$ as $\outc (f,V)$.
\end{definition}

We record the following facts that are evident by construction:

\begin{fact}\label{fact:1}
The $P$ and $R$ parts of the outcome depend on $T$ only.
Once $T$ is fixed, the $\rho$ part of the outcome depends on $H$ only.
Moreover, values of $f(x)$, $x\in V$, can be determined from $\outc(f,V)$.
\end{fact}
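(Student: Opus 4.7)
The plan is to verify Fact~\ref{fact:1} as a direct unwinding of the two definitions in play: the definition of $\outc(f,V)=(P;R;\rho)$ and the definition of $f=f_{T,H}$ from the drawing procedure for $\Dyes$ and $\Dno$. The three assertions---that $(P,R)$ depends only on $T$, that $\rho$ depends only on $H$ once $T$ is fixed, and that $(P;R;\rho)$ together with $V$ determines $f|_V$---will be checked one at a time by inspection of the case analysis in step~2 of the outcome definition.

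For the first two assertions I would argue directly from the construction. Step~2 of the outcome only references, for each $x\in V$ lying in layer $3n/4$ or $3n/4+1$, whether $x$ satisfies each term $T_i$, which terms it satisfies, and the indices of the first two satisfied terms; all of these quantities are functions of $T$ alone. Hence $P=(P_i)_{i\in[L]}$ and $R=(R_i)_{i\in[L]}$ are determined by $T$ and $V$. For $\rho$, step~3 defines $\rho_i = h_i|_{P_i}$, so once $T$ (and thus $P_i$) is fixed, the map $\rho_i$ is a function of $h_i$ only, and therefore $\rho=(\rho_i)_{i\in[L]}$ is determined by $H$.

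The substantive content is the third assertion, the reconstruction of $f(x)$ from the outcome. Points $x$ outside the two nontrivial layers are handled immediately by \Cref{eq:two-layer}: $f(x)=1$ whenever $\|x\|_1 > 3n/4+1$ and $f(x)=0$ whenever $\|x\|_1 < 3n/4$. For $x\in V$ with $\|x\|_1\in\{3n/4,3n/4+1\}$ I would recover the multiplexer value $\Gamma_T(x)$ by reading off the multiplicity $m(x):=|\{i\in[L]:x\in P_i\}|$: a direct case check against the three branches of step~2 shows that $m(x)=0$ iff $x$ satisfies no term (so $\Gamma_T(x)=0^*$ and $f(x)=0$), $m(x)=1$ iff $x$ uniquely satisfies some term $T_i$ (so $\Gamma_T(x)=i$ and $f(x)=h_i(x)=\rho_i(x)$, which is recorded in the outcome), and $m(x)=2$ iff $x$ satisfies at least two terms (so $\Gamma_T(x)=1^*$ and $f(x)=1$). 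In each case $f(x)$ is computable from the outcome.

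No serious obstacle is expected; the only point requiring care is to confirm that the three cases above are mutually exclusive and exhaustive. In particular, a point that satisfies three or more terms is placed into exactly the two sets $P_i,P_{i'}$ corresponding to the first two satisfied terms---not into any further $P_j$---so the ``multiple terms'' case always yields $m(x)=2$ and is cleanly distinguished from the ``unique term'' and ``no term'' cases by $m(x)$. Once this is verified, all three claims of the fact follow.
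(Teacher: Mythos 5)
Your proof is correct, and it is essentially the verification the paper intends: the paper states this fact without proof as ``evident by construction,'' and your argument is exactly the direct unwinding of the outcome definition that justifies it, including the one genuinely necessary observation that the multiplicity $|\{i : x \in P_i\}| \in \{0,1,2\}$ cleanly separates the three multiplexer cases (with $\rho_i(x)$ supplying the value in the unique-term case).
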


\begin{fact}\label{fact:3}
$\sum_{i\in [L]} |P_i|\le 2|V|$.
\end{fact}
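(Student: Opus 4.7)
The plan is to prove \Cref{fact:3} by a direct per-point counting argument, tracking how many times each $x \in V$ is added to some $P_i$ during the construction of the outcome $(P;R;\rho)$. The key observation is that the construction only ever adds a given $x$ to $P_i$ in two specific situations (the ``unique term'' case and the ``multiple terms'' case), and in both cases the number of $P_i$'s to which $x$ is added is bounded by an absolute constant, namely $2$.

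More concretely, I would fix an arbitrary $x \in V$ and consider the three cases in Step~2 of the construction. If $x$ lies strictly below layer $3n/4$ or strictly above layer $3n/4+1$, then $x$ is never processed at all and so contributes $0$ to $\sum_i |P_i|$. Otherwise, $x$ lies in layer $3n/4$ or $3n/4+1$, and one of the following holds: (i) $x$ satisfies no term $T_i$, in which case $x$ is added only to the $R_i$'s and contributes $0$ to $\sum_i |P_i|$; (ii) $x$ satisfies a unique term $T_i$, contributing exactly $1$ to $\sum_i |P_i|$; or (iii) $x$ satisfies at least two terms, in which case the construction explicitly adds $x$ only to $P_i$ and $P_{i'}$ for the two smallest indices $i<i'$ of satisfied terms, contributing exactly $2$ to $\sum_i |P_i|$. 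Thus every $x \in V$ contributes at most $2$, and summing over $x \in V$ yields $\sum_{i \in [L]} |P_i| \le 2|V|$.

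There is no real technical obstacle here; the content is purely a bookkeeping consequence of the fact that the construction caps each point's contribution to the $P_i$'s at two (corresponding to the ``first two satisfied terms'' clause). The only thing worth emphasizing in the write-up is that the construction is careful not to add $x$ to $P_j$ for any $j$ beyond the first two satisfied indices, even when $x$ happens to satisfy many terms; this is precisely what makes the bound $2$ rather than something that scales with the number of satisfied terms.
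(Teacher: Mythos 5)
Your per-point counting argument is correct and is exactly what the paper has in mind: each $x \in V$ is added to at most two of the sets $P_i$ (once if $x$ satisfies a unique term, twice if it satisfies more than one, zero times otherwise), so summing over $x \in V$ gives the bound. The paper states this fact as evident by construction without writing out a proof, and your argument makes that explicit along the same lines.
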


\def\Hyes{\mathcal{H}_{\text{yes}}}
\def\Hno{\mathcal{H}_{\text{no}}}
\def\brho{\boldsymbol{\rho}}

Let $\Hyes$ denote the following distribution:
To draw a tuple $\bU=(\bQ,\bS,(\bP;\bR;\brho))\sim\Hyes$, we first draw $\ff\sim \Dyes$ and a sequence of $q$ points $\bS$ uniformly at random from points that are at or above layer $3n/4$.
$\bQ$ is then the set of $q$ points that $\ALG$ queries in the second stage,
  given $\bS$ and $\ff(\bS)$ in the first stage, and 
  $(\bP;\bR;\brho)$ is set to be $\outc(\ff,\bQ\cup \bS)$.
We define $\Hno$ similarly, with the function $\ff\sim \Dno$ instead of $\Dyes$.

\begin{definition}\label{def:bad-events}
Let $U=(Q,S,(P;R;\rho))$ be a tuple in the support of either $\Hyes$ or $\Hno$.
We say $U$ is \emph{bad} if at least one of the following 
  two events happen:
\begin{flushleft}\begin{itemize}
\item For some $i \in [L]$, there exist $x, y \in P_i$ such that $$\big|\{ k \in [n] :x_k = y_k = 1\}\big| \leq 3n/4 - 100 \log n.$$
\item For some $i \in [L]$, there exist $x,y\in P_i$ such that $\rho_i(x)\ne \rho_i(y)$.
\end{itemize}\end{flushleft}
Otherwise, we say $U$ is \emph{good}.
\end{definition}

\Cref{non-adaptive-mono-bound} follows from the following two claims:

\begin{claim}\label{lem:prune-non-adaptive}
A tuple $\bU\sim \Hyes$ is bad with probability at most $2\alpha $.
\end{claim}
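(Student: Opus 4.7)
The plan is to decompose the bad event via a union bound over its two sub-events in \Cref{def:bad-events} and bound each probability separately, choosing the constant $C_0$ in $q = n/(C_0\log n)$ sufficiently large in terms of $\alpha$ along the way.

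For the first sub-event, I would estimate $\Pr[\text{event 1}]$ by the expected number of bad triples $(i, x, y)$ with $x, y \in P_i$ and overlap $a := |\{k : x_k = y_k = 1\}| \leq 3n/4 - 100\log n$. The key per-term calculation uses the product form of a uniformly random term $\bT_i$ (whose $n$ variables are i.i.d.\ uniform over $[n]$):
\[
\Pr[\bT_i(x) = \bT_i(y) = 1] \;=\; (a/n)^n \;\leq\; (3/4)^n \cdot n^{-\Theta(\log n)},
\]
so summing over the $L = (4/3)^n$ terms cancels the $(3/4)^n$ factor and gives a per-pair bound of $n^{-\Theta(\log n)}$. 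The subtlety is that $V = \bQ \cup \bS$ depends on $\bT$ through $\ff(\bS)$, so a naive union bound over pairs in $V^2$ does not directly go through. I plan to handle this via a change-of-measure step: factor
\[
\Pr[x, y \in V,\ T_i(x) = T_i(y) = 1] \;=\; (a/n)^n \cdot \Pr[x, y \in V \mid T_i(x) = T_i(y) = 1],
\]
and use that $|V| \leq 2q$ holds for every realization, so $\mathbb{E}[|V|^2 \mid T_i(x) = T_i(y) = 1] \leq 4q^2$. Combining via the inequality $\sum_j a_j b_j \leq (\max_j a_j)\sum_j b_j$ should then yield an upper bound on the expected number of bad triples of $L \cdot (3/4)^n n^{-\Theta(\log n)} \cdot 4q^2 = 4q^2 n^{-\Theta(\log n)} = o(1)$, well below $\alpha$.

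For the second sub-event, I would condition on event 1 not occurring. Since $\|x\|_1, \|y\|_1 \in \{3n/4, 3n/4+1\}$ and $a > 3n/4 - 100\log n$ for all $x, y$ in a common $P_i$, this forces $\|x \oplus y\|_1 \leq 200\log n + 2 =: d$. Fixing any $x^* \in P_i$, the set $D_i := \bigcup_{y \in P_i} \{k : x^*_k \neq y_k\}$ satisfies $|D_i| \leq \sum_{y \in P_i} \|x^* \oplus y\|_1 \leq |P_i| \cdot d$, and event 2 at index $i$ requires $h_i$ to be a random dictator (probability $2/3$) whose chosen coordinate lies in $D_i$ (conditional probability $\leq |D_i|/n$). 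This gives $\Pr[\text{event 2 at }i \mid T, V, \lnot\text{event 1}] \leq (2/3)\,|P_i|d/n$; summing over $i$ and invoking Fact~\ref{fact:3} ($\sum_i |P_i| \leq 2|V| \leq 4q$, crucially linear in $|V|$ rather than quadratic) yields an overall bound of $O(dq/n) = O(q \log n / n) = O(1/C_0)$, which is at most $\alpha$ for $C_0$ large. A final union bound then gives $\Pr[\bU \text{ bad}] \leq o(1) + \alpha \leq 2\alpha$.

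The main obstacle I anticipate is rigorously handling the coupling between $V$, $\bH$, and $\bT$ in both estimates: for event 1 this shows up in the change-of-measure step (one must verify that the inner sum $\sum_{(x,y)} \Pr[x, y \in V \mid T_i(x) = T_i(y) = 1]$ really is controlled by $O(q^2)$ across the varying conditionings indexed by $(x,y)$), and for event 2 one needs to confirm that conditioning on $(T, V)$ does not meaningfully distort the marginal law of each $h_i$. I expect the latter to follow from the observation that $\ff(\bS)$ reveals $h_{i'}(x)$ only for those $x \in \bS$ uniquely hit by $T_{i'}$, a sparse subset of the coordinates of $\bH$ for a typical realization of $\bT$.
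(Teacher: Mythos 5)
Your high-level plan matches the paper's: union bound over the two sub-events of \Cref{def:bad-events}, control the first via the per-term estimate $\Pr[\bT_i(x)=\bT_i(y)=1]=(a/n)^n \le (3/4)^n n^{-\Theta(\log n)}$ (which beats the $L=(4/3)^n$ factor), control the second via the $\tfrac{2}{3}\cdot|D_i|/n$ dictatorship bound summed against $\sum_i|P_i|\le 2q$ from \Cref{fact:3}, and take $C_0$ large. But the difficulty you flag at the end --- rigorously decoupling the query set $V=\bQ\cup\bS$ from the random objects $(\bT,\bH)$ --- is exactly the crux of the claim, and neither of your proposed workarounds closes it.

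For the first sub-event, your change-of-measure factoring leaves you needing to bound $\sum_{(x,y)}\Pr[x,y\in V\mid \bT_i(x)=\bT_i(y)=1]$. Each summand is a probability under a \emph{different} conditional law, so this sum is not $\mathbb{E}[|V|^2]$ under any single measure; the bound $|V|\le 2q$ pointwise does not yield a $4q^2$ bound on this sum without an additional argument that the conditioning $\bT_i(x)=\bT_i(y)=1$ has negligible effect on the law of $V$ uniformly over the pair. For the second sub-event, your conditional bound $\Pr[\text{event 2 at }i\mid T,V,\lnot\text{event 1}]\le \tfrac23 |P_i|d/n$ presupposes that, given $(T,V)$, each $\bh_i$ still looks like a fresh draw from $\Ey$. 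But $V$ contains $\bQ=Q(\bS,\ff(\bS))$, and $\ff(\bS)$ reveals values $\bh_{i'}(x)$ for those $x\in\bS$ uniquely covered by $\TT_{i'}$; so conditioning on $V$ already leaks information about $\bH$ in a pattern that is itself random (it depends on which $i'$ covered which sample).

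The paper sidesteps both issues by a two-stage conditioning that you would need to add to make this rigorous. It first introduces $\Hyes^*$ and a notion of a \emph{regular} first-stage outcome $U^*=(S,(P^*;R^*;\rho^*))$ --- requiring $|P_i^*|\le 1$ for all $i$ and pairwise $1$-overlap at most $5n/8$ among the $S$-samples --- and shows this holds with probability $1-o_n(1)$ (\Cref{lemma:need1}). Conditioning on a regular $U^*$ makes $Q=Q_{U^*}$ deterministic, so there is no remaining dependence between $V$ and the randomness, and the conditional law of each $\bT_i$ becomes a clean uniform distribution subject to explicit constraints ($\bT_i(z)=0$ for $z\in R_i^*$, and $\bT_i(x^*)=1$ if $P_i^*=\{x^*\}$). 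The per-pair estimates in \Cref{claim:lastone} are then ratios of the form $\Pr[\bT_i(x)=\bT_i(y)=1\text{ and constraints}]/\Pr[\text{constraints}]$, where the regularity of $U^*$ guarantees the denominator is $(1-o_n(1))$ times the unconditional value, allowing the union bound over $O(q^2 L)$ or $O(q^2)$ triples to go through; similarly the $\rho$-analysis can treat the yet-unrevealed values of $\bh_i$ as fresh draws. Without this conditioning step your argument is a correct heuristic, but not a proof.
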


\begin{claim}\label{lem:non-adaptive2}
For any fixed good tuple $U$ in the support of $\Hyes$, we have 
\[ \Prx_{\bU\sim \Hyes}\big[\hspace{0.03cm}\bU=U  \hspace{0.03cm}\big] \leq (1+\alpha)\cdot  \Prx_{\bU\sim \Hno}\big[\hspace{0.03cm}\bU=U\hspace{0.03cm}\big]. \]
\end{claim}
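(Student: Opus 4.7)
The plan is to factor the desired ratio into a product of independent per-$i$ terms (one for each non-empty $P_i$) and then show, using the two goodness conditions on $U$, that each term is close enough to $1$ that the whole product is bounded by $1+\alpha$.

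First I would reduce the ratio to a product over the $\hh_i$'s. Because $\bS$ is drawn independently of $(\TT,\HH)$ and $\bQ$ is a deterministic function of $\bS$ and $\ff(\bS)$---which is itself determined by the outcome via Fact~\ref{fact:1}---the event $\{\bU=U\}$ factors as $\{\bS = S\}\cap\{\outc(\ff,Q\cup S)=(P;R;\rho)\}$. The probability of $\{\bS = S\}$ is identical under $\Hyes$ and $\Hno$ and cancels. By Fact~\ref{fact:1}, the $(P,R)$-part of $\outc$ depends only on $\TT$, whose marginal ($\calE$) is the same under both distributions; that probability cancels too. Using the independence of $\TT$ from $\HH$ and of the $\hh_i$'s from one another leaves
\[
  \frac{\Prx_{\bU \sim \Hyes}[\bU = U]}{\Prx_{\bU \sim \Hno}[\bU = U]} \;=\; \prod_{i \colon P_i \neq \emptyset} \frac{\Prx_{\hh_i \sim \Ey}[\hh_i|_{P_i} = \rho_i]}{\Prx_{\hh_i \sim \En}[\hh_i|_{P_i} = \rho_i]}.
\]

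Next I would bound each factor. Because $U$ is good, $\rho_i$ is constant on each non-empty $P_i$, say equal to $c_i\in\{0,1\}$. Set $D_i := \bigcap_{x \in P_i}\{k \in [n] : x_k = 0\}$ and $E_i := \bigcap_{x \in P_i}\{k \in [n] : x_k = 1\}$. Direct computation from the definitions of $\Ey$ and $\En$ gives, in the $c_i = 0$ case,
\[
  \frac{\Prx_{\hh_i \sim \Ey}[\hh_i|_{P_i} \equiv 0]}{\Prx_{\hh_i \sim \En}[\hh_i|_{P_i} \equiv 0]} \;=\; \frac{\tfrac{1}{3} + \tfrac{2}{3}|D_i|/n}{\tfrac{2}{3}|E_i|/n} \;=\; \frac{n + 2|D_i|}{2|E_i|},
\]
and the reciprocal when $c_i = 1$. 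The other goodness condition $|\{k : x_k = y_k = 1\}|\ge 3n/4 - 100\log n$ for all $x,y \in P_i$, combined with $\|x\|_1 \in \{3n/4, 3n/4+1\}$, implies that any two elements of $P_i$ differ on at most $O(\log n)$ coordinates; fixing a reference $x^{(1)} \in P_i$ and summing the discrepancies over the other elements yields $n/4 - |D_i| = O(|P_i|\log n)$ and $3n/4 - |E_i| = O(|P_i|\log n)$. Rewriting the ratio as $(3n/2 - 2(n/4 - |D_i|))/(3n/2 - 2(3n/4 - |E_i|))$ and Taylor-expanding the logarithm then shows that each per-$i$ log-factor has absolute value $O(|P_i|\log n/n)$.

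Finally I would aggregate. By Fact~\ref{fact:3}, $\sum_{i : P_i \neq \emptyset} |P_i| \le 2|Q \cup S| \le 4q$, and substituting $q = n/(C_0 \log n)$ gives
\[
  \log\!\left( \prod_{i : P_i \neq \emptyset} \frac{\Prx_{\hh_i \sim \Ey}[\hh_i|_{P_i} = \rho_i]}{\Prx_{\hh_i \sim \En}[\hh_i|_{P_i} = \rho_i]} \right) \;\le\; O\!\left(\tfrac{\log n}{n}\right) \cdot 4q \;=\; O(1/C_0).
\]
Choosing $C_0$ sufficiently large makes this at most $\log(1+\alpha)$, which gives the claimed inequality. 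I expect the main technical obstacle to be the discrepancy bookkeeping in the third paragraph: both goodness hypotheses are essential there---the $\rho$-constancy gives the clean closed-form for the per-$i$ ratio, while the $1$-set overlap pins $|D_i|$ and $|E_i|$ close enough to $n/4$ and $3n/4$ that each log-factor scales only linearly in $|P_i|\log n/n$, which in turn is exactly what is needed for the $\sum |P_i| \le 4q$ budget to push the total log-ratio down to $O(1/C_0)$.
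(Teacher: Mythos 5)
Your proposal is correct and follows essentially the same approach as the paper's proof: factor the ratio over the nonempty $P_i$'s (after cancelling the $\bS$ and $\TT$ contributions), compute the per-$i$ ratio as $(n/3 + (2/3)|A_{i,0}|)/((2/3)|A_{i,1}|)$ or its reciprocal depending on $\rho_i$, use the goodness condition to pin $|A_{i,0}|$ and $|A_{i,1}|$ within $O(|P_i|\log n)$ of $n/4$ and $3n/4$, and aggregate via $\sum_i |P_i| \le O(q)$ (Fact~\ref{fact:3}) with $q = n/(C_0\log n)$ and $C_0$ large. Your $D_i, E_i$ are the paper's $A_{i,0}, A_{i,1}$, and the only cosmetic difference is that the paper bounds the reciprocal ratio $\Pr_{\Hno}/\Pr_{\Hyes}$ from below rather than $\Pr_{\Hyes}/\Pr_{\Hno}$ from above.
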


Before proving \Cref{lem:prune-non-adaptive} and \Cref{lem:non-adaptive2}, we use them to prove \Cref{non-adaptive-mono-bound}:

\begin{proof}[Proof of \Cref{non-adaptive-mono-bound} assuming  \Cref{lem:prune-non-adaptive} and \Cref{lem:non-adaptive2}]
Let $\calU$ be the set of those
  $U$ in the support of $\Hyes$ that lead $\ALG$ to accept.
Using \Cref{lem:prune-non-adaptive} and \Cref{lem:non-adaptive2}, we have 
\begin{align}
\nonumber
\Prx_{\ff\sim\Dyes}\big[\ALG\ \text{accepts}\ \ff\big]
\nonumber &=\sum_{U\in \calU} \Prx_{\bU\sim \Hyes} \big[\bU=U\big]\\[0.2ex]&\label{simpleeq1}\le \sum_{\text{good}\ 
U\in \calU} \Prx_{\bU\sim \Hyes}\big[\bU=U\big]
 +2\alpha\\
&\label{simpleeq2}\le (1+\alpha) \sum_{\text{good}\ U\in \calU} \Prx_{\bU\sim \Hno}\big[\bU=U\big] +2\alpha \\
&\nonumber \le (1+\alpha)\cdot \Prx_{\ff\sim \Dno}\big[\text{$\ALG$ accepts $\ff$}\big]+2\alpha,
\end{align}
where \Cref{simpleeq1} used \Cref{lem:prune-non-adaptive} and  \Cref{simpleeq2} used \Cref{lem:non-adaptive2}, 
and the last inequality is because
\[
\Prx_{\ff\sim \Dno}\big[\text{$\ALG$ accepts $\ff$}\big]
=
\sum_{\text{good}\ U\in \calU} \Prx_{\bU\sim \Hno}\big[\bU=U\big]
+
\sum_{\text{bad}\ U\in \calU} \Prx_{\bU\sim \Hno}\big[\bU=U\big].
\]

This finishes the proof of the lemma.
\end{proof}

\subsection{Proofs of Claims}\label{sec:claims}

Now we prove \Cref{lem:prune-non-adaptive} and \Cref{lem:non-adaptive2}. We start with \Cref{lem:prune-non-adaptive}.

We divide the proof of \Cref{lem:prune-non-adaptive} 
  into two steps.
First we introduce the following distribution
  $\Hyes^*$ for the analysis of the first stage of $\ALG$.
To draw a tuple $\bU^*=(\bS,(\bP^*;\bR^*;\brho^*))\sim \Hyes^*$, we first draw a function $\ff\sim \Dyes$ and a sequence of 
  $q$ points $\bS$ uniformly at random from points that are at or above layer $3n/4$.
The triple $(\bP^*;\bR^*;\brho^*)$ is then set to be $\outc(\ff,\bS)$.
\begin{definition} 
Let $U^*=(S,(P^*;R^*;\rho^*))$ be a tuple in the support of $\Hyes^*$. 
We say $U^*$ is \emph{regular} if it satisfies the following two conditions: (1)
$|P^*_i|\le 1$ for all $i\in [L]$, and (2) every two strings in $S$ have at most $5n/8$ many common $1$-entries.
\end{definition}
\begin{claim}\label{lemma:need1}
With probability at least $1-o_n(1)$,
$\bU^*=(\bS,(\bP^*;\bR^*;\brho^*))\sim \Hyes^*$ is regular.
\end{claim}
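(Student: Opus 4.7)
The plan is to analyze the two defining conditions of regularity separately, each with an exponentially small failure probability per ``bad event'', and then union-bound over a polynomial number of events.

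For condition (2) (pairwise intersection at most $5n/8$), I would first argue that each individual sample $\bx_\ell$ (drawn uniformly from $\{x : \|x\|_1 \ge 3n/4\}$) has Hamming weight close to $3n/4$ with overwhelming probability. Using the ratio $\binom{n}{3n/4+j}/\binom{n}{3n/4} \le (1/3)^j$, the weight of $\bx_\ell$ is stochastically dominated by $3n/4 + \mathrm{Geom}(2/3)$, so $\Pr[\|\bx_\ell\|_1 > 3n/4 + 100\log n] \le n^{-\Omega(1)}$, and by a union bound over the $q$ samples all weights are at most $3n/4 + 100\log n$ with probability $1 - o_n(1)$. Conditioned on two samples $\bx, \by$ having weights $k_1, k_2 \le 3n/4 + O(\log n)$, the overlap $\sum_i \bx_i \by_i$ is hypergeometric with mean $k_1 k_2/n \le 9n/16 + O(\log n)$. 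A standard Hoeffding bound for the hypergeometric distribution then gives $\Pr[\text{overlap} > 5n/8] \le \exp(-\Omega(n))$, and union bounding over $\binom{q}{2} = O(n^2/\log^2 n)$ pairs contributes only $o_n(1)$.

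For condition (1) ($|P^*_i| \le 1$ for all $i$), I would observe that $x \in P^*_i$ implies $T_i(x) = 1$, so $|P^*_i| \ge 2$ requires two distinct samples $x \ne y$ in $\bS$ with $T_i(x) = T_i(y) = 1$. Since each of the $n$ variables of $T_i$ is drawn independently and uniformly from $[n]$, we have
\[
\Pr_{\TT}\bigl[T_i(x) = T_i(y) = 1\bigr] = \bigl(|\{k : x_k = y_k = 1\}|/n\bigr)^n.
\]
Conditioning on the event from condition (2), this probability is at most $(5/8)^n$. Summing over all $\binom{q}{2}$ ordered pairs from $\bS$ and all $L = (4/3)^n$ terms, the expected number of bad triples $(x, y, i)$ is at most
\[
\binom{q}{2} \cdot (4/3)^n \cdot (5/8)^n \le q^2 \cdot (5/6)^n = o_n(1),
\]
and Markov's inequality gives the desired bound. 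The crucial numerical fact enabling this construction is $(4/3)\cdot(5/8) = 5/6 < 1$, which is why $L = (4/3)^n$ was chosen to be subcritical relative to the $(5/8)^n$ term.

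Combining the two steps by a final union bound yields that $\bU^*$ is regular with probability $1 - o_n(1)$. The only subtlety will be carefully handling the fact that samples can lie slightly above layer $3n/4$ (so the overlap mean is not exactly $9n/16$), but since the allowed slack to $5n/8$ is a full constant $n/16$, and the atypical weight event is $n^{-\Omega(1)}$, this is absorbed into the error term without difficulty.
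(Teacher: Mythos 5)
Your proof is correct and follows essentially the same two-step strategy as the paper: first establish that every pair of samples shares at most $5n/8$ common $1$-entries via a concentration argument, then condition on this and union-bound over all $Lq^2$ (sample-pair, term) triples, using $(4/3)\cdot(5/8)=5/6<1$ to make the union bound close. The extra detail you give for the first step (geometric domination of the weight and a hypergeometric Hoeffding bound for the overlap) is a reasonable way to make explicit what the paper compresses into ``Chernoff bound and a union bound.''
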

\begin{proof}[Proof of \Cref{lemma:need1}]
Let $\bx^1,\ldots,\bx^q$ be the sequence of samples in $\bS$ (drawn uniformly at random from layer $3n/4$ or above). 
It follows from Chernoff bound and a union bound that with probability at least $1-o_n(1)$, we have for every $1\le i<j\le q$, the number of $k\in [n]$ with $x^i_k=x^j_k=1$ is~at most $5n/8$ (note that the expectation is about $9n/16$).
Assuming this is the case for a fixed set of samples $x^1,\ldots,x^q$, it suffices to show that when $\bT_1,\ldots,\bT_L$ are drawn independently and uniformly at random, no $i,j,k$ satisfy $\bT_k(x^i)=\bT_k(x^j)$ with probability at least $1-o_n(1)$.

This is because by our assumption on $x^i$ and $x^j$, the probability of
$\bT_k(x^i)=\bT_k(x^j)=1$  for any triple $i,j,k$ is at most $
(5/8)^n
$. Given that $L=(4/3)^n$ and $q$ is at most polynomial in $n$, the claim follows from a union bound over all $Lq^2$ many triples $i,j,k$.
\end{proof}

Now we work on the second stage of $\ALG$.
Fix any regular $U^*=(S,(P^*;R^*;\rho^*))$ in the support 
  of $\Hyes^*$, and let us write $Q=Q_{U^*}$ to denote the set of $q$ queries $\ALG$ would query in the second stage when (1) the samples it queries are $S$ in the first stage and (2) the query results of $S$ are determined using $(P^*;R^*;\rho^*)$ as discussed in \Cref{fact:1}.
We write $\Dyes'$ to denote $\Dyes$ conditioning on $\ff\sim \Dyes$ satisfying $(P^*;R^*;\rho^*)=\outc(\ff,S)$,
  and $\Hyes'$ to denote the distribution of
  $\bU=(Q,S,(\bP;\bR;\brho))$ by first
  drawing $\ff\sim \Dyes'$ and then setting
  $(\bP;\bR;\brho)=\outc(\ff,Q\cup S)$.

We prove the following claim about $\Hyes'$ (after fixing   any regular $U^*$ in the support of $\Hyes^*$):

\begin{claim}\label{claim:lastone}
With probability at least $1-\alpha-o_n(1)$, $\bU\sim \Hyes'$ is good.
\end{claim}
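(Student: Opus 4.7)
The plan is to decompose
\[
\Pr[\bU \text{ is bad}] \;\leq\; \Pr[\text{condition 1 fails}] \;+\; \Pr[\text{condition 2 fails}]
\]
and bound each summand by a union bound over triples $(i, x, y) \in [L] \times V \times V$, where $V = \bQ \cup \bS$. Throughout, regularity of $U^*$---specifically $|P_i^*| \leq 1$ for every $i$, together with the bounded pairwise intersection $\leq 5n/8$ within $\bS$---is essential for controlling the conditioning on $\outc(\ff, \bS) = U^*$.

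For condition~1, I would enumerate triples with $|\{k : x_k = y_k = 1\}| \leq 3n/4 - 100\log n$. The conditional probability $\Pr[x, y \in P_i \mid \outc(\ff,\bS)=U^*]$ is at most $(|x \cap y|/n)^n$, since each of the $n$ variables of $T_i$ must fall in the common 1-set of $x$ and $y$. Multiplying by $L = (4/3)^n$ gives $(4|x \cap y|/(3n))^n \leq (1 - \tfrac{400\log n}{3n})^n = n^{-\Omega(1)}$, which summed over the $O(q^2)$ possible pairs is $o_n(1)$. Regularity of $U^*$ ensures that the conditioning only mildly restricts the distribution of each $T_i$, so the above bound survives.

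For condition~2, I would use that $\{x, y \in P_i\}$ depends only on $T_i$ while $\{\rho_i(x) \neq \rho_i(y)\}$ depends only on $h_i$, so after handling the conditioning on $U^*$ these two events are essentially independent. The probability of the latter, given that $h_i$ is a dictator (which happens with probability $2/3$), equals $|x \oplus y|/n$. Assuming condition~1 holds---otherwise already absorbed above---any pair in a common $P_i$ satisfies $|x \oplus y| = O(\log n)$, so the per-triple probability picks up an extra factor of $O(\log n / n)$. Splitting pairs by whether at least one endpoint lies in $\bS$ (where regularity, $|P_i^*| \leq 1$, and \Cref{fact:3} bound the total contribution over $i \in [L]$) versus both endpoints in $\bQ$ (where the expected count of pairs in a common $P_i$ is controlled via $L \cdot (|x\cap y|/n)^n \leq 1$), and choosing the constant $C_0$ in $q = n/(C_0 \log n)$ sufficiently large, should give the required $\alpha + o_n(1)$ bound.

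The main obstacle will be carefully tracking the conditional distributions of $T_i$ and $h_i$ given $\outc(\ff, \bS) = U^*$: when $|P_i^*| = 1$, the variables of $T_i$ are forced into the unique satisfying sample $s$, and $h_i$ is constrained to produce the fixed value $\rho_i^*(s)$, which in particular restricts its dictator variable to the appropriate half of $[n]$. The technical heart of the proof will be verifying that both of the bounds above survive these constraints---using regularity to ensure that no single $P_i^*$ or $R_i^*$ carries too much conditioning weight---and that the condition~2 bound genuinely reaches $\alpha + o_n(1)$ rather than the naively weaker $O(q^2 \log n / n)$, which may require a refined count of pairs grouped by their common-1 size.
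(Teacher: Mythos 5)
Your plan for condition~1 is essentially the paper's, but your stated bound $\Pr[x,y\in\bP_i \mid \outc(\ff,\bS)=U^*] \le (|x\cap y|/n)^n$ is only correct for indices $i$ with $P_i^*=\emptyset$. For the $\le q$ indices with $|P_i^*|=1$, say $P_i^*=\{x^*\}$, the conditioning includes the event $\bT_i(x^*)=1$, which has probability $(3/4)^n$; the resulting ratio is roughly $(|x\cap y|/n)^n / (3/4)^n$, exponentially larger than your claimed bound. This does \emph{not} break the argument --- for those $i$ you lose the factor of $L$ (there are only $q$ of them, not $L$), the normalized ratio is still $\le (1 - \Omega(\log n /n))^n = n^{-\Omega(1)}$ after a triangle-inequality step reducing to intersections with $x^*$, and regularity (pairwise $1$-intersections $\le 5n/8$ within $\bS$) is what keeps the denominator $\Pr[\bT_i(x^*)=1,\ \bT_i(z)=0\ \forall z\in R_i^*]$ bounded below by $(1-o(1))(3/4)^n$ --- but you need to make this case split explicit rather than asserting the conditioning is ``mild.''

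The real gap is in condition~2, and you are right to be suspicious of the $O(q^2\log n/n)$ bound. That quantity equals $n/(C_0^2\log n) = \omega(1)$, so a union bound over pairs $(x,y)$ is genuinely hopeless here, and no ``refined count of pairs grouped by common-$1$ size'' rescues it: for a fixed $T$ you could in principle have a single $P_i$ of size $\Theta(q)$, contributing $\Theta(q^2)$ pairs. The fix the paper uses is not a better pair count but a \emph{different event decomposition}: for each $i$ you analyze the whole group $P_i$ at once. Conditioned on condition~1 holding, the set of coordinates on which the points of $P_i$ are not all equal has size $O(|P_i|\log n)$ (since each $x\in P_i$ differs from a fixed $x^{(0)}\in P_i$ in $O(\log n)$ coordinates). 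The event $\rho_i$ is inconsistent on $P_i$ is exactly the event that the random dictator coordinate of $\hh_i$ lands in this set, which has probability $O(|P_i|\log n/n)$ --- \emph{linear} in $|P_i|$, not quadratic, because the pairwise disagreement events are all determined by the single random coordinate and are therefore highly correlated. Summing over $i$ and using \Cref{fact:3} ($\sum_i |P_i|\le 2|V|\le 4q$) then gives $O(q\log n/n) = O(1/C_0)\le\alpha$ once $C_0$ is large. You should also remember that for $i$ with $|P_i^*|=1$, $\hh_i$ is conditioned on agreeing with the fixed value $\rho_i^*(x^*)$, which slightly changes the calculation of the per-$i$ probability (the paper splits into the $\rho_i^*(x^*)=0$ and $\rho_i^*(x^*)=1$ subcases); this does not change the order of magnitude.
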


 \Cref{lem:prune-non-adaptive}
  follows directly by combining \Cref{lemma:need1} and \Cref{claim:lastone}.

\begin{proof}[Proof of \Cref{claim:lastone}]
Fix a regular $U^* = (S, (P^*; R^*; \rho^*))$ in the support of $\Hyes^*$ and use it to define $Q,\Dyes'$ and $\Hyes'$ as above.
Let $(Q,S,(\bP;\bR;\brho))\sim \Hyes'$.

We start by showing that the first bad event, i.e.~there exist $x,y\in \bP_i$ for some $i\in [L]$ satisfying
\begin{equation}\label{heheeq}\big|\{ k \in [n]:x_k = y_k = 1\}\big| \leq (3n/4) - 100 \log n,\end{equation}
happens with probability at most $o_n(1)$. 
After proving this, we condition on it not happening and show that the second bad event, i.e., $\rho_i(x)\ne \rho_i(y)$ for some $i\in [L]$ and $x,y\in \bP_i$, happens with probability at most $\alpha$.
The claim then follows by a union bound.

For the first bad event, we start with the
  event that there exists some $i \in [L]$ such that $P_i^*=\emptyset$ (but $R^*_i$ needs  not to be empty in general) but some $x, y \in \bP_i$ satisfy \Cref{heheeq}.  
We prove this happens with $o_n(1)$ probability by a union bound over all $x,y\in Q$
(that satisfy \Cref{heheeq}) and all $i\in [L]$ with $P_i^*=\emptyset$.
For each such $x,y,i$, we note that $\bT_i$ (when we draw $\ff\sim \Dyes'$) is drawn uniformly at random from all terms $T_i$ that satisfy $T_i(z)=0$ for all $z\in R_i^*$.
So the probability of this event is 
$$
\frac{\Pr_{\bT_i}[\bT_i(z)=0\ \text{for all $z\in R^*_i$ and\ }\bT_i(x)=\bT_i(y)=1]}
{\Pr_{\bT_i}[\bT_i(z)=0\ \text{for all $z\in R^*_i$}]},
$$
where $\bT_i$ is drawn uniformly at random.
It is clear that the denominator is at least $1-o_n(1)$
  given that all $z\in R_i^*$ have weight at most $(3n/4)+1$.
The numerator, on the other hand, is at most
$$
\Prx_{\bT_i}\big[\bT_i(x)=\bT_i(y)=1\big] 
\le \left( \dfrac{(3n/4) - 100 \log n}{n} \right)^{n} = \left(\frac{3}{4}\right)^n\left(1-\frac{(4/3)100\log n}{n}\right)^n. $$
By a union bound over $q^2L$ triples of $x,y,i$, we have that this case happens with probability $o_n(1)$.

We next handle the event that there exists some $i\in [L]$ with $|P_i^*|=1$ such that some $x,y\in \bP_i$ satisfy \Cref{heheeq}.
By triangle inequality, a necessary condition for this event to happen is that there exists some $i\in [L]$ with $|P_i^*|=1$ (say $P_i^*=\{x^*\}$) such that some $y\in \bP_i$ satisfies 
\begin{equation}\label{heheeq2}
\big| \{ k \in [n]:x^*_k = y_k = 1\} \big| \leq (3n/4) - 50 \log n.
\end{equation}
Fix an $i\in [L]$ with $|P_i^*|=1$ (denoting $P_i^*=\{x^*\}$) and a $y\in Q$ such that \Cref{heheeq2} holds.
We bound the probability of $y\in \bP_i$ and then apply a union bound over the at most $2q^2$ many pairs of $i$ and $y$.

To this end, note that $\bT_i$ is drawn uniformly but subject to conditioning on $\bT_i(x^*)=1$ and $\bT_i(z)=0$ for all $z\in R^*_i$. So   the probability we are interested in is
$$
\frac{\Pr_{\bT_i}[\bT_i(x^*)=\bT_i(y)=1\ \text{and}\ \bT_i(z)=0\ \text{for all $z\in R_i^*$}]}
{\Pr_{\bT_i}[\bT_i(x^*)=1\ \text{and}\ \bT_i(z)=0\ \text{for all $z\in R_i^*$}]},
$$
where $\bT_i$ is drawn uniformly at random.
Note that the denominator is at least
$$
(3/4)^n\cdot \left(1-|R_i^*|\cdot\left(\frac{5/8}{3/4}\right)^n\right)\ge (1-o_n(1))\cdot (3/4)^n,
$$
where the $(3/4)^n$ is the probability of $\bT_i(x^*)=1$ and the rest is a lower bound for the probability of $\bT_i(z)=0$ for all $z\in R^*_i$ conditioning on $\bT_i(x^*)=1$, using the assumption that $x^*$ and $z$ share at most $5n/8$ common $1$-entries.
The numerator on the other hand is at most
$$
\Prx_{\bT_i}\big[\bT_i(x^*)=\bT_i(y)=1\big]
\le \left( \dfrac{(3n/4) - 50 \log n}{n} \right)^{n} = \left(\frac{3}{4}\right)^n \left(1-\frac{(4/3)50\log n}{n}\right)^n. 
$$
The probability of this case is at most $o_n(1)$ by taking the ratio and a union bound over the at most $2q^2$ many pairs of $i$ and $y$.

Assume the first bad event does not happen. Fix any such $T$, as well as $P$ and $R$ given $T$. 
Let $I:=\{i\in [L]:P_i\ne\emptyset\}$ and for each $i\in I$, let
	$$A_{i,1}:=\big\{k\in [n]: x_k=1\ \text{for all $x\in P_i$}\big\}\quad\text{and} \quad
A_{i,0}:=\big\{k\in [n]: x_k=0\ \text{for all $x\in P_i$}\big\}.$$
Using the assumption that the first bad event does not happen, we have:
\begin{flushleft}\begin{itemize}
\item For all $i \in [L]$ and $x, y \in P_i$, we have 
  $$\big|\{k \in [n]:x_k = y_k = 1\}\big| > 3n/4 - 100 \log n$$ 
  and thus,
  $$
  \big|\{k \in [n]:x_k=1 \cap y_k = 0\}\big|=\Vert x \Vert_1-\big|\{k \in [n]:x_k = y_k = 1\}\big|
  \le 100\log(n)+1 .
  $$
For any $i\in [L]$ with  any $x\in P_i$, we have 
$$
3n/4+1\ge \Vert x \Vert_1\ge |A_{i,1}|\ge \Vert x \Vert_1-\sum_{y\in P_i} \big|\{k\in [n]: x_k=1 \cap y_k=0\}\big|
\ge 3n/4-O( |P_i|\log n).
$$
Similarly, we have 
$$
n/4\ge n-\Vert x \Vert_1\ge |A_{i,0}|\ge n-\Vert x \Vert_1-\sum_{y\in P_i} \big|\{k\in [n]: x_k=0 \cap y_k=1\}\big|
\ge n/4-O(|P_i|\log n).
$$
\end{itemize}\end{flushleft}
Now recall that the second bad event is that $\rho_i(x)\ne \rho_i(y)$ for some $i\in [L]$ and $x,y\in \bP_i$.
For the second bad event to happen, letting  $I':=\{i\in [L]:P^*_i\ne \emptyset\}$ (so $I'\subseteq I$, $|I'|\le q$ and $|I|$ $\le 2q$), there exists either (1) an $i\in I'$ with $P_i^*=\{x^*\}$ such that $\brho_i(x^*)\ne \brho_i(y)$ for some $y\in P_i$, or (2) an $i\in I\setminus I'$ such that $\brho_i(x)\ne \brho_i(y)$ for some $x,y\in P_i$.
The probability of the former, for each $i\in I'$, is (given that we are in the yes case)
$$
1-\frac{|A_{i,1}|}{|j\in [n]: x_j^*=1|}\le O\left(\frac{|P_i|\log n}{n}\right)
$$
if $\rho_i^*(x^*)=1$ or
$$
1-\frac{|A_{i,0}|}{|j\in [n]: x_j^*=0|}\le O\left(\frac{|P_i|\log n}{n}\right)
$$
if $\rho_i^*(x^*)=0$.
The probability of the latter, for some $i\in I\setminus I'$, is at most
$$
1-\frac{|A_{i,0}|+|A_{i,1}|}{n}\le O\left(\frac{|P_i|\log n}{n}\right).
$$
So conditioning on  the first bad event not happening, the second bad event happens with probability at most
\[   \sum_{i \in I} O\left(\frac{|P_i|  \log n} {n}\right) \le \alpha, \]
by using $\sum_{i\in I} |P_i| \leq 2q = 2n / (C_0\log n)$ and 
  setting $C_0$ to be sufficiently large.
\end{proof}

Finally we prove \Cref{lem:non-adaptive2}:

\begin{proof}[Proof of \Cref{lem:non-adaptive2}]
Fix a good $U = (Q,S, (P ; R;\rho))$ in the support of $\Hyes$. Let $I:=\{i\in [L]:P_i\ne\emptyset\}$.
For each $i\in I$, let $A_{i,1}$ and $A_{i,0}$ be defined as in the proof of \Cref{claim:lastone}. Given that $U$ is good, $|A_{i,0}|$ and $|A_{i,1}|$ satisfy the same inequalities established in the proof of \Cref{claim:lastone}.

Consider any fixed $T$ such that the probability of $(P;R;\rho)=\outc(f_{T,\HH},Q\cup S)$  is positive when $\HH\sim\Ey$. Then it suffices to show that 
\[ \dfrac{\Pr_{\HH\sim\En}[\outc(f_{T \HH},Q\cup S) =(P;R;\rho)]}{\Pr_{\HH\sim\Ey}[\outc(f_{T ,\HH},Q\cup S)=(P;R;\rho) ]} \geq \frac{1}{1+\alpha}. \]
By definition of the distributions $\En$ and $\Ey$, this ratio is a product over $i\in I$. 

For each $i\in I$ with
 $\rho_i(x)=1$ for $x\in P_i$, the factor is 
\begin{equation}
    \frac{\frac{2}{3}\cdot \frac{|A_{i,0}|}{n}+\frac{1}{3}}{\frac{2}{3}\cdot \frac{|A_{i,1}|}{n}}=
    \frac{(2/3)\cdot |A_{i,0}|+(n/3)}{(2/3)\cdot |A_{i,1}|}.    \label{eq:ratio1}
\end{equation}
Given that $|A_{i,1}|=(3n/4)\pm O(|P_i|\log n)$ and $|A_{i,0}|=(n/4)\pm O(|P_i|\log n)$ (and $|P_i|\log n\le 2q\log n$ which can be set to be at most a sufficiently small constant multiple of $n$ by setting $C_0$ to be sufficiently small),  
\begin{equation}
\frac{(2/3)\cdot |A_{i,0}|+(n/3)}{(2/3)\cdot |A_{i,1}|} 
=\frac{(n/2)\pm O(|P_i|\log n)}{(n/2)\pm O(|P_i|\log n)}
=1\pm O\left(\frac{|P_i|\log n}{n}\right).\label{eq:ratio2}
\end{equation}
For each $i\in I$ with $\rho_i(x)=0$ for $x\in P_i$, the factor is
\begin{equation}
\frac{\frac{2}{3}\cdot \frac{|A_{i,1}|}{n} }{\frac{2}{3}\cdot \frac{|A_{i,0}|}{n}+\frac{1}{3}}=
\frac{(2/3)\cdot A_{i,1} }{(2/3)\cdot A_{i,0}+(n/3)}=1\pm O\left(\frac{|P_i|\log n}{n}\right).
\label{eq:ratio3}
\end{equation}
As a result, using $\sum_{i\in I} |P_i|\le 2q= 2n/(C_0\log n)$ and setting $C_0$ to 
  be sufficiently large, 
we have
\begin{align*}
\dfrac{\Pr_{\HH\sim\En}[\outc(f_{T, \HH},Q\cup S)=(P;R;\rho) ]}{\Pr_{\HH\sim\Ey}[\outc(f_{T, \HH},Q\cup S)=(P;R;\rho) ]}  
			  &\geq \prod_{i \in I} \left(1 - O\left( \dfrac{|P_i|\log n}{n} \right) \right) \ge \frac{1}{1+\alpha}.
\end{align*}
This finishes the proof of the claim.
\end{proof}


\def\Eyes{\calE_{\text{yes}}}
\def\Eno{\calE_{\text{no}}}
\def\CC{\mathbf{C}}

\section{A two-sided adaptive lower bound}
\label{sec:two-sided-adaptive-lb}

In this section we prove the following two-sided adaptive lower bound:

\begin{theorem}~\label{thm:two-sided-adaptive-lb-hehe}
Let $\smash{N={n\choose 3n/4}}$. There is a constant $\eps_0>0$ such that any two-sided, adaptive algorithm for testing whether a function $f$ with $|f^{-1}(1)|=\Theta(N)$ is monotone or has
  relative~distance~at least $\eps_0$ from monotone functions must make $\tilde{\Omega}((\log N)^{2/3})$ queries.
\end{theorem}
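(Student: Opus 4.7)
The plan is to adapt the depth-3 extension of the Talagrand DNF from \cite{CWX17stoc} to the two-layer setting, in parallel with how Section~\ref{sec:two-sided-non-adaptive-lb} adapted the (depth-2) Talagrand DNF. By \Cref{claim:two-layer1} together with Yao's principle, it suffices to prove that no deterministic adaptive $\MQ$-only algorithm making $q=\tilde{O}(n^{2/3})$ queries can distinguish a suitable yes-distribution $\Dyes$ from a suitable no-distribution $\Dno$ over two-layer functions; the embedding reduction that appears after \Cref{remark:two-layer-flexible} then yields \Cref{thm:two-sided-adaptive-lb-hehe}.

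I would construct $\Dyes,\Dno$ via a two-level random multiplexer. First sample super-terms $\Phi_1,\dots,\Phi_{L_1}$ in the same way as Section~\ref{subsec:dist1}; then for each $i\in[L_1]$ sample sub-terms $T_{i,1},\dots,T_{i,L_2}$ whose variables are drawn from the variables of $\Phi_i$; finally attach random dictators (for $\Dyes$) or random anti-dictators (for $\Dno$) to each sub-term. The value of a point $x$ in the two active layers $\{3n/4,3n/4+1\}$ is determined by a nested unique-satisfaction rule (outer-unique, then inner-unique), defaulting to $0$ on no-satisfaction and to $1$ on multiple-satisfaction at either level, exactly mirroring the \cite{CWX17stoc} depth-3 construction. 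The parameters $L_1,L_2$ and the sub-term widths must be tuned so that a uniform point of weight $3n/4$ uniquely satisfies one super-term, and conditionally one sub-term, each with constant probability. Once this is set up, yes-monotonicity follows by propagating a bit flip through both levels as in \Cref{lem:mono}, and $\Omega(1)$ relative distance from monotone for $\Dno$ follows by exhibiting $\Omega(N)$ disjoint violating edges in the active layers as in \Cref{lem:nonmono}.

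For the distinguishing lower bound, I would enlarge the outcome tuple of Section~\ref{sec:claims} to record unique-satisfaction data at both levels, writing $(P^{(1)},R^{(1)},P^{(2)},R^{(2)},\rho)$ with $P^{(2)}_{i,j}\subseteq P^{(1)}_i$. Model the deterministic adaptive algorithm as a decision tree of depth $q$, and call a root-to-leaf path's outcome \emph{good} if no two points in any sub-term bucket $P^{(2)}_{i,j}$ share too few common $1$-entries, no super-term or sub-term bucket becomes overloaded beyond its chosen per-bucket budget, and no single super-term attracts too many queries. Using a level-by-level union bound reminiscent of the proof of \Cref{claim:lastone}, but applied along each tree path and then summed over paths, I would show that the bad event has probability $o(1)$ under both $\Dyes$ and $\Dno$ when $q=\tilde{O}(n^{2/3})$; conditioned on any good outcome, a likelihood-ratio computation directly generalizing \Cref{lem:non-adaptive2} yields a product of factors of the form $1\pm O(|P^{(2)}_{i,j}|\log n/n)$ over occupied sub-term buckets, which evaluates to $1\pm\alpha$ under the chosen parameters. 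The main obstacle is handling adaptivity: the bad-event probability has to be controlled on every one of the $2^q$ tree paths, and the per-bucket budgets must be chosen so that no adaptive strategy can ``funnel'' queries into a single super-term without triggering a bad event. The two-level budget trade-off is exactly what turns the $\tilde{\Omega}(n)$ non-adaptive bound of \Cref{thm:two-sided-non-adaptive-lb-hehe} into the $\tilde{\Omega}(n^{2/3})$ adaptive bound, matching the $\sqrt{n}\mapsto n^{1/3}$ degradation from depth-2 to depth-3 Talagrand in the standard model.
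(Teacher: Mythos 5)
Your high-level strategy is the right one and matches the paper: build a two-layer, depth-3-Talagrand-style pair $\Dyes,\Dno$, reduce to deterministic $\MQ$-only deciders via \Cref{claim:two-layer1} and Yao, define a per-outcome notion of ``goodness,'' and finish with a likelihood-ratio computation. But there are two concrete gaps that would stop this proof from going through as written.

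First, your construction of the inner level is broken. You put conjunctive sub-terms $T_{i,1},\dots,T_{i,L_2}$ whose variables are drawn from the variables of the super-term $\Phi_i$, and then ask for ``inner-unique satisfaction.'' If $x$ satisfies $\Phi_i$, then every coordinate appearing in $\Phi_i$ equals $1$ in $x$, so $x$ automatically satisfies every sub-term $T_{i,j}$: the inner multiplexer can never select a unique branch, and the construction collapses to the depth-2 one. The paper avoids this by making the second level \emph{disjunctive clauses} $C_{i,j}$ (drawn from all of $[n]$, not from $\Phi_i$'s support) with a \emph{unique-falsification} rule, and correspondingly \emph{flips the defaults} at the inner level (no clause falsified $\mapsto 1^*$, multiple clauses falsified $\mapsto 0^*$). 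Those reversed defaults are exactly what make the yes-case monotone once you flip a $0$ to a $1$ inside the two active layers: the number of satisfied terms only increases, the number of falsified clauses only decreases, and in both cases the multiplexer output can only move toward $1$. Your stated defaults (``$0$ on no-satisfaction, $1$ on multiple-satisfaction at either level'') are not compatible with that argument at the inner level.

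Second, the way you propose to control the bad event under adaptivity is too lossy. ``Applied along each tree path and then summed over paths'' would require each path's bad event to have probability $\ll 2^{-q}$, which is not what the conditional calculations give. The paper instead runs a per-round inductive argument: fix any regular prefix of the execution (the outcome after $k$ queries), condition on reaching it, and show the next query makes the execution irregular with probability $o(1/q)$; then union-bound over the $q$ rounds. This also requires the auxiliary modified algorithm $\ALG'$ (which forces $\rho$ to look consistent and hence never triggers Condition $C_2$) so that Conditions $C_1$ and $C_2$ can be decoupled, together with \Cref{finalclaim2} relating $\ALG'$ back to $\ALG$. Finally, the specific per-bucket budget in the goodness condition is $\min\{|P_{i,j}|^2,|P_i|\}\cdot\log n$; the $q^{3/2}$ growth of $\sum_{i}\sum_{j}\min\{|P_{i,j}|^2,|P_i|\}$ under $\sum_i|P_i|\le 2q$ is exactly what yields the $q=\tilde{\Theta}(n^{2/3})$ threshold, and it is not recoverable from the generic ``budget'' language in your sketch. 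You would need to write this down to get the claimed exponent.
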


Observations similar to those below
  the statement of \Cref{thm:two-sided-non-adaptive-lb-hehe} show that 
  \Cref{thm:two-sided-adaptive-lb-hehe} implies \Cref{main:adaptivelowerbound}.
As it will become clear in \Cref{sec:hehedist}, all functions used in our lower bound proof
  are two-layer functions as well.
It follows from \Cref{claim:two-layer1} that it suffices to show that 
  any randomized, adaptive, $\MQ$-only algorithm for testing
  two-layer functions with success probability $(99/100)\cdot 2/3$ must make $\tilde{\Omega}((\log n)^{2/3})$ queries.
So in the rest of the section, we focus on $\MQ$-only algorithms.

\subsection{Distributions $\Dyes$ and $\Dno$}\label{sec:hehedist}

Our proof is an adaptation of the $\tilde{\Omega}(n^{1/3})$ lower bound from \cite{CWX17stoc} (see Section~3) for adaptive monotonicity testing in the standard model.

We start with the distribution $\calE$. Let $L\coloneqq (4/3)^n$ and $M\coloneqq 4^n$.
$\calE$ is uniform over all pairs $(T, C)$ of the following
form: $T = (T_i:i\in [L])$ with $T_i:[n]\rightarrow [n]$ and  
$C = (C_{i,j} : i\in [L],j\in[M])$ with $C_{i,j} : [n]\rightarrow [n]$.
We call $T_i$'s the \emph{terms} and $C_{i,j}$'s the \emph{clauses}. 
Equivalently, to draw $(\TT ,\CC)\sim \calE$:
\begin{flushleft}\begin{itemize}
\item For each $i\in  [L]$, we sample a random term $T_i$ by sampling 
  $T_i(k)$ independently and uniformly from $[n]$ for each $k\in [n]$;
\item For each $i\in [L]$ and $j\in[M]$, we sample a random clause $C_{i,j}$ 
  by sampling $C_{i,j}(k)$ independently and uniformly from $[n]$ for each $k\in [n]$.
\end{itemize}\end{flushleft}

Given a pair $(T, C)$, we interpret $T_i$ as a term and 
  abuse the notation to write $$T_i(x)= \bigwedge_{k\in [n]} x_{T_i(k)}$$
as a Boolean function over $\{0,1\}^n$. We say $x$ satisfies $T_i$ when $T_i(x) = 1$. 
On the other hand, we interpret each $C_{i,j}$ as a clause 
  and abuse the notation to write $$C_{i,j}(x) = \bigvee_{k\in [n]} x_{C_{i,j}(k)}.$$
  
Each pair $(T,C)$ defines a multiplexer map $\Gamma = \Gamma_{T,C} : \{0, 1\}^n
  \rightarrow ([L]\times [M])\cup\{0^*, 1^*\}$ as follows: 
We have $\Gamma(x) = 0^*$ if $T_i(x) = 0$ for all $i\in [L]$ and 
  $\Gamma(x) = 1^*$ if $T_i(x) = 1$ for at least two different $i$'s in $[L]$. 
Otherwise there is a unique $i'\in [L]$ with $T_{i'}(x)=1$.
In this case the multiplexer enters the second level and examines 
  $C_{i',j}(x)$, $j\in [M]$. 
We have $\Gamma(x) = 1^*$ if $C_{i',j}(x)=1$ for all $j\in [M]$ and
  $\Gamma(x) = 0^*$ if $C_{i',j}(x)=0$ for at least two different $j$'s in $[M]$.
Otherwise there is a unique $j'\in [M]$ such that $C_{i',j'}(x)=0$,
  in which case the multiplexer outputs $\Gamma(x)=(i',j')$.
  
Next we define $\Dyes$ and $\Dno$.
A function $\ff\sim \Dyes$ is drawn using the following procedure:
\begin{flushleft}\begin{enumerate}
\item Sample a pair $(\TT,\CC)\sim \calE$, which is used to define a multiplexer map 
  $$\bGamma=\Gamma_{\TT,\CC}:\{0,1\}^n\rightarrow ([L]\times [M])\cup \{0^*,1^*\}.$$
\item Sample $\HH = (\bh_{i,j} : i\in [L], j\in[M])$ from $\Eyes$, where each $\hh_{i,j} : \{0, 1\}^n \rightarrow \{0, 1\}$ is 
  (1) with probability $2/3$, a random dictatorship Boolean function, i.e., $\hh_i (x) = x_k$ with $k$ sampled uniformly at random from $[n]$; and (2) with probability $1/3$, the all-$0$ function.
\item Finally, $\ff = f_{\TT ,\CC,\HH} : \{0, 1\}^n\rightarrow \{0,1\}$ is defined as follows:
  $\ff(x) = 1$ if $|x| > 3n/4+1$;
  $\ff(x) = 0$ if $|x| < 3n/4$; and 
  when $|x|\in \{3n/4,3n/4+1\}$, we have
$$
\ff(x)=\begin{cases}
0 & \text{if $\bGamma(x)=0^*$}\\
1& \text{if $\bGamma(x)=1^*$}\\
\hh_{\bGamma(x)}(x)& \text{otherwise ($\bGamma(x)\in [L]\times [M]$)}
\end{cases}
$$
\end{enumerate}\end{flushleft}
 
A function $\ff \sim\Dn$ is drawn using the same procedure, with the exception that $\HH = (\hh_{i,j} \colon i \in [L],j\in[M])$ is drawn from $\En$ (instead of $\Ey$): Each $\hh_i:\{0,1\}^n\rightarrow \{0,1\}$ is (1) with probability $2/3$ a random anti-dictatorship Boolean function $\hh_i(x) = \overline{x_k}$ with $k$ drawn and uniformly from $[n]$; and (2) with 
    probability $1/3$, the all-$1$ function.  
    
Our construction of $\Dyes$ and $\Dno$ again differs in various ways from the construction for the two-sided, adaptive lower bound in \cite{CWX17stoc}, such as the number of terms and clauses and the biasing of $\bh_i$ in $\Ey$ and $\En$. 
(See Figure~2 and Figure~3 of \cite{CWX17stoc} for some illustrations that
  can be helpful in understanding the $f_{T,C,H}$ functions.)   
    Note that every function in the support of either $\Dy$ or $\Dn$ is a two-layer function as defined in \Cref{eq:two-layer}.
   
   In the next two lemmas we show that functions in the support of $\Dyes$ are monotone and $\ff \sim\Dno$ is likely to have large relative distance from monotone functions.

\begin{lemma}
Every function $f$ in the support of $\Dyes$ is monotone.	
\end{lemma}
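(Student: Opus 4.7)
The plan is to prove monotonicity edge-by-edge: fix $f = f_{T,C,H}$ in the support of $\Dyes$ and show that whenever $x \prec y$ with $|y| = |x|+1$, we have $f(x) \le f(y)$. Because $f$ is identically $0$ below layer $3n/4$ and identically $1$ above layer $3n/4+1$, the only edges where monotonicity can fail are those with $|x| = 3n/4$ and $|y| = 3n/4+1$, so $y = x + e_k$ for a unique $k$ with $x_k = 0$. I would assume $f(x) = 1$ and show $f(y) = 1$.

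The driving observation is that each term $T_i$ (an AND of variables) and each clause $C_{i,j}$ (an OR of variables) is a monotone Boolean function of $x$. Hence if $x \prec y$, the set of terms satisfied by $y$ contains those satisfied by $x$, and $C_{i,j}(y) \ge C_{i,j}(x)$ for every $(i,j)$. With this in hand I would do a case analysis on $\bGamma(x)$:

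\begin{itemize}
\item If $\bGamma(x) = 1^*$ because $x$ satisfies two or more terms, then $y$ still satisfies those terms, so $\bGamma(y) = 1^*$ and $f(y) = 1$. If $\bGamma(x) = 1^*$ because $x$ satisfies a unique term $T_{i'}$ with $C_{i',j}(x) = 1$ for all $j$, then $T_{i'}(y) = 1$ as well; either $y$ satisfies some other $T_{i''}$, giving $\bGamma(y) = 1^*$, or else $y$ uniquely satisfies $T_{i'}$, in which case monotonicity of the clauses gives $C_{i',j}(y) = 1$ for all $j$ and again $\bGamma(y) = 1^*$.
\item If $\bGamma(x) = (i',j') \in [L]\times[M]$, then $f(x) = h_{i',j'}(x) = 1$. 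Since $\HH$ is drawn from $\Eyes$, the function $h_{i',j'}$ is either the all-$0$ function or a dictatorship $x_k$; the former is incompatible with $f(x) = 1$, so $h_{i',j'}(x) = x_{k^*}$ for some coordinate $k^*$ with $x_{k^*} = 1$. For $y$, monotonicity of the terms again forces $T_{i'}(y) = 1$, and monotonicity of the clauses forces $C_{i',j}(y) = 1$ for all $j \ne j'$. The only remaining scenarios are $\bGamma(y) = 1^*$ (giving $f(y) = 1$ immediately) or $\bGamma(y) = (i',j')$, in which case $f(y) = h_{i',j'}(y) = y_{k^*} \ge x_{k^*} = 1$.
\end{itemize}

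In every subcase $f(y) = 1$, which completes the proof. The only subtle spot is the third case, where one must carefully rule out $\bGamma(y) = 0^*$ or $\bGamma(y) = (i', j'')$ with $j'' \ne j'$; the former is impossible because $T_{i'}(y) = 1$, and the latter is impossible because monotonicity of the $M-1$ clauses $C_{i',j}$ with $j \ne j'$ prevents any new clause from failing at $y$. Everything else is bookkeeping that follows directly from the monotonicity of ANDs, ORs, and dictatorships.
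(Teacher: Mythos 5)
Your proposal is correct and follows essentially the same approach as the paper: reduce to edges between layers $3n/4$ and $3n/4+1$, use monotonicity of the terms $T_i$ and clauses $C_{i,j}$ to constrain $\bGamma(y)$ given $\bGamma(x)$, and use the fact that a nonzero $h_{i',j'}$ in $\Eyes$ is a (monotone) dictatorship. The paper organizes the case analysis around $\bGamma(y)$ while you organize it around $\bGamma(x)$, but the two routes are logically interchangeable and rest on the same observations.
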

\begin{proof}
Fix a pair $(T,C)$ from the support of $\calE$ and $H$ from the support of $\Eyes$. This fixes a function $f$ in the support of $\Dyes$.

Note that for two-layer functions, the only nontrivial points lie on layers $\{3n/4,3n/4+1\}$. Thus, consider any $x\prec y$ such that $|x|=3n/4$ and $|y|=3n/4+1$. Note that they in fact only differ on one bit. Assume that $f(x)=1$, and we will show that $f(y)=1$. 

Since $f(x)=1$, we know that $x$ satisfies at least one term and falsifies at most one clause. This implies that $y$ satisfies at least one term and falsifies at most one clause as well. If $y$ satisfies multiple terms \textit{or} satisfies a unique term $T_{i'}$ and falsifies no clauses, then $f(y)=1$. So consider the case where $y$ satisfies a unique term $T_{i'}$ and falsifies a unique clause $C_{i',j'}$. Since $x$ is below $y$ and $f(x)=1$, we have $x$ uniquely satisfies the term $T_{i'}$ and uniquely falsifies the clause $C_{i',j'}$ as well. Again, since $f(x)=1$, we have $h_{i',j'}\neq 0$.

This means $h_{i',j'}=x_k$ for some $k\in[n]$. Since $f(x)=1$, we have $x_k=1$, which implies $y_k=1$ and $f(y)=1$. 
This finishes the proof.
\end{proof}

\begin{lemma}\label{lem:dist2}
A function $\ff \sim \Dno$ satisfies $\reldist(\ff,\cmon)=\Omega(1)$ with probability  $\Omega(1)$.  
\end{lemma}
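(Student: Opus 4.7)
I would follow the blueprint of \Cref{lem:nonmono} adapted to the two-level (term + clause) multiplexer. The plan is to identify an explicit set $\bX$ of ``low'' vertices $x$ at layer $3n/4$, each seeding a violating edge $(x,\,x+e_{k(x)})$ between layers $3n/4$ and $3n/4+1$, and then argue that (i) $\Pr[x \in \bX] = \Omega(1)$ for each such $x$, and (ii) the resulting edges form a matching. Since $\ff$ is a two-layer function, violating edges live entirely in the bipartite graph between layers $3n/4$ and $3n/4+1$, so a matching of $m$ violating edges forces $|\ff \triangle g| \geq m$ for every monotone $g$ (by the König/vertex-cover characterization of the monotonization distance in this bipartite setting). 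Combined with $|\ff^{-1}(1)| = \Theta\big({n \choose 3n/4}\big)$, this gives $\reldist(\ff,\cmon)=\Omega(1)$.

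Formally, I would let $\bX$ be the set of all $x$ with $\|x\|_1 = 3n/4$ such that there exist $k \in [n]$ with $x_k = 0$ and indices $i' \in [L]$, $j' \in [M]$ for which, writing $y := x + e_k$: (a) $\bT_{i'}$ is the unique term satisfied by $x$, and $\bT_j(y) = 0$ for every $j \neq i'$; (b) $\bC_{i',j'}$ is the unique clause in $\{\bC_{i',j}\}_{j \in [M]}$ falsified by $x$, and $\bC_{i',j'}(y) = 0$; (c) $\bh_{i',j'}$ is the anti-dictator $\overline{x_k}$. For $x \in \bX$, conditions (a)--(c) force $\bGamma(x) = \bGamma(y) = (i',j')$ (on the clause side, $x \prec y$ gives $\bC_{i',j}(x)=1 \Rightarrow \bC_{i',j}(y)=1$), so $\ff(x) = \overline{x_k} = 1$ and $\ff(y) = \overline{y_k} = 0$, making $(x,y)$ a violating edge.

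For (i), the independence of $(\bT,\bC)$ and $\bH$ together with a stage-by-stage decomposition gives
\[
\Pr[x \in \bX] \;\geq\; \Pr\!\big[\bGamma(x)\in[L]\times[M]\big]\cdot\Pr\!\big[\bh_{\bGamma(x)} \text{ is anti-dict at some } k^* \text{ with } x_{k^*}=0\big]\cdot\Pr\!\big[\bGamma(x+e_{k^*})=\bGamma(x)\big],
\]
and each factor can be estimated via Poisson-style computations of the kind in \Cref{claim:a}. Specifically, with $L=(4/3)^n$ and $M=4^n$: the first factor tends to $e^{-2}$ (exactly one of $L$ terms is satisfied by $x$ with probability $\to e^{-1}$, and then exactly one of $M$ clauses is falsified by $x$ with probability $\to e^{-1}$); the second equals $(2/3)(n/4)/n = 1/6$; and the third tends to a positive constant via the conditional probabilities $\Pr[\bT_j(y)=0 \mid \bT_j(x)=0]$ and $\Pr[\bC_{i',j'}(y)=0 \mid \bC_{i',j'}(x)=0]$. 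Hence $\Pr[x \in \bX] \geq c$ for a universal constant $c > 0$, and applying Markov's inequality to ${n \choose 3n/4} - |\bX|$ gives $|\bX| \geq (c/2) {n \choose 3n/4}$ with probability $\Omega(1)$.

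For (ii), suppose $x_1, x_2 \in \bX$ with $y(x_1) = y(x_2) =: y$, where $y(x) := x + e_{k(x)}$ and $k(x)$ is the (unique) negated coordinate of the anti-dictator $\bh_{\bGamma(x)}$. Since $\bGamma(y)$ is single-valued, we must have $\bGamma(x_1) = \bGamma(x_2) = \bGamma(y) =: (i',j')$; then both $x_1$ and $x_2$ choose their $k$ via the same $\bh_{i',j'}$, forcing $k(x_1) = k(x_2)$ and hence $x_1 = x_2$. Thus $\{(x,y(x)) : x \in \bX\}$ is a matching of violating edges of size $\Omega\big({n \choose 3n/4}\big) = \Omega(|\ff^{-1}(1)|)$ with $\Omega(1)$ probability, completing the proof. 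The main delicate step is the probability estimate in (i): one has to track that, conditionally on the choice of $i'$ singled out by $\bT$, the clause variables $\bC_{i',\cdot}$ remain uniformly random (so the term-side and clause-side computations factor), and that the small perturbations in Hamming weight when passing from $x$ to $y$ introduce only constant-factor corrections.
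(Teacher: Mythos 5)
Your proof follows essentially the same approach as the paper's: define a set $\bX$ of ``seed'' vertices at layer $3n/4$ giving violating edges into layer $3n/4+1$, show each candidate $x$ is in $\bX$ with constant probability by a stage-wise Poisson-style calculation over terms, clauses, and the anti-dictator, and conclude via an expectation/Markov argument. One place where you are more careful than the paper is step (ii): the paper asserts the violating pairs are disjoint with only a terse parenthetical justification, whereas you correctly argue injectivity of $x \mapsto y(x)$ by noting that $\bGamma(y)$ pins down $(i',j')$, which pins down $k$ via $\bh_{i',j'}$, which pins down $x$; this is the right argument and makes the matching claim airtight. (Also, for the final step you invoke König/vertex-cover, which is more than needed --- a matching of $m$ violating edges already forces at least $m$ modifications since each edge needs at least one endpoint changed --- while the paper cites \Cref{lem:vio-edges-lb}, which is actually the opposite direction; both references are somewhat off but the elementary observation suffices.)
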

\begin{proof}
We will show that with probability at least $0.0001$, the number of disjoint violating pairs to monotonicity is $\Omega(|\ff^{-1}(1)|)$. The lemma then follows directly from \Cref{lem:vio-edges-lb}.

Fix a function $f$ from the support of $\Dno$ (equivalently, fix $(T,C)$ from the support of $\calE$ and $H$ from the support of $\Eno$.) Note that violating pairs can only appear at the two non-trivial layers $\{3n/4,3n/4+1\}$. Thus, at a high level, we want to show there are many disjoint pairs $x\prec y$ such that 1) $|x|=3n/4$ and $|y|=3n/4+1$, and 2) $f(x)=1$ and $f(y)=0$. 

Fix an $x\in\{0,1\}^n$ such that $|x|=3n/4$. Observe that if $x$ is involved in some violating pair, then it must be the case that $x$ satisfies a unique term $T_{i'}$ for $i'\in[L]$ and falsifies a unique clause $C_{i',j'}$ for $j'\in [M]$. Furthermore, it needs to be that $h_{i',j'}=\overline{x_k}$ for some $k\in[n]$ such that $x_{k}=0$. So the conditions above are necessary. It's also easy to see if they hold, then $f(x)=1$, making $x$ a candidate point to be part of some violating pair.

Let $y$ be the string that is obtained by flipping the $k$th bit of $x$. Ideally, we want to conclude that $f(y)=0$. Indeed, $h_{i',j'}(y)=0$, but we need to be careful to make sure that $y$ still satisfies the unique term $T_{i'}$ and falsifies the unique clause $C_{i',j'}$. 

Formally, let $X$ be the set of points $x$ such that 1) $|x|=3n/4$, 2) $x$ satisfies a unique term $T_{i'}$ for $i'\in[L]$, falsifies a unique clause $C_{i',j'}$ for $j'\in [M]$, and $h_{i',j'}=\overline{x_k}$ for some $k\in[n]$ such that $x_{k}=0$, 3) $y$ (defined by flipping the $k$th bit of $x$ from 0 to 1) satisfies the unique term $T_{i'}$ and falsifies the unique clause $C_{i',j'}$.

Note that every $x$ and the corresponding $y$ form a disjoint violating pair (since they must satisfy the unique term and clause so that the anti-dictatorship function is well-defined), thus the number of disjoint violating pairs is lower bounded by the size of $X$. Note that $\bX$ is a random variable depending on the choices of $(\bT,\bC)$ and $\bH$. Next, we show that every fixed $x$ is in $\bX$ with constant probability:

\begin{claim}
  For each $x\in\{0,1\}^n$ such that $|x|=3n/4$, we have 
  
  \[
    \Prx_{(\TT,\CC)\sim\calE,\HH\in \Eno}[x\in \bX]\geq 0.01.
  \]
\end{claim}
\begin{proof}
We establish the claim by calculating the probability in three steps. Fix $x\in\{0,1\}^n$ such that $|x|=3n/4$ and $k\in[n]$ such that $x_k=0$. Let $y$ be $x$ but with the $k$th coordinate flipped to 1. Note that $|y|=3n/4+1$.

First, the probability that $x$ and $y$ uniquely satisfy a term $\TT_i$ for some $i\in[L]$ is 
\[
L\cdot \left(\frac{3}{4}\right)^n\cdot \left(1-\left(\frac{3}{4}+\frac{1}{n}\right)^n\right)^{L-1}\geq 1/e-0.01\geq 0.35,
\]
for some sufficiently large $n$.

Fix such an $i'\in[L]$. Second, the probability that $x$ and $y$ uniquely falsify a clause $\CC_{i',j}$ for some $j\in[M]$ is 
\[
M\cdot \left(\frac{1}{4}-\frac{1}{n}\right)^n\cdot \left(1-\left(\frac{1}{4}\right)^n\right)^{M-1}\geq 1/e-0.01\geq 0.35,
\]
for some sufficiently large $n$.

Fix such a $j'\in[M]$. The probability that $\hh_{i',j'}=\overline{x_k}$ is $\frac{2}{3n}$. Since $x$ has $n/4$ coordinates that are 0, we have
\[
\Prx_{(\TT,\CC)\sim\calE,\HH\in \Eno}[x\in \bX]\geq (0.35)^2\cdot 1/6\geq 0.02.
\]

This finishes the proof.
\end{proof}

Note that the number of points on the $3n/4$ layer is ${n\choose 3n/4}\geq |f^{-1}(1)|/100$. Thus, the expected size of $\bX$ is at least $|f^{-1}(1)|/10000$, and  $|\bX|=\Omega(|f^{-1}(1)|)$ with probability at least 0.0001.
\end{proof}

Let $\eps_0$ and $c$ be the two constants hidden in the $\Omega(1)$'s in \Cref{lem:dist2}, i.e., a function $\ff\sim\Dno$ has relative distance to monotonicity at least $\eps_0$ with probability at least $c$. Let $\alpha=\min(\eps_0, c)/100$
  and $q:=(n/C_0\log n)^{2/3}$ for some sufficiently large
  constant $C_0$ to be specified later.
Following arguments similar to those around \Cref{non-adaptive-mono-bound}
  in the previous section,
  it suffice to prove the following lemma for 
  deterministic, adaptive, $q$-query, $\MQ$-only algorithms.
    
\begin{lemma}\label{lem:kaka}
Let $\ALG$ be any deterministic, adaptive, $q$-query, $\MQ$-only algorithm. 
Then we have 
$$
\Prx_{(\TT,\CC)\sim \calE,\HH\sim \Eyes}
\left[ \text{$\ALG$ accepts $f_{\TT ,\CC,\HH}$}\right]
\le 
(1+\alpha)\cdot \Prx_{(\TT,\CC)\sim \calE,\HH\sim \Eno}
\left[ \text{$\ALG$ accepts $f_{\TT,\CC,\HH}$}\right]+ \alpha.
$$
\end{lemma}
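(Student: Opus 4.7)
The plan is to adapt the outcome-based coupling argument used for the non-adaptive lower bound (\Cref{non-adaptive-mono-bound}) to the adaptive setting, where the main new ingredient will be a lazy ``principal partial view'' of $(\TT,\CC,\HH)$ that is revealed query-by-query. Since we work with $\MQ$-only algorithms, $\ALG$ is equivalently a binary decision tree of depth $q$ in which each internal node is a query $x\in\{0,1\}^n$ (adaptively chosen by the path so far), and each leaf is labeled \emph{accept} or \emph{reject}. We will bound the total variation distance between the distribution of the transcript $\bT_{\yes}$ induced on this tree by $\ff\sim\Dyes$ and the distribution $\bT_{\no}$ induced by $\ff\sim\Dno$; by \Cref{lem:mono} and \Cref{lem:dist2} this suffices to conclude. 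Trivial queries at layers outside $\{3n/4,3n/4+1\}$ are answered deterministically and so can be ignored in the analysis.

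The key structural step is to define, analogously to \Cref{subsec:dist1}, an \emph{outcome} on a set $V$ of queries that now reflects the two-level multiplexer. The outcome records, for each term $T_i$, the set $P_i\subseteq V$ of queries witnessed to satisfy $T_i$ uniquely; for each pair $(i,j)$ lying under a realized term, the set $P_{i,j}\subseteq V$ of queries witnessed to uniquely falsify clause $C_{i,j}$ within that term; together with the companion ``not-satisfied'' / ``not-falsified'' sets $R_i$, $R_{i,j}$; and the values $\rho_{i,j}(x)=\bh_{i,j}(x)$ for $x\in P_{i,j}$. An analogue of \Cref{fact:1} holds: the transcript of $\ALG$ on $\ff=f_{\TT,\CC,\HH}$ is determined by the outcome on the realized query set. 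We then expose $(\TT,\CC,\HH)$ lazily: at each query $x$ we reveal only the minimal information needed to answer $\ff(x)$ (which term index $i(x)$, if any, $x$ uniquely satisfies; which clause index $j(x)$ under that term, if any, $x$ uniquely falsifies; and $\bh_{i(x),j(x)}(x)$ when needed). This converts the adaptive analysis into a sequence of ``which term/clause does $x$ trigger'' events driven by independent uniform fills of $\TT$, $\CC$, $\HH$ restricted to what remains unrevealed.

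With this setup I would define ``bad'' outcomes exactly as in \Cref{def:bad-events} but now including both levels of the multiplexer: (a) there exist $i\in[L]$ and $x,y\in P_i$ sharing fewer than $3n/4-100\log n$ common $1$-coordinates; (b) there exist $(i,j)$ and $x,y\in P_{i,j}$ sharing fewer than $3n/4-100\log n$ common $1$-coordinates; (c) there exist $i\in[L]$ and $x,y\in P_i$ with differing $\bh_{i,j}$-induced labels forced at the same $(i,j)$-cell; and (d) any ``regularity'' failure of the sample-like first-level behavior. The probability of each bad event, with the lazy exposure, is bounded by copying the per-pair arguments in the proofs of \Cref{lemma:need1} and \Cref{claim:lastone}: conditioning on the already-revealed term/clause constraints only changes each per-pair probability by $1\pm o_n(1)$ because prior revealed $R$-constraints have exponentially small effect (they are satisfied with probability $1-o_n(1)$), and a ``collision'' between two queries $x,y$ in the same $T_i$ or $C_{i,j}$ requires the uniform map $T_i$ or $C_{i,j}$ to avoid the at-most-$100\log n$ disagreement coordinates, costing a factor of $(1-\Theta(\log n/n))^n$. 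Taking a union bound over the $\le q^2$ pairs of queries, and over the $\le q$ terms and $\le q^2$ term/clause cells ever revealed along any root-to-leaf path, gives total bad probability at most $\tilde O(q^3/n)\cdot 2^{-\Omega(\log n)}+o_n(1)$, which is $o_n(1)$ once $q=(n/(C_0\log n))^{2/3}$ with $C_0$ large.

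Finally, conditioned on a good realized outcome $U$, I would repeat the likelihood-ratio bound of \Cref{lem:non-adaptive2} level-by-level. On a good transcript, for each realized cell $(i,j)$ the $\bh_{i,j}$-dependent factor of the ratio $\Pr[\bT_{\no}=\tau]/\Pr[\bT_{\yes}=\tau]$ takes exactly the form in (\ref{eq:ratio1})--(\ref{eq:ratio3}), with $|A_{i,j,1}|=(3n/4)\pm O(|P_{i,j}|\log n)$ and $|A_{i,j,0}|=(n/4)\pm O(|P_{i,j}|\log n)$, contributing a factor $1\pm O(|P_{i,j}|\log n/n)$; the term-level $\bGamma$-factors agree between $\Dyes$ and $\Dno$ and contribute $1$. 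Multiplying over the $\le 2q$ realized term cells and the $\le 2q$ realized clause cells along a path, and using $\sum|P_{i,j}|\le 2q$, yields a ratio in $1\pm O(q^2\log n/n)=1\pm\alpha$ for $C_0$ large, matching what is needed in \Cref{lem:kaka}. Combining with the $o_n(1)\le\alpha$ bad-event mass then delivers the desired $(1+\alpha,\alpha)$-inequality exactly as in the derivation of \Cref{non-adaptive-mono-bound}. The main obstacle is the bookkeeping for the lazy exposure --- ensuring that the conditioning from previously revealed $R$- and $R_{i,j}$-constraints perturbs each per-pair collision probability by only a $1\pm o_n(1)$ factor so that the per-pair bound still carries the $(3/4)^n$ savings, and that the two-level union bound collects only $\tilde O(q^3)$ events rather than $\tilde O(q^4)$; this is where the $q\le n^{2/3}/\polylog n$ threshold is tight.
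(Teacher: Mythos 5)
Your high-level architecture matches the paper's: define an outcome tuple $(Q;P;R;\rho)$ that records which term/clause cells each query lands in and the revealed $\bh_{i,j}$-values, define a good/bad dichotomy, prove a pointwise likelihood-ratio bound for good outcomes, and show good outcomes occur with probability $\ge 1-\alpha$. However, there are two substantive gaps.

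First, your claimed bound $|A_{i,j,1}| = (3n/4)\pm O(|P_{i,j}|\log n)$ is wrong, and this is not a bookkeeping detail --- it is the quantity that determines the exponent $2/3$. The regularity/lazy-exposure mechanism can only control $A_{i,1}$ (at the term level, giving $\pm O(|P_i|\log n)$) and $A_{i,j,0}$ (at the clause level, giving $\pm O(|P_{i,j}|\log n)$); the set $A_{i,j,1}$ is \emph{not} directly controlled. One must infer it either via $A_{i,j,1}\supseteq A_{i,1}$, giving $\pm O(|P_i|\log n)$, or via the $0$-agreement set $A_{i,j,0}$ and the two-layer weight constraints, which yields $\pm O(|P_{i,j}|^2\log n)$ after summing pairwise $\{x_k=1,y_k=0\}$-disagreements over $y \in P_{i,j}$. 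Hence Condition $C_1$ (\eqref{eq:333}) uses $\pm O(\min\{|P_{i,j}|^2,|P_i|\}\log n)$, and optimizing $\sum_{i,j}\min\{|P_{i,j}|^2,|P_i|\}$ subject to $\sum_i|P_i|\le 2q$ gives $O(q^{3/2})$, which is exactly why $q = (n/C_0\log n)^{2/3}$. With your linear bound the sum would be $O(q\log n /n)$ and (if it were correct) the argument would go through for $q$ up to $\tilde\Omega(n)$ adaptively, which is not what the paper establishes; this should have been a red flag. Relatedly, your arithmetic ``$1\pm O(q^2\log n/n)=1\pm\alpha$'' fails: for $q=(n/C_0\log n)^{2/3}$, $q^2\log n/n = \omega(1)$.

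Second, your lazy-exposure handling of Condition $C_2$ (consistency of $\rho_{i,j}$ across $P_{i,j}$) is under-specified precisely where adaptivity bites. The paper does not directly bound $\Pr[\ALG$ produces a $C_2$-violating outcome$]$ by a union bound: it introduces a modified algorithm $\ALG'$ that, upon adding a second point to some $P_{i,j}$, \emph{pretends} $\rho_{i,j}$ is consistent (copying the value from the first point) rather than revealing the true $\bh_{i,j}(x)$. This guarantees $\ALG'$'s outcomes always satisfy $C_2$, decouples the $C_2$ analysis from the adaptive query choice, and lets one show (Claim~\ref{finalclaim2}) that for any $C_1$-satisfying outcome $\ell$, $\Pr[\ALG\text{ gives }\ell]\ge(1-\alpha/2)\Pr[\ALG'\text{ gives }\ell]$, while Claim~\ref{finalclaim1} handles $C_1$ for $\ALG'$ via a regularity notion. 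Your sketch flags ``the main obstacle is the bookkeeping'' but offers no mechanism analogous to $\ALG'$ for handling the fact that later adaptive queries may depend on the revealed $\bh_{i,j}$-values, which is precisely where a naive per-pair union bound breaks down.
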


\subsection{Proof of \Cref{lem:kaka}}

Now we start the proof of \Cref{lem:kaka}.
Formally, a deterministic, adaptive, $q$-query algorithm $\ALG$ is a 
  depth-$q$ binary decision tree, in which each internal node $u$
  is labelled with a query point $x_u$ and has two children with edges to them
  labelled with $0$ and $1$, respectively.
Each leaf of the tree is labelled either ``accept'' or ``reject.''
Given any function $f$, the execution of $\ALG$ on $f$ induces
 a path in the tree and $\ALG$ returns the label of the leaf at the end.
Since the goal of $\ALG$ is to distinguish $\Dyes$ from $\Dno$ (as described in 
the statement of \Cref{lem:kaka}), we may
  assume without loss of generality that every point queried in $\ALG$ 
  lies in the two layers of $3n/4$ and $3n/4+1$. 
  
Given a function $f=f_{T,C,H}$ from the support of either $\Dyes$ or $\Dno$,
  we define the \emph{outcome} of $\ALG$ on $f$ to be the following
  tuple $(Q;P;R;\rho)$, where $Q$ is the set of $\le q$ queries made by~$\ALG$ along the path,
 $P=(P_i,P_{i,j}:i\in [L],j\in [M])$, $R=(R_i,R_{i,j}:i\in [L],j\in [M])$ with every
  $P_i,P_{i,j}$, $R_i,R_{i,j}\subseteq Q$, and $\rho=(\rho_{i,j}:i\in [L],j\in [M])$ with $\rho_{i,j}:P_{i,j}\rightarrow \{0,1\}$. $P,R,\rho$ are built as follows:
\begin{flushleft}\begin{enumerate}
 	\item Start with $P_i=R_i=P_{i,j}=R_{i,j}=\emptyset$ for all $i\in [L]$ and $j\in [M]$.
	\item We build $P$ and $R$ as follows. For each query $x\in Q$ made along the path, if no term in $T$ is satisfied, add $x\rightarrow R_i$ for 
		all $i\in [L]$; 
		if $x$ satisfies more than one term in $T$, letting $i<i'$ be the
		first two terms satisfied by $x$, add $x\rightarrow P_i$, $x\rightarrow P_{i'}$, and $x\rightarrow R_k$ for all $k: k<i'$ and $k\ne i$;
		if $x$ satisfies a unique term in $T$, say $T_i$, 
		add  $x\rightarrow P_i$ and $x\rightarrow R_k$
		for all $k\ne i$.
For the last case we examine clauses $C_{i,j}$, $j\in [M]$, on $x$.
If no clause $C_{i,j}$, $j\in [M]$, is falsified by $x$, add $x\rightarrow R_{i,j}$ for all 
  $j\in [M]$;
if more than one clause $C_{i,j}$, $j\in [M]$, are falsified by $x$,
  letting $j<j'$ be the first two such clauses, then 
  add $x\rightarrow P_{i,j}$, $x\rightarrow P_{i,j'}$ and $x\rightarrow R_{i,k}$ for all
    $k:k<j'$ and $k\ne j$;
 if a unique clause $C_{i,j}$ is falsified by $x$ for some $j\in [M]$,
   add $x\rightarrow P_{i,j}$ and $x\rightarrow R_{i,k}$ for all $k\ne j$.

	\item For each $i\in [L],j\in [M]$ and $x\in P_{i,j}$, set $\rho_{i,j}(x)=h_{i,j}(x)$. 
	(So $\rho_{i,j}$ is the dummy empty map for any $i,j$ with $P_{i,j}=\emptyset$.)	
\end{enumerate}\end{flushleft}

We record the following facts about outcomes of $\ALG$:
  
\begin{fact}\label{fact:5}
For each $x\in Q$, the value of $f(x)$ is uniquely determined by the outcome of $\ALG$ on $f$. So
if the outcomes of $\ALG$ on two functions $f$ and $f'$ are the same, then
  $\ALG$ returns the same answer on all queries in $Q$.
\end{fact}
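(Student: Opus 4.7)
The plan is to prove Fact~\ref{fact:5} by a direct unpacking of the definition of ``outcome.'' Since $\ALG$ is assumed to query only points in the two middle layers $\{3n/4, 3n/4+1\}$, for each $x \in Q$ the value $f(x) = f_{T,C,H}(x)$ is given by the three-way rule in the definition of $\ff \sim \Dyes, \Dno$, depending on whether $\Gamma(x) \in \{0^*, 1^*\}$ or $\Gamma(x) \in [L] \times [M]$. Thus it suffices to show that the tuple $(P; R; \rho)$ determines $\Gamma(x)$ (when it lies in $[L] \times [M]$) and $f(x)$ for every $x \in Q$.

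The key combinatorial observation, which follows by inspection of the construction of $(P,R)$, is that for each $x \in Q$ the number of indices $i \in [L]$ with $x \in P_i$ equals $\min(2,\, |\{i : T_i(x) = 1\}|)$; and, in the subcase that $x$ uniquely satisfies some term $T_i$, the number of indices $j \in [M]$ with $x \in P_{i,j}$ equals $\min(2,\, |\{j : C_{i,j}(x) = 0\}|)$. In particular, from the outcome alone one can decide whether $x$ satisfies zero, exactly one, or at least two terms, and in the ``exactly one'' case one can also identify the satisfying index $i$ (the unique $i$ with $x \in P_i$) and then decide whether $x$ falsifies zero, exactly one, or at least two clauses $C_{i,\cdot}$ (and, in the ``exactly one'' case, identify the falsifying index $j$).

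The resulting case analysis recovers $f(x)$: if $x$ lies in no $P_i$ then $\Gamma(x) = 0^*$ and $f(x) = 0$; if $x$ lies in two of the $P_i$'s then $\Gamma(x) = 1^*$ and $f(x) = 1$; if $x$ lies in a unique $P_i$, then analogously zero $P_{i,j}$'s gives $\Gamma(x) = 1^*$ and $f(x) = 1$, two $P_{i,j}$'s gives $\Gamma(x) = 0^*$ and $f(x) = 0$, and exactly one $P_{i,j}$ gives $\Gamma(x) = (i,j)$ with $f(x) = h_{i,j}(x) = \rho_{i,j}(x)$, which the outcome records explicitly. This establishes the first sentence.

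For the second sentence I would argue by a straightforward induction along the decision-tree execution: the root query of the deterministic decision tree $\ALG$ is fixed, and given the first $k$ queries together with their answers, the $(k+1)$-st query (or the terminal accept/reject label) is deterministically determined. If $\ALG$'s outcomes on $f$ and $f'$ coincide, then in particular $Q$ coincides and, by the first sentence, $f$ and $f'$ agree pointwise on $Q$; hence the execution paths coincide and $\ALG$ returns the same label. No real obstacle arises here — the fact is essentially a sanity check that the chosen bookkeeping of $(Q;P;R;\rho)$ is rich enough to serve as a sufficient statistic for $\ALG$'s behavior; the only mild care required is to notice that in the ``at least two satisfied terms'' (resp.\ ``at least two falsified clauses'') case, the specific identity of the first two indices does not affect $f(x)$, so it causes no ambiguity.
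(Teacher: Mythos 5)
Your proof is correct, and it is the natural unpacking of the definition of the outcome tuple that the paper leaves implicit (Fact~\ref{fact:5} is asserted without proof, as evident from the construction). Your key observation — that the multiplicity of $x$ among the $P_i$'s recovers $\min(2,|\{i: T_i(x)=1\}|)$, and likewise for the $P_{i,j}$'s once the unique $i$ is identified — is exactly what makes the case analysis on $\Gamma(x)$ go through, and the $\rho_{i,j}(x)$ entry supplies $f(x)$ in the one remaining case; the decision-tree induction for the second sentence is also the intended argument.
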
  

\begin{fact}\label{fact:6}
For each $i\in [L]$, we have $\cup_{j\in [M]} P_{i,j}\subseteq P_i$.
We also have 
$$
\sum_{i\in [L]} |P_i|\le 2|Q| \quad\text{and}\quad \sum_{j\in [M]} |P_{i,j}|\le 2|P_i|,\quad\text{for every $i\in [L]$.}
$$
\end{fact}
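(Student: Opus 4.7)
The proof is a direct case analysis that just reads off the construction in the definition of $\outc(f,Q\cup S)$ (adapted here to the adaptive setting with the second multiplexer level). There is no analytic content to worry about; the main task is bookkeeping, so I would state the three atomic cases from Step~2 of the construction once, and then apply them twice (once for terms, once for clauses within a fixed term).

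The plan is to start with the containment $\bigcup_{j\in [M]} P_{i,j}\subseteq P_i$. I would observe that the only step in the construction that writes into any $P_{i,j}$ is the clause-examination sub-step, which is reached only inside the branch of Step~2 where $x$ uniquely satisfies some term $T_i$. But in that same branch, before the clause sub-step runs, $x$ has already been added to $P_i$. Hence any $x\in P_{i,j}$ was simultaneously placed in $P_i$, giving the inclusion.

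Next I would prove $\sum_{i\in [L]}|P_i|\le 2|Q|$ by showing that each queried point $x\in Q$ is placed into \emph{at most two} sets of the form $P_i$. Reading Step~2, there are exactly three mutually exclusive cases for $x$: (a) $x$ satisfies no term of $T$ — then $x$ is added to zero $P_i$'s; (b) $x$ uniquely satisfies some $T_i$ — then $x$ is added only to $P_i$, i.e.\ one set; (c) $x$ satisfies at least two terms — then, letting $i<i'$ be the first two indices with $T_i(x)=T_{i'}(x)=1$, $x$ is added to exactly $P_i$ and $P_{i'}$, i.e.\ two sets. Summing the per-$x$ contribution $\le 2$ over $x\in Q$ gives the bound.

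Finally I would prove the per-$i$ bound $\sum_{j\in [M]}|P_{i,j}|\le 2|P_i|$ by the same three-case analysis applied at the clause level, combined with a partition of $P_i$. Fix $i\in [L]$ and split $P_i$ into those $x$ that uniquely satisfy $T_i$ (the ``unique-term'' part) and those that entered $P_i$ because $T_i$ was one of the first two satisfied terms (the ``multi-term'' part). For $x$ in the multi-term part, the clause sub-step is never executed on $x$, so $x$ contributes $0$ to $\sum_j|P_{i,j}|$. For $x$ in the unique-term part, the clause sub-step on $x$ falls into the same three cases as before, applied to the clauses $C_{i,j}$: $x$ goes into $0$, $1$, or $2$ of the sets $P_{i,j}$ depending on whether it falsifies zero, a unique, or at least two clauses. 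Thus each $x\in P_i$ contributes at most $2$ to $\sum_{j\in [M]}|P_{i,j}|$, yielding $\sum_{j\in [M]}|P_{i,j}|\le 2|P_i|$. No step is genuinely an obstacle; the only thing to be careful about is not double-counting in case~(c), which is handled by the explicit ``first two indices'' convention in the construction.
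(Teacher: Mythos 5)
Your case analysis is correct and is exactly the bookkeeping the paper has in mind: the paper records this as a fact evident from the construction (the clause sub-step runs only in the unique-term branch, after $x$ has been placed in $P_i$, and each of the two levels adds $x$ to at most two $P$-sets). Nothing is missing, and your handling of the multi-term points contributing nothing at the clause level is the only subtlety worth noting.
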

  
We say the outcome $(Q;P;R;\rho)$ of $\ALG$ on $f$ (from the support of either $\Dyes$ or $\Dno$) is
  \emph{good} if it satisfies  the following two conditions:
\begin{flushleft}\begin{enumerate}
\item Condition $C_1$: For every $i\in [L]$ and $j\in [M]$ with $P_{i,j}\ne \emptyset$, letting
$$
A_{i,j,0}=\big\{ k\in [n]: x_{k}=1\ \text{for all $x\in P_{i,j}$}\big\}\quad\text{and}\quad
A_{i,j,1}=\big\{k\in [n]:x_k=0\ \text{for all $x\in P_{i,j}$}\big\},
$$
we have
\begin{align}\label{eq:333}
 \big||A_{i,j,0}|-n/4\big|,\hspace{0.1cm}\big||A_{i,j,1}|-3n/4\big| \le O\left(\min\big\{|P_{i,j}|^2,|P_i|\big\}\cdot \log n\right).
\end{align}
\item Condition $C_2$: For every $i,j$ with $P_{i,j}\ne \emptyset$, we have 
  $\rho_{i,j}(x)=\rho_{i,j}(y)$ for all $x,y\in P_{i,j}$.
\end{enumerate}\end{flushleft}
 
Similar to \Cref{non-adaptive-mono-bound} in the previous section,  
  \Cref{lem:kaka} follows from the next two claims:
 
\begin{claim}\label{lem:lolo}
Let $\ell=(Q;P;R;\rho)$ be a good outcome.
Then we have 
\begin{align*}
&\Prx_{(\TT ,\CC)\sim\calE,\HH\sim\Eyes}\big[\text{outcome of $\ALG$ on $f_{\TT,\CC,\HH}$ is $\ell$}\big]\\ 
&\hspace{4cm}\le (1 + \alpha) \cdot \Prx_{(\TT,\CC)\sim\calE,\HH\sim\Eno}\big[\text{outcome of $\ALG$ on $f_{\TT,\CC,\HH}$ is $\ell$}\big].
\end{align*}
\end{claim}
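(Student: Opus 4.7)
The plan is to adapt the non-adaptive analysis (the proof of \Cref{lem:non-adaptive2}) to the two-level multiplexer and to the adaptive tree. First I would decompose the probability by conditioning on the term-clause pair:
\[
\Prx_{(\TT,\CC)\sim\calE,\HH\sim\Eyes}\big[\hspace{0.03cm}\text{outcome of $\ALG$ is $\ell$}\hspace{0.03cm}\big] \;=\; \sum_{(T,C)\in\mathcal{T}_\ell} \Prx[(\TT,\CC)=(T,C)]\,\cdot\,\Prx_{\HH\sim\Eyes}\big[\hspace{0.03cm}\text{outcome is $\ell$} \,\big|\,(T,C)\big],
\]
and similarly for $\Eno$. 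Here $\mathcal{T}_\ell$ is the set of pairs $(T,C)$ whose $(P;R)$-components induced by $\ALG$ match $(P;R)$, assuming the query answers are those dictated by the outcome. By \Cref{fact:5} the path the adaptive tree takes is determined by $\ell$ alone, so $\mathcal{T}_\ell$ depends only on $(Q;P;R)$ and is the same set for the $\Eyes$ and $\Eno$ sides. It therefore suffices to show the pointwise bound $\Prx_{\HH\sim\Eyes}[\,\cdot\,|\,(T,C)] \le (1+\alpha)\cdot \Prx_{\HH\sim\Eno}[\,\cdot\,|\,(T,C)]$ for every $(T,C)\in\mathcal{T}_\ell$.

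Once $(T,C)$ is fixed, the constraint on $\HH$ is that $\hh_{i,j}(x) = \rho_{i,j}(x)$ for every $(i,j)$ with $P_{i,j}\neq\emptyset$ and every $x\in P_{i,j}$, and the individual $\hh_{i,j}$'s are mutually independent in both $\Eyes$ and $\Eno$. By condition $C_2$ the map $\rho_{i,j}$ is constant on $P_{i,j}$ with value $b_{i,j}\in\{0,1\}$, so the ratio factorizes over pairs $(i,j)$ with $P_{i,j}\neq\emptyset$. Mirroring \eqref{eq:ratio1}--\eqref{eq:ratio3}, each factor is one of
\[
\frac{(2/3)|A_{i,j,1}|/n}{(2/3)|A_{i,j,0}|/n+1/3} \;\; (b_{i,j}=1) \qquad\text{or}\qquad \frac{(2/3)|A_{i,j,0}|/n+1/3}{(2/3)|A_{i,j,1}|/n} \;\; (b_{i,j}=0).
\]
After substituting the bounds $|A_{i,j,1}| = (3n/4)\pm O(m_{i,j}\log n)$ and $|A_{i,j,0}| = (n/4)\pm O(m_{i,j}\log n)$ from condition $C_1$, where $m_{i,j}:=\min\{|P_{i,j}|^2,|P_i|\}$, each factor becomes $1 + O(m_{i,j}\log(n)/n)$. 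The overall ratio is therefore bounded by $\exp\!\bigl(O(\log(n)/n)\cdot \sum_{i,j} m_{i,j}\bigr)$.

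The main obstacle — and the reason for the $q = \tilde\Theta(n^{2/3})$ rather than $\tilde\Theta(\sqrt n)$ query budget — is controlling $\sum_{i,j} m_{i,j}$. The naive bound $|P_{i,j}|^2$ alone is too weak; the minimum with $|P_i|$ guaranteed by condition $C_1$ enables the following two-regime accounting. For each $i$, split the $j$'s into those with $|P_{i,j}|\le\sqrt{|P_i|}$ and those with $|P_{i,j}|>\sqrt{|P_i|}$. In the small regime $m_{i,j}=|P_{i,j}|^2 \le \sqrt{|P_i|}\cdot|P_{i,j}|$, so summing over $j$ gives $O(|P_i|^{3/2})$ via \Cref{fact:6}. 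In the large regime $m_{i,j}\le|P_i|$, and there are at most $2\sqrt{|P_i|}$ such $j$'s (again by $\sum_j |P_{i,j}|\le 2|P_i|$ from \Cref{fact:6}), also giving $O(|P_i|^{3/2})$. Summing in $i$, Cauchy--Schwarz together with \Cref{fact:6} gives $\sum_i |P_i|^{3/2} \le (2q)^{1/2}\cdot \sum_i |P_i| = O(q^{3/2})$. Plugging in $q=(n/(C_0\log n))^{2/3}$ yields $q^{3/2}\log(n)/n = O(1/C_0) \le \alpha$ for $C_0$ sufficiently large, completing the proof.
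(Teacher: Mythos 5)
Your proof is correct and follows essentially the same route as the paper's: fix $(T,C)$, factor the likelihood ratio over nonempty $P_{i,j}$'s using the independence of the $\hh_{i,j}$'s, bound each factor by $1+O(\min\{|P_{i,j}|^2,|P_i|\}\log n/n)$ via condition $C_1$, and then sum. The only cosmetic difference is in bounding $\sum_{j}\min\{|P_{i,j}|^2,|P_i|\}$: the paper informally asserts this is "maximized when $|J_i|=2\sqrt{|P_i|}$ and $|P_{i,j}|=\sqrt{|P_i|}$," whereas you make the optimization explicit via a two-regime split; both give $O(|P_i|^{3/2})$, and the final step $\sum_i|P_i|^{3/2}\le O(q^{3/2})$ is the same (though your appeal to "Cauchy--Schwarz" is really just the bound $|P_i|^{1/2}\le(2q)^{1/2}$ from $\sum_i|P_i|\le 2q$).
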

\begin{proof}
The proof is similar to that of \Cref{lem:non-adaptive2}.

Fix a good outcome $\ell=(Q;P;R;\rho)$.
Consider a fixed pair $(T,C)$ in the support of $\calE$ such that the probability of $(T,C, \HH)$ resulting in $\ell$ is positive when $\HH\sim\Ey$. Then it suffices to show that 
\begin{equation}\label{eq:444} \dfrac{\Pr_{\HH\sim\En}[(T,C, \HH) \text{ results in outcome $\ell$}]}{\Pr_{\HH\sim\Ey}[(T,C, \HH) \text{ results in outcome $\ell$}]} \geq \frac{1}{1+\alpha}. \end{equation}
Let $I$ be the set of $i\in [L]$ such that $P_i\ne \emptyset$;
for each $i\in [L]$, let $J_i$ be the set of $j\in [M]$ such that $P_{i,j}\ne \emptyset$.
Using (\ref{eq:333}) and an argument similar to that used in \Cref{lem:non-adaptive2},
  this ratio is at least
$$
1-\frac{1}{n}\cdot \sum_{i\in I}\sum_{j\in J_i} O\left(\min\big\{|P_{i,j}|^2,|P_i|\big\}\cdot \log n\right).
$$
As $\sum_{j\in J_i}|P_{i,j}|\le 2|P_i|$ by \Cref{fact:6}, we have
  that $$\sum_{j\in J_i} \min\big\{|P_{i,j}|^2,|P_i|\big\}$$ is maximized when
$|J_i|=2\sqrt{|P_i|}$ and $|P_{i,j}|=\sqrt{|P_i|}$, in which case
$$
\sum_{i\in I}\sum_{j\in J_i} \min\big\{|P_{i,j}|^2,|P_i|\big\}
=\sum_{i\in I} 2|P_i|^{3/2} \le O(q^{3/2})
$$
since $\sum_{i\in I} |P_i|\le 2q$.
(\ref{eq:444}) follows from the choice of $q$ and by setting $C_0$
  to be sufficiently large.
\end{proof}

\begin{claim}\label{lem:final}
We have 
\begin{align*}
 \Prx_{(\TT ,\CC)\sim\calE,\HH\sim\Eyes}\big[\text{outcome of $\ALG$ on $f_{\TT,\CC,\HH}$ is good}\big]\ge 1-\alpha.
\end{align*}
\end{claim}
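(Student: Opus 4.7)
The plan is to bound the failure of Conditions $C_1$ and $C_2$ separately, using a principle of deferred decisions to handle the adaptivity of $\ALG$. We simulate $\ALG$'s interaction with $f_{\TT,\CC,\HH}$ by revealing $(\TT,\CC,\HH)$ only to the extent needed to determine $f$ on each queried point: on each query $x$, we sequentially reveal $\TT_i(x)$ for $i=1,2,\ldots$; if a unique term $\TT_{i'}$ is satisfied, sequentially reveal $\CC_{i',j}(x)$ for $j=1,2,\ldots$; if a unique clause $\CC_{i',j'}$ is falsified, reveal $\hh_{i',j'}(x)$. Under this scheme, at any point during the interaction, each unrevealed $\TT_i$ (resp.\ $\CC_{i,j}$) remains uniform in $[n]^n$ subject only to consistency with the revealed values, and each unrevealed $\hh_{i,j}$ retains its prior $\Eyes$ distribution. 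This is the crucial ingredient that lets us transport the pointwise arguments of \Cref{claim:lastone} into the adaptive setting.

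For Condition $C_1$, the argument closely mirrors \Cref{claim:lastone}. For any pair $x,y \in Q$ that both land in $P_i$, $\TT_i$ is uniform subject to $\TT_i(x) = \TT_i(y) = 1$ (and consistency on $R_i$). The same ratio computation as in \Cref{claim:lastone} shows that the conditional probability of $|\{k: x_k = y_k = 1\}| \le 3n/4 - c\log n$ is at most $(1 - \Theta(c)\log n/n)^n = n^{-\Omega(c)}$, and a union bound over the $\le q^2 L$ candidate triples $(x,y,i)$ shows that with probability $1 - o_n(1)$ all pairs in every $P_i$ share at least $3n/4 - O(\log n)$ common $1$-coordinates; by the same triangle-inequality aggregation as in \Cref{claim:lastone}, this yields the $O(|P_i|\log n)$ arm of $C_1$. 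The complementary $O(|P_{i,j}|^2\log n)$ arm requires an analogous argument applied to the clause $\CC_{i,j}$, whose $n$ coordinates are constrained to lie in the joint $0$-support of all $x\in P_{i,j}$; since the clause is at the second level and conditioned only by pairs in $P_{i,j}$ (rather than all of $P_i$), the pairwise union bound is over the $O(|P_{i,j}|^2)$ pairs within $P_{i,j}$, giving the $|P_{i,j}|^2$ factor. Taking the minimum yields $C_1$ with failure probability $o_n(1)$.

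For Condition $C_2$, we condition on $C_1$. For each $(i,j)$ with $P_{i,j}\ne\emptyset$, $C_1$ gives $|A_{i,j,0}\cup A_{i,j,1}| \ge n - O(\min\{|P_{i,j}|^2, |P_i|\}\log n)$. The still-unrevealed $\hh_{i,j}$ is drawn from $\Eyes$: with probability $1/3$ it is the all-$0$ function (trivially satisfying $C_2$), and with probability $2/3$ it is a random dictator $x_k$ which fails $C_2$ only when $k\notin A_{i,j,0}\cup A_{i,j,1}$, an event of probability $\le O(\min\{|P_{i,j}|^2, |P_i|\}\log n / n)$. The per-index failure probabilities are summed using the optimization $\sum_{j} \min\{|P_{i,j}|^2, |P_i|\} \le O(|P_i|^{3/2})$ (maximized at $|J_i|=\Theta(\sqrt{|P_i|})$ with each $|P_{i,j}|=\Theta(\sqrt{|P_i|})$, subject to $\sum_j|P_{i,j}|\le 2|P_i|$ from \Cref{fact:6}) together with $\sum_i|P_i|^{3/2}\le (2q)^{3/2}$ by convexity, yielding a total failure bound of $O(q^{3/2}\log n/n) = O(1/C_0) \le \alpha/2$ for $C_0$ large. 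Combining with the $o_n(1)$ failure of $C_1$ gives the claim.

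The main obstacle is setting up the deferred-decisions revelation cleanly and verifying that after an arbitrary adaptive history, the conditional distributions of the unrevealed terms, clauses, and dictators remain tractable; in particular, we must check that although the choice of future query points depends on past revealed values, this does not introduce correlations that break the per-index ratio arguments used to obtain the $(3/4)^n$-versus-$((3/4)n - c\log n)/n)^n$ concentration. Once this is in place, the analysis reduces to ratio/union-bound estimates directly paralleling those in \Cref{claim:lastone}, augmented by the $\min\{|P_{i,j}|^2, |P_i|\}$ optimization that exploits the two-level term/clause structure of $f_{\TT,\CC,\HH}$.
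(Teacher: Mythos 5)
Your proposal diverges from the paper in a crucial way: the paper does not bound the failure of $C_1$ and $C_2$ directly for $\ALG$. Instead it introduces a modified algorithm $\ALG'$ that forcibly sets $\rho_{i,j}(x)$ to agree with earlier points in $P_{i,j}$, so that by construction $\ALG'$'s outcome always satisfies $C_2$. It then splits the claim into \Cref{finalclaim1} (outcome of $\ALG'$ satisfies $C_1$ w.h.p.) and \Cref{finalclaim2} (the good-outcome probability mass under $\ALG$ is close to that under $\ALG'$). This detour is not cosmetic: it is precisely how the paper decouples the evolution of the sets $P_{i,j}$ from the $\hh$-randomness.

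The gap in your proposal is in the $C_2$ step. You write that, conditioning on $C_1$, for each $(i,j)$ with $P_{i,j}\ne\emptyset$ ``the still-unrevealed $\hh_{i,j}$ is drawn from $\Eyes$'' and then compute the chance that the dictator coordinate lands outside $A_{i,j,0}\cup A_{i,j,1}$ using the prior $\Eyes$ distribution. But if $P_{i,j}\ne\emptyset$, then $\hh_{i,j}$ is \emph{not} unrevealed: as soon as the first point $x^{(1)}$ enters $P_{i,j}$, the value $\hh_{i,j}(x^{(1)})$ is revealed, and every subsequent query of $\ALG$ depends on that bit. Consequently, the membership of later points $y$ in $P_{i,j}$ --- and hence the final sets $A_{i,j,0},A_{i,j,1}$ against which you evaluate the dictator coordinate --- is correlated with $\hh_{i,j}$. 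Computing $\Pr[\hh_{i,j}\text{ dictator on }k\notin A_{i,j,0}\cup A_{i,j,1}]$ as if $\hh_{i,j}$ were independent of $A_{i,j,0},A_{i,j,1}$ is therefore unjustified in the adaptive setting. You flag this concern at the end (``we must check that\ldots this does not introduce correlations'') but do not discharge it; the check is the hard part, and the paper's answer is the $\ALG'$ device together with the ratio comparison in \Cref{finalclaim2}, which has no counterpart in your sketch. A secondary, smaller issue is the ``union bound over $\le q^2L$ candidate triples'' for $C_1$: in an adaptive algorithm the query set is not fixed, so the union bound must be organized round-by-round (as the paper's regular-execution argument does); this is fixable but as stated it is incomplete.
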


We prove \Cref{lem:final} in \Cref{sec:lem:final}.
\Cref{lem:kaka} follows  from \Cref{lem:lolo} and \Cref{lem:final}, with a proof similar to that of \Cref{non-adaptive-mono-bound}.

\subsection{Proof of \Cref{lem:final}}\label{sec:lem:final}

First we describe the following equivalent view of
   running $\ALG$ on a function $f$ from the support of $\Dyes$:
After each query, $\ALG$ receives the outcome $(Q;P;R;\rho)$ 
  of queries made so far, where $Q$ is the set of queries made along the path 
  so far and 
  $P,R,\rho$ are built similar to before using $Q$.
From the outcome $(Q;P;R;\rho)$, $\ALG$ can uniquely recover 
  (\Cref{fact:5}) $f(x)$ for all queries made so far and for the most recent
  query in particular; it then uses the latter to walk down the tree to make the next query,
  and repeats this until a leaf is reached.  
  
With this view, we describe algorithm
  $\ALG'$ obtained by modifying $\ALG$ as follows:
After querying $x$, 
  the outcome $(Q;P;R;\rho)$ that $\ALG$ receives is updated differently as follows: 
\begin{flushleft}\begin{enumerate}
\item $x$ is added to $Q$, some sets of $P$ and some sets of $R$ in the same way as before.
\item When $x$ is added to a $P_{i,j}$ that is nonempty before $x$ is queried,
  instead of setting $\rho_{i,j}(x)$ to be $h_{i,j}(x)$, $\rho_{i,j}(x)$ is set to
  be the value of $\rho_{i,j}(y)$ of any $y\in P_{i,j}$ that was added to $P_{i,j}$ before $x$. (Note that if $x$ is the first point added to $P_{i,j}$, then $\rho_{i,j}(x)$ is set to be $h_{i,j}(x)$ as before.)
\end{enumerate}\end{flushleft}
Once $\ALG'$ receives the updated outcome, it recovers from it the value of $f$
  at $x$  
  (which could be different from the real value of $f(x)$) and simulates $\ALG$
  for one step to make the next query. 
Note that this modification ensures that at any time, the outcome of queries
  made so far as seen by $\ALG'$ always satisfies Condition $C_2$.
  
We prove the following two claims, from which \Cref{lem:final} follows.
We prove \Cref{finalclaim2} first, and then
  prove \Cref{finalclaim1} in the rest of the section.

\begin{claim}\label{finalclaim1}
The outcome of $\ALG'$ on $\ff\sim \Dyes$ satisfies Condition $C_1$ with probability
  $1-o(1)$.
\end{claim}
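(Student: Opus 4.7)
The plan is to adapt the non-adaptive argument behind Claim \ref{claim:lastone} to the adaptive setting, leveraging the defining feature of $\ALG'$: its queries depend on the randomness in $H$ only through $\rho_{i,j}$ at the \emph{first} point entering each cell $P_{i,j}$, since subsequent points into the same cell have their $\rho_{i,j}$ value copied rather than freshly queried. This decoupling lets me condition on $\ALG'$'s execution history up through any given query and still treat the residual randomness in $\TT_i$ and $\CC_{i,j}$ as effectively fresh when bounding the probability that a newly queried point lands in an existing $P_{i,j}$ with anomalously few shared coordinates relative to earlier members of $P_{i,j}$.

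I would proceed sequentially along the realized query sequence $x_1, \dots, x_q$ rather than union-bounding over all $\le 2^q$ root-to-leaf paths in $\ALG'$'s tree. At each step $t$, conditional on the history, I would bound the probability that $x_t$ lands in some cell $P_{i,j}$ alongside an earlier point $x_s$ (with $s<t$) sharing fewer than $3n/4-\kappa$ common $1$-entries or fewer than $n/4-\kappa$ common $0$-entries, where $\kappa = \Theta(\log n)$ is to be chosen. The core estimate mirrors the non-adaptive computation: for such a pair, the conditional probability that a uniformly random $\TT_i$ (subject to the already-revealed constraint that $\TT_i$ vanishes on $R_i$ and, when relevant, $\TT_i(x^*)=1$ for the representative $x^*$ of $P_i$) has $\TT_i(x_s) = \TT_i(x_t) = 1$ is penalized by a factor $(1-(4/3)\kappa/n)^n = e^{-\Omega(\kappa)}$ relative to the natural baseline $(3/4)^n$; a symmetric argument for the $\CC_{i,j}$-falsification condition gives another $e^{-\Omega(\kappa)}$ penalty for the common-$0$ bound. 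Taking $\kappa$ slightly larger than $(4/3)\log n$ gives a per-pair failure probability $n^{-4/3-\Omega(1)}$, so summing over the $\le q$ earlier candidates at step $t$ and then over $t \le q$ yields total failure probability $O(q^2 \cdot n^{-4/3-\Omega(1)}) = o(1)$ for $q = (n/(C_0\log n))^{2/3}$ with $C_0$ large.

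Given the per-pair deviation bound of $O(\log n)$, the per-cell estimates in Condition $C_1$ follow from two complementary aggregations. Triangle inequality over the $\binom{|P_{i,j}|}{2}$ pairs inside $P_{i,j}$ yields the $O(|P_{i,j}|^2 \log n)$ bound on $\big||A_{i,j,0}| - n/4\big|$ and $\big||A_{i,j,1}| - 3n/4\big|$; anchoring instead at a single representative of $P_i$ and aggregating coordinate deviations across all $|P_{i,j}|$ points via a bound expressed in terms of the total ``free'' coordinates of $P_i$ gives the alternative $O(|P_i| \log n)$ estimate. The minimum of the two yields the claimed $O(\min\{|P_{i,j}|^2, |P_i|\} \log n)$ bound.

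The main obstacle is calibrating $\kappa$ against both the per-step union-bound overhead ($q$ earlier candidates, $L$ possible terms each contributing the $(3/4)^n$ baseline so that $L \cdot (3/4)^n = (4/3 \cdot 3/4)^n = 1$ is harmless, and $M$ clauses handled similarly) and the downstream per-cell triangle inequality. The $(\log n)^{2/3}$-factor shrinkage in $q$ — which is precisely what distinguishes the relative-error adaptive bound from the $\tilde{\Omega}(n^{1/3})$ standard-model bound of \cite{CWX17stoc} — leaves exactly enough room to absorb the $\Theta(\log n)$ per-pair penalty without sacrificing the $O(\log n)$-scale deviation allowed by $C_1$. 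With this calibration, a final sum over all $(i,j)$ with $P_{i,j}\ne\emptyset$ gives the $1-o(1)$ probability claimed.
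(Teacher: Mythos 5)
Your high-level plan---condition on the realized history at each step and bound the probability that the new query creates an anomaly, rather than union-bounding over the exponentially many root-to-leaf paths---matches the paper's strategy, and your observation that $\ALG'$ decouples later queries from the fresh dictator randomness within a cell is accurate. However, the anomaly you propose to control (pairwise overlap between $x_t$ and a single earlier $x_s$ in the same cell, with threshold $\kappa=\Theta(\log n)$) is not the quantity the conditional probability actually depends on, and your estimate for it fails once $|P_i|>1$. Conditioned on the history, $\TT_i$ is uniform over maps $T$ with $T(k)\in A_{i,1}$ for all $k$ (where $A_{i,1}$ is the running intersection of $1$-supports over \emph{all} of $P_i$), subject further to the $R_i$ constraints; the conditional probability that $\TT_i(x_t)=1$ is controlled by $|\Delta_i|=|A_{i,1}\setminus\{k:(x_t)_k=1\}|$, not by the disagreement between $x_t$ and a single $x_s\in P_i$. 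Once several points sit in $P_i$, $A_{i,1}$ may already have shrunk well below $3n/4-\Theta(\log n)$ (the argument only maintains $|A_{i,1}|\ge 3n/4-O(|P_i|\log n)$), at which point a query $x_t$ with bad pairwise overlap against $x_s$ can nonetheless have tiny $\Delta_i$, so the conditional probability of $\TT_i(x_t)=1$ need not be $n^{-\Omega(1)}$. There is also an internal warning sign: if your $\kappa=\Theta(\log n)$ pairwise bound held, anchoring at any one point would give $|A_{i,j,1}|\ge 3n/4-O(|P_{i,j}|\log n)$, which is strictly stronger than the $O(\min\{|P_{i,j}|^2,|P_i|\}\log n)$ deviation in $C_1$ and would push the adaptive lower bound up to $\tilde\Omega(\log N)$---an improvement the paper does not claim, and exactly what adaptivity is supposed to prevent. (Your derivation of the $|P_{i,j}|^2$ factor from a $\Theta(\log n)$ pairwise bound is also inconsistent: that factor arises in the paper because the true pairwise deviation itself already carries a $|P_{i,j}|$ factor.)

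The paper instead tracks $|\Delta_i|$ directly. It defines an execution to be \emph{regular} if each new query shrinks $A_{i,1}$, and each $A_{i,j,0}$, by at most $100\log n$; it shows deterministically that regularity implies $C_1$ (the $|P_i|$-branch via $A_{i,j,1}\supseteq A_{i,1}$ and the cumulative $O(|P_i|\log n)$ shrinkage, the $|P_{i,j}|^2$-branch by anchoring from the regularity bound $|A_{i,j,0}|\ge n/4-O(|P_{i,j}|\log n)$, which gives pairwise deviation $O(|P_{i,j}|\log n)$, not $O(\log n)$); and it shows regularity holds with probability $1-o(1)$ by bounding, conditional on a regular history, the probability that the next query $x_t$ enters a nonempty $P_i$ despite $|\Delta_i|\ge 100\log n$. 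For this last conditional bound, the ratio-of-unconditional-probabilities computation you invoke (borrowed from \Cref{claim:lastone}) is valid in the non-adaptive proof precisely because $U^*$ is regular there and $|P_i^*|\le 1$; in the adaptive setting, with $A_{i,1}$ potentially shrunk and the $R_i$ constraints now nontrivial relative to the $A_{i,1}$ restriction, the paper replaces it with a bipartite-graph double-counting between terms $T$ consistent with $(P;R)$ and those additionally consistent with $T(x_t)=1$. Your proposal is missing both the one-step-shrinkage reformulation and this counting argument, and I do not see how to make the direct pairwise route work without effectively recreating them.
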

\begin{claim}\label{finalclaim2}
For any fixed outcome $\ell$ of $\ALG'$ on $\ff\sim \Dyes$ that satisfies Condition
  $C_1$, we have
$$
\Prx_{\ff\sim \Dyes}\big[\text{outcome of $\ALG$ on $\ff$ is $\ell$}\big] 
\ge (1-\alpha/2)\cdot \Prx_{\ff\sim \Dyes} \big[\text{outcome of $\ALG'$ on
  $\ff$ is $\ell$}\big].
$$
\end{claim}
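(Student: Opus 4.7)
The plan is to reduce the claim to a straightforward independent-factor calculation by first identifying the precise event on which $\ALG$ and $\ALG'$ diverge. The key structural observation is that on any input $\ff$, the $(Q; P; R)$-part of the outcome produced by $\ALG$ and by $\ALG'$ is identical at every step along the path---it is determined only by $(\TT, \CC)$ and the decision-tree traversal---and the two algorithms differ solely in how they fill in the $\rho_{i,j}$-values. Whenever the true $\hh_{i,j}$-values along $\ALG$'s path happen to agree on each $P_{i,j}$ (that is, $\ALG$'s outcome satisfies $C_2$), the modification in $\ALG'$ is vacuous and the two algorithms produce the same outcome. Since any $\ell$ in the support of $\ALG'$ automatically satisfies $C_2$, we obtain the event-equality
\[
\{\ALG\text{'s outcome}=\ell\} \,=\, \{\ALG'\text{'s outcome}=\ell\} \,\cap\, \bigl\{\hh_{i,j}(x)=\rho_{i,j}\text{ for every }(i,j)\in I\times J_i,\ x\in P_{i,j}\bigr\}.
\]

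Let $E$ denote the event that $\ALG'$'s outcome is $\ell$, so the claim reduces to lower-bounding the conditional probability of the consistency event ``$\hh_{i,j}(x)=\rho_{i,j}$ for all $(i,j)\in I\times J_i$ and $x\in P_{i,j}$'' given $E$. Under the prior, $\HH$ is independent of $(\TT, \CC)$ and the $\hh_{i,j}$'s are mutually independent. The event $E$ imposes a restriction on $(\TT, \CC)$ (that the $Q$-queries yield the $(P;R)$-part of $\ell$) together with, for each $(i,j)$ with $P_{i,j}\ne\emptyset$, the single constraint $\hh_{i,j}(x^*_{i,j})=\rho_{i,j}$, where $x^*_{i,j}\in P_{i,j}$ is the first point of $P_{i,j}$ queried along $\ALG'$'s path (uniquely determined by $\ell$ and $\ALG'$'s deterministic structure). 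Conditional on $E$ the $\hh_{i,j}$'s therefore remain independent across $(i,j)$, each drawn from its prior conditioned on $\hh_{i,j}(x^*_{i,j})=\rho_{i,j}$, and the consistency probability factors as
\[
\prod_{i\in I}\prod_{j\in J_i}\Pr_{\hh_{i,j}}\!\left[\hh_{i,j}(x)=\rho_{i,j}\ \forall x\in P_{i,j}\,\Big|\,\hh_{i,j}(x^*_{i,j})=\rho_{i,j}\right].
\]

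Each factor is then bounded using Condition $C_1$, mirroring the ratio computation in the proof of \Cref{lem:lolo}. In the $\rho_{i,j}=1$ case, conditioning forces $\hh_{i,j}$ to be a dictator with index uniform over $\{k:(x^*_{i,j})_k=1\}$, so the factor equals $|A_{i,j,1}|/\|x^*_{i,j}\|_1$, which by $C_1$ is at least $1-O(\min\{|P_{i,j}|^2,|P_i|\}\log n/n)$ since $\|x^*_{i,j}\|_1\in\{3n/4,3n/4+1\}$. In the $\rho_{i,j}=0$ case the conditional posterior mixes the all-zero function (mass proportional to $1/3$) with dictators falsified by $x^*_{i,j}$ (mass proportional to $(2/3)|A^*_{i,j,0}|/n$), and the resulting ratio involving $|A_{i,j,0}|$ again gives $1-O(\min\{|P_{i,j}|^2,|P_i|\}\log n/n)$ by $C_1$.

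Finally, taking the product and reusing the combinatorial bound from \Cref{lem:lolo}---by \Cref{fact:6} one has $\sum_{j\in J_i}|P_{i,j}|\le 2|P_i|$, so $\sum_{j\in J_i}\min\{|P_{i,j}|^2,|P_i|\}\le 2|P_i|^{3/2}$ and hence $\sum_{i,j}\min\{|P_{i,j}|^2,|P_i|\}\le O(q^{3/2})$---combined with $q=(n/(C_0\log n))^{2/3}$ yields a total lower bound of $1-O(1/C_0)\ge 1-\alpha/2$ for $C_0$ sufficiently large. The main obstacle is the structural event-equality in the first paragraph; establishing it requires a short induction on the number of queries to argue that $\ALG$ and $\ALG'$ follow identical paths whenever $\ALG$'s outcome satisfies $C_2$. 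Once this is in place, the rest reduces to the same per-$(i,j)$ ratio calculation already used for \Cref{lem:lolo}, and no new probabilistic machinery is needed.
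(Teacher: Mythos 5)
Your proof is correct and follows essentially the same route as the paper: fix $(T,C)$ that induces the correct $(P;R)$-part, factor the ratio of probabilities over $\HH$ into per-$(i,j)$ factors, and bound each factor using Condition $C_1$; the event-equality you spell out is precisely what the paper leaves implicit when it writes the ratio as ``a product of factors.'' One small caution on your opening phrasing: it is not literally true that ``on any input $\ff$, the $(Q;P;R)$-part of the outcome produced by $\ALG$ and by $\ALG'$ is identical at every step along the path,'' since when the $\rho$-values they observe diverge the two algorithms can branch to different queries and hence produce different $Q$-sets; but your subsequent induction (restricting to the event that the $h_{i,j}$-values are consistent, i.e., that $C_2$ holds) correctly handles exactly this, so the imprecision is harmless.
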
  
\begin{proof}
Fix an outcome $\ell=(Q;P;R;\rho)$ of $\ALG'$ that satisfies Condition $C_1$.
For each $P_{i,j}$ that is not empty, we let $x^{(i,j)}$ denote the 
  first point that is added to $P_{i,j}$.
Let $\rho_{i,j}=\rho_{i,j}(x^{(i,j)})$, which by the description of $\ALG'$ is the 
  same as $h_{i,j}(x^{(i,j)})$.
  
To prove the claim, we show that for any fixed $(T,C)$ such that 
  the probability that the outcome of $\ALG'$ on $\ff=f_{T,C,\HH}$ with $\HH\sim \Eyes$ is positive,
  we have
$$
\frac{\Pr_{\HH\sim \Dyes}[\text{outcome of $\ALG$ on $f_{T,C,\HH}$ is $\ell$}]}
{ \Pr_{\HH\sim \Dyes}[\text{outcome of $\ALG'$ on $f_{T,C,\HH}$ is $\ell$}]}\ge 1-\frac{\alpha}{2}.
$$  
Similar to arguments used before, this ratio can be written as a product of factors,
  one for each $i,j$ with $P_{i,j}\ne \emptyset$.
For any $i,j$ with $P_{i,j}\ne\emptyset$ and $\rho_{i,j}=0$, the factor is
$$
\frac{\frac{1}{3}+\frac{2}{3}\cdot \frac{|A_{i,j,0}|}{n}}
{\frac{1}{3}+\frac{2}{3}\cdot \frac{n-|x^{(i,j)}|}{n}}\ge 1-\frac{1}{n}\cdot O\big(\min\{|P_{i,j}|^2,|P_i|\}\log n\big).
$$
using the assumption that $\ell$ satisfies Condition $C_1$, where $A_{i,j,0}$ as before is the set of $k\in [n]$ with $x_k=0$ for all $x\in P_{i,j}$.
Similarly, when $\rho_{i,j}=1$ the factor is 
$$
\frac{\frac{2}{3}\cdot \frac{|A_{i,j,1}|}{n}}{\frac{2}{3}\cdot \frac{|x^{(i,j)}|}{n}}
\ge  1-\frac{1}{n}\cdot O\big(\min\{|P_{i,j}|^2,|P_i|\}\log n\big),
$$
again using Condition $C_1$. The rest of the proof follows from calculations similar to those at the end of the 
  proof of \Cref{lem:lolo}.
\end{proof}

To prove \Cref{finalclaim1},
  we start with a sufficient condition for the outcome of $\ALG'$ to satisfy Condition $C_1$.
We say the execution of $\ALG'$ on $f$ from the support of $\Dyes$ is \emph{regular}
  if the following condition holds in every round:
Let $(Q;P;R;\rho)$ be the outcome of queries revealed to $\ALG'$ so far
  (recall the modification of $\rho$ since we are interested in $\ALG'$ instead of $\ALG$), 
  let $x$ be the next query, and let $(Q';P';R';\rho')$ be the updated outcome after $x$ is queried.
Then they satisfy
\begin{itemize}
\item For every $i\in [L]$ with $P_i\ne \emptyset$,
$|A_{i,1}\setminus A_{i,1}'|\le 100\log n$, where
$$
A_{i,1}:=\big\{k\in [n]: y_k=1\ \text{for all $y\in P_i$}\big\}\quad\text{and}\quad
A_{i,1'}:=\big\{k\in [n]: y_k=1\ \text{for all $y\in P_i'$}\big\}.
$$
\item For every $i\in [L]$, $j\in [M]$ with $P_{i,j}\ne \emptyset$, 
  $|A_{i,j,0}\setminus A_{i,j,0}'|\le 100\log n$, where
$$
A_{i,j,0}:=\big\{k\in [n]: y_k=0\ \text{for all $y\in P_{i,j}$}\big\}\quad\text{and}\quad
A_{i,j,0}':=\big\{k\in [n]: y_k=0\ \text{for all $y\in P_{i,j}'$}\big\}
$$
\end{itemize}

\begin{claim}
If execution of $\ALG'$ on $f$ is regular, then
  the outcome of $\ALG'$ satisfies Condition $C_1$.
\end{claim}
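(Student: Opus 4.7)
The plan is to verify both halves of Condition $C_1$ separately, where I interpret $A_{i,j,0}$ as the common $0$-coordinates of $P_{i,j}$ and $A_{i,j,1}$ as the common $1$-coordinates (consistent with the quantitative bounds $\big||A_{i,j,0}| - n/4\big|$ and $\big||A_{i,j,1}| - 3n/4\big|$, and with the definition of the regularity bullets). I would telescope the two regularity bullets along the sequence of additions to $P_{i,j}$ and $P_i$ respectively, and then use a simple pairwise identity to upgrade the ``outer'' $|P_i|\log n$ bound on $|A_{i,j,1}|$ into the ``inner'' $|P_{i,j}|^2\log n$ bound.

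For $|A_{i,j,0}|$ the argument is immediate. When $P_{i,j}$ first becomes nonempty through the insertion of some $x^{(i,j)}\in P_{i,j}$ (necessarily with $|x^{(i,j)}| \in \{3n/4,\, 3n/4 + 1\}$), we have $|A_{i,j,0}| = n - |x^{(i,j)}| \in \{n/4 - 1, n/4\}$. Each subsequent insertion into $P_{i,j}$ shrinks $|A_{i,j,0}|$ by at most $100\log n$ by the second regularity bullet, so telescoping yields $|A_{i,j,0}| \ge n/4 - O(|P_{i,j}|\log n)$, while the trivial $|A_{i,j,0}| \le n - |x^{(i,j)}| \le n/4$ gives the matching other side. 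Hence $\big||A_{i,j,0}| - n/4\big| = O(|P_{i,j}|\log n)$, which is at most $O(\min\{|P_{i,j}|^2, |P_i|\}\log n)$ since $|P_{i,j}| \le \min\{|P_{i,j}|^2, |P_i|\}$ whenever $P_{i,j}$ is nonempty.

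For $|A_{i,j,1}|$, I would combine two bounds and take the minimum. The easy $|P_i|\log n$ half comes from telescoping the first regularity bullet on $P_i$: if $y^{(i)}$ is the first point added to $P_i$, then $|A_{i,1}| \ge |y^{(i)}| - 100(|P_i|-1)\log n \ge 3n/4 - O(|P_i|\log n)$; since $P_{i,j}\subseteq P_i$ forces $A_{i,1} \subseteq A_{i,j,1}$, this yields $|A_{i,j,1}| \ge 3n/4 - O(|P_i|\log n)$, paired with the trivial $|A_{i,j,1}| \le |x^{(i,j)}| \le 3n/4 + 1$. The crucial $|P_{i,j}|^2\log n$ half uses a pairwise identity: for each $y \in P_{i,j}$,
\[
\big|\{k : x^{(i,j)}_k = 1,\, y_k = 0\}\big| \;=\; (n - |y|) - \big|\{k : x^{(i,j)}_k = 0,\, y_k = 0\}\big| \;\le\; n/4 - |A_{i,j,0}|,
\]
where the inequality uses $A_{i,j,0} \subseteq \{k : x^{(i,j)}_k = 0\} \cap \{k : y_k = 0\}$ and $n-|y|\le n/4$. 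Substituting the bound $|A_{i,j,0}| \ge n/4 - O(|P_{i,j}|\log n)$ from the previous paragraph gives an $O(|P_{i,j}|\log n)$ bound on each such pair, and then a union bound yields
\[
|x^{(i,j)}| - |A_{i,j,1}| \;=\; \big|\{k : x^{(i,j)}_k = 1,\, \exists\, y \in P_{i,j},\, y_k = 0\}\big| \;\le\; \sum_{y \in P_{i,j}} O(|P_{i,j}|\log n) \;=\; O(|P_{i,j}|^2\log n),
\]
so $\big||A_{i,j,1}| - 3n/4\big| = O(|P_{i,j}|^2\log n)$, and taking the minimum of the two bounds completes $C_1$. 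The main obstacle is this second half: regularity as stated only directly controls common $1$-coordinates at the coarser level of $P_i$, and the trick that unlocks the $|P_{i,j}|^2$ factor is to bootstrap the already-established $|A_{i,j,0}|$ bound through the pairwise identity, thereby converting $P_{i,j}$-level $0$-regularity into $P_{i,j}$-level control over $1$-differences from the anchor point $x^{(i,j)}$.
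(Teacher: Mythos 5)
Your proposal is correct and matches the paper's proof essentially step for step: telescope the two regularity bullets to bound $|A_{i,j,0}|$ (directly) and $|A_{i,1}|$ (hence $|A_{i,j,1}|$ via $A_{i,1}\subseteq A_{i,j,1}$), then feed the $|A_{i,j,0}|$ bound through the pairwise zero-overlap identity and sum over $y\in P_{i,j}$ to get the $O(|P_{i,j}|^2\log n)$ control, taking the minimum of the two bounds. You also correctly diagnosed the index swap in the paper's statement of Condition $C_1$ ($A_{i,j,0}$ should be common $0$-coordinates and $A_{i,j,1}$ common $1$-coordinates, as the paper's own proof and regularity definition use).
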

\begin{proof}
Let $(Q;P;R;\rho)$ be the final outcome of $\ALG'$ on $f$. For each $i,j$ with $P_{i,j}\ne \emptyset$, let
$$
A_{i,1}:= \big\{k\in [n]: y_k=1\ \text{for all $y\in P_i$}\big\}\quad\text{and}\quad A_{i,j,0}:=\big\{k\in [n]: y_k=0\ \text{for all $y\in P_{i,j}$}\big\},
$$
with $A_{i,j,1}$ being defined similarly (using $y_k=1$ instead of $0$).

Given that the execution of $\ALG'$ on $f$ is regular, we have from the definition above that
$$
3n/4+1\ge |A_{i,1}| \ge (3n/4) -100 |P_i|\log n \quad\text{and}\quad
n/4\ge |A_{i,j,0}| \ge (n/4)-100 |P_{i,j}|\log n.
$$
So $A_{i,j,0}$ already satisfies \eqref{eq:333} given that $P_{i,j}\subseteq P_i$.
We focus on $A_{i,j,1}$ below.
For any $x, y\in P_{i,j}$, 
$$
\{k\in [n] : x_k = y_k = 0\}
\ge |A_{i,j,0}|\ge (n/4)-100 |P_{i,j}|\log n.
$$
Given that $|x|,|y|\in \{3n/4, 3n/4+1\}$, we have 
$$
\big|\{k\in [n] : x_k = 1, y_k = 0\}\big|\le 100|P_{i,j}|\log n.$$
As a result, we fix any $x\in P_i$ and have 
$$
|A_{i,j,1}|\ge |x|-\sum_{y\in P_{i,j}}\big|\{k\in [n]: x_k = 1, y_k = 0\}\big|
 \ge 3n/4- O\big(|P_{i,j}|^2 \log n\big).
$$
Combining this with $3n/4+1\ge |A_{i,j,1}|\ge |A_{i,1}|\ge (3n/4)-100 |P_i|\log n$
  finishes the proof.
\end{proof}

\begin{claim}
The execution of
 $\ALG'$ on $\ff\sim\Dyes$ is regular with probability at least $1-o(1)$. 
\end{claim}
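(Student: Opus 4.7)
To establish the claim, I would union bound over the at most $q$ steps of $\ALG'$ and, within each step, over the indices $i\in [L]$ or $(i,j)\in [L]\times [M]$ at which the new query $x^{(t)}$ could potentially be added to $P_i$ or $P_{i,j}$ and violate the term or clause part of regularity. For a fixed step $t$ and a fixed $i$ with $P_i^{(t)}\ne \emptyset$, the plan is to condition on the entire history $\mathcal{F}_{t-1}$. Under this conditioning, $T_i$ is distributed uniformly over maps $[n]\to [n]$ whose image lies in $A_{i,1}^{(t)}$ and is not entirely contained in $\text{supp}(z)$ for any $z\in R_i^{(t)}$, while the algorithm's next query $x^{(t)}$ and the candidate violation condition $|\{k\in A_{i,1}^{(t)}:x^{(t)}_k=0\}|>100\log n$ become deterministic functions of $\mathcal{F}_{t-1}$.

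The core per-$(t,i)$ estimate is the following. If the violation condition holds, then $T_i(x^{(t)})=1$ (which is necessary for $x^{(t)}$ to be added to $P_i$) forces $T_i$'s image into the strictly smaller set $A_{i,1}^{(t)}\cap \text{supp}(x^{(t)})$. By induction we may assume regularity has held through step $t-1$, so $|A_{i,1}^{(t)}|\ge 3n/4 - 100q\log n = 3n/4 - o(n)$. Thus the marginal probability that $T_i$'s image lies entirely in $A_{i,1}^{(t)}\cap \text{supp}(x^{(t)})$ is at most $(1-100\log n / |A_{i,1}^{(t)}|)^n\le n^{-100}$ (using $n/|A_{i,1}^{(t)}|\ge 4/3-o(1)$). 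Dividing by the denominator of the conditional probability---namely the probability that $T_i$ is consistent with both $P_i^{(t)}$ and $R_i^{(t)}$---and verifying that this denominator loses only a constant factor, I obtain $\Pr[T_i(x^{(t)})=1\mid \mathcal{F}_{t-1}]\le n^{-C'}$ for some large constant $C'$ whenever the bad condition holds. An analogous argument applies to clauses, with $C_{i,j}$ in place of $T_i$, $A_{i,j,0}^{(t)}$ in place of $A_{i,1}^{(t)}$, and $1$-coordinates of the query in place of $0$-coordinates.

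Taking expectation over $\mathcal{F}_{t-1}$, the probability of a violation at step $t$ for a specific $i$ is at most $n^{-C'}\cdot\Pr[P_i^{(t)}\ne \emptyset]$. Since $\sum_{i\in [L]}|P_i^{(t)}|\le 2q$ holds deterministically by \Cref{fact:6}, we have $\sum_{i\in [L]}\Pr[P_i^{(t)}\ne \emptyset]\le 2q$, so step $t$ contributes at most $2q\cdot n^{-C'}$ to the failure probability. Summing over $t\le q$ gives $O(q^2 n^{-C'})=o(1)$ for large $C'$. The clause contribution is bounded analogously using $\sum_{(i,j)}|P_{i,j}|\le 2\sum_i|P_i|\le 4q$, and combining the two bounds yields the claim.

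I expect the main technical obstacle to be justifying the constant-factor lower bound on the denominator, namely showing $\sum_{z\in R_i^{(t)}}(|A_{i,1}^{(t)}\cap \text{supp}(z)|/|A_{i,1}^{(t)}|)^n\le 1/2$ (and its clause analog). Membership $z\in R_i^{(t)}$ deterministically guarantees only $A_{i,1}^{(s)}\not\subseteq \text{supp}(z)$ at the step $s\le t$ when $z$ was queried, i.e.~a gap of size $1$, whereas the bound above really requires a gap of size $\Omega(\log n)$ for each $z$. I expect to handle this by enlarging the ``regular'' event with one further goodness condition that adversarially controls these overlaps, and to show this extra condition also holds with probability $1-o(1)$ by another round of the same conditioning-plus-union-bound machinery.
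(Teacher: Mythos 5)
Your high-level structure (union bound over steps and over nonempty $P_i$'s, condition on the history, reduce to a per-$(t,i)$ estimate on the conditioned $\bT_i$) matches the paper. You have also correctly put your finger on the real obstacle: turning the unconditional bound $\Pr[\bT_i(x^{(t)})=1\mid P_i]\le n^{-100}$ into a conditional one requires controlling $\Pr[\bT_i \text{ consistent with }R_i\mid P_i]$.

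However, the fix you propose does not work, and the obstacle is more severe than a ``constant-factor loss.'' The algorithm is adaptive and can deliberately query strings $z$ obtained from the unique point $x^*\in P_i$ by flipping one $1$-coordinate and one $0$-coordinate. Each such $z$ has $|A_{i,1}\setminus\mathrm{supp}(z)|=1$, and the probability (given $P_i$-consistency) that such a $z$ lands in $R_i$ is $1-(1-1/|A_{i,1}|)^n\approx 1-e^{-4/3}\approx 0.74$. So no goodness condition can keep these gap-$1$ strings out of $R_i$: after the adversary spends its query budget in this way, $R_i$ typically contains $\Theta(q)$ of them, and the denominator $\Pr[\bT_i\text{ consistent with }R_i\mid P_i]$ degrades to roughly $\exp(-\Theta(q))$. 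With $q=(n/\log n)^{2/3}$ this is far smaller than your numerator bound $n^{-100}$, so the quotient $\Pr[V]/\Pr[U]$ blows up to $\gg 1$ and the argument gives nothing. The underlying issue is that $V\subseteq U$, so the true ratio $|V|/|U|$ is at most $1$, but the conditioning on $R_i$ shrinks $V$ much more than $U$ (scattering $\bT_i$'s image to cover $A_{i,1}$ makes it far less likely to miss $\Delta_i$); computing numerator and denominator separately discards exactly this favorable correlation.

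The paper avoids separating numerator and denominator altogether. It compares $|V|$ directly to a subset $U'\subseteq U$ consisting of consistent $T$ with exactly $\ell=\log n$ coordinates mapped into $\Delta_i$, via a bipartite graph where $T'\in U'$ is adjacent to $T\in V$ iff they agree off the $\ell$ coordinates of $T'$ lying in $\Delta_i$. The key structural point is that for any $T\in V$, there are at most $q$ ``killing'' coordinates witnessing $R_i$-consistency; re-mapping any $\ell$ of the other $n-q$ coordinates into $\Delta_i$ preserves $R_i$-consistency and lands in $U'$. Double counting then yields $|U'|/|V|\ge\binom{n-q}{\ell}(|\Delta_i|/|A_{i,1}\setminus\Delta_i|)^\ell\ge((n-q)\cdot 100/(3n/4+1))^\ell=n^{\Theta(1)}\gg q^2$, with the $R_i$ conditioning handled implicitly and symmetrically on both sides. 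This is the step your proof is missing.
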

\begin{proof}
Fix any outcome $(Q;P;R;\rho)$ that is revealed to $\ALG'$ after $k$ queries, $k\le q$,
  such that the execution so far is regular, and let $x$ be the next query. We show below that,
  conditioning on $\ff\sim \Dyes$ reaching $(Q;P;R;\rho)$, the probability that
  the execution of $\ALG'$ is no longer regular after querying $x$ is $o(1/q)$.
The lemma then follows by applying a union bound on $k$.

To this end, notice that by definition, the execution of $\ALG'$ remains 
  regular after querying $x$ unless either (1) there exists an
  $i\in [L]$ such that $P_i\ne \emptyset$, $x$ satisfies $\TT_i$ (and thus,
  could be added to $P_i$ potentially) but 
$$
\Delta_i:=\big\{k\in [n]: k\in A_{i,1}\ \text{but}\ x_k=0\big\}
$$
has size at least $100\log n$; or (2) there exist $i,j\in [L]$ such 
  that $P_{i,j}\ne \emptyset$, $x$ violates $\CC_{i,j}$ but 
$$
\Delta_{i,j}:=\big\{k\in [n]: k\in A_{i,j,0 }\ \text{but}\ x_k=1\big\}
$$
has size at least $100\log n$.
In the rest of the proof, we show that (1) happens with probability at most $o(1/q^2)$.
The same  upper bound for (2) follows from a symmetric argument. The lemma then follows by
  a union bound over the $O(q)$ many nonempty $P_i$ and $P_{i,j}$'s.
  
For (1) and any fixed $i$ with $P_i\ne \emptyset$,
  assuming that $|\Delta_i|\ge 100\log n$, 
  the conditional probability we are interested in can be written as $|U|/|V|$,
  where $V$ is the set of $T:[n]\rightarrow [n]$ such that 
$$
T(k)\in A_{i,1}\ \text{for all $k\in [n]$ and each $z\in R_i$ has $z_{T(k)} = 0$ for some 
$k\in [n]$ }
$$ 
and $V$ is the set of $T:[n]\rightarrow [n]$ such that
$$
T(k)\in A_{i,1}\setminus \Delta_i\ \text{for all $k\in [n]$ and each $z\in R_i$ has $z_{T(k)} = 0$ for some 
$k\in [n]$ }.
$$

Let $\ell= \log n$. First we write $U'$ to denote the following subset of
  $U$: $T\in U'$ if $T\in U$ and the number of $k \in [n]$ with
  $T(k)\in \Delta_i$ is $\ell$.
It suffices to show that $|V|/|U'|=o(1/q^2)$ given that $U'$ is a subset of $U$.
Next we define the following bipartite graph $G$ between
  $U'$ and $V$: $T'\in U'$ and $V\in V$ have an edge if and only if
  $T'(k)=T(k)$ for all $k\in [n]$ with $T'(k)\notin \Delta_i$.
From the construction, it is clear 
  each $T'\in U'$ has degree $|A_{i,1}\setminus \Delta_i|^\ell$.

To lowerbound the degree of a $T\in V$, note that one only needs at most $q$ 
  many variables of $T$
  to kill all strings in $R_i$. 
As a result, the degree of each $T\in V$ is at least 
$$
{n-q\choose \ell}
|\Delta_i|^\ell
$$
By counting the number of edges in $G$ in two different ways and using 
  $|A_{i,1}|\le n$ trivially,
we have
$$
\frac{|U'|}{|V|}\ge {n-q\choose \ell} \cdot \left(\frac{|\Delta_i|}{|A_{i,1}\setminus \Delta_i|}\right)^\ell \ge \left(\frac{n-q}{\ell}\cdot \frac{100\ell}{3/4n+1}\right)^\ell\gg q^2.
$$
 This finishes the proof of the lemma.
\end{proof}

\begin{flushleft}
\bibliographystyle{alpha}
\bibliography{allrefs}
\end{flushleft}

\appendix


\section{Separating standard  and relative-error testing}
\label{ap:standard-yes-relative-no}

We give a simple example of a property which provides a strong separation between the standard property testing model and the relative-error testing model.  The property is as follows:  for $n$ a multiple of three, let ${\cal C}$ be the class of all functions $g: \zo^n \to \zo$ for which $|g^{-1}(1)|$ is an integer multiple of $2^{2n/3}$. 

For any $\eps \geq 2^{-n/3}$, it is clear that every $f: \zo^n \to \zo$ has $\ham(f,g) \leq \eps$ for some function in ${\cal C}$, so for such $\eps$ there is a trivial 0-query standard-model $\eps$-testing algorithm (which always outputs ``accept''). On the other hand, fix any $\eps < 1$ and consider the following two distributions over functions from $\zo^n$ to $\zo$:

\begin{itemize}

\item ${\cal D}_\yes$:  A draw of $\boldf \sim {\cal D}_{\yes}$ is obtained by choosing a uniformly random function from the set of all functions that output 1 on exactly $2^{2n/3}$ inputs in $\zo^n$.

\item ${\cal D}_\no$:  A draw of $\boldf \sim {\cal D}_{\no}$ is obtained by choosing a uniformly random function from the set of all functions that output 1 on exactly ${\frac 1 2} 2^{2n/3}$ inputs in $\zo^n$.

\end{itemize}
It is clear that every function in the support of ${\cal D}_{\yes}$ belongs to ${\cal C}$, whereas every function in the support of ${\cal D}_{\no}$ has $\reldist(f,{\cal C}) =1 > \eps$.  We claim that any relative-error $\eps$-testing algorithm for ${\cal C}$ must make $\Omega(2^{n/3})$ black-box queries or samples.  To see this, first observe that any algorithm making $o(2^{n/3})$ black-box queries has only an $o(1)$ chance of every querying an input $x$  such that $\boldf(x)=1$, whether $\boldf$ is drawn from ${\cal D}_{\yes}$ or from ${\cal D}_{\no}$, and thus the two distributions are indistinguishable from $o(2^{n/3})$ queries.  But it is also the case that any algorithm making $m=o(2^{n/3})$ calls to $\SAMP(\boldf)$ will with probability $1-o(1)$ receive an identically distributed sample in both cases ($\boldf \sim {\cal D}_{\yes}$ versus $\boldf \sim {\cal D}_{\no}$), since in both cases the sample will consist of $m$ independent uniform distinct elements of $\zo^n$.

\section{Both oracles are needed for relative-error monotonicity testing}
\label{ap:mono-needs-both}

\noindent {\bf A sampling oracle alone doesn't suffice:}  Let $A$ be any algorithm which uses only the sampling oracle to do relative-error monotonicity testing and makes $q$ calls to the sampling oracle, where $q$ is at most (roughly) $2^{n/2}$.  Consider the execution of $A$ on a function $f: \zo^n \to \zo$ which is either (1) the constant-1 function, or (2) a uniform random Boolean function.   The distribution of the sequence of samples received by $A$ will be almost identical in these two cases, since in both cases with high probability this will be a uniform random sequence of $q$ distinct elements of $\zo^n$.  Since in case (1) the function $f$ is monotone and in case (2) with overwhelmingly high probability $f$ has relative-distance at least 0.49 from every monotone function, it follows that such an algorithm $A$ cannot be a successful relative-error monotonicity tester.

\medskip

\noindent {\bf A black-box oracle alone doesn't suffice:}  Let $A$ be any algorithm which uses only a black-box oracle to do relative-error monotonicity testing and makes $q$ calls to the black-box oracle, where $q$ is at most $2^{n/2}$.  Consider the execution of $A$ on a function $f: \zo^n \to \zo$ which is either 
\begin{enumerate}
\item $f(x) = \Indicator[x_1 + \cdots + x_n \geq 9n/10]$, or
\item Let $\bS$ be a uniform random subset of $2^{0.47n}$ of the ${n \choose n/2}=\Theta(2^n/\sqrt{n})$ many strings in $\zo^n$ that have exactly $n/2$ ones.  The function $f$ outputs 1 on input $x$ if either $x_1 + \cdots + x_n \geq 9n/10$ or if $x \in \bS$.
\end{enumerate}

In case (2), a simple argument shows that any $2^{n/2}$-query algorithm such as $A$ will only have an $o(1)$ probability of querying an input (an element of $\bS$) on which case (1) and case (2) disagree, so with $1-o(1)$ probability the sequence of responses to queries that $A$ receives will be identical in these two cases.  Since in case (1) the function $f$ is monotone and in case (2) $f$ has relative-distance at least 0.49 from every monotone function, such an algorithm $A$ cannot be a successful relative-error monotonicity tester.

\end{document}